\setlist[itemize] {label=$\pmb\triangleright$}
\definecolor{darkgreen}{rgb}{0,0.55,0}
\title{A stochastic partial differential equation model for limit order book dynamics}
\author{Rama Cont}
\author{Marvin S. M\"uller}
\subjclass[2010]{35R60, 60H15, 91B26, 91G80}
\date{19th March 2021}                                                                                             
\address[R. Cont]{Mathematical Institute, University of Oxford.}
\address[M. S. M\"uller]{Department of Mathematics, ETH Z\"urich\\\textit{current address:} 2Xideas AG, Seestrasse 39, CH-8700 K\"usnacht}
\thanks{M. S. M. is grateful for generous support from the Swiss National Science Foundation through  SNF grant $205121\_163425$ and from the ETH Foundation at ETH Zurich. The authors thank Martin Keller-Ressel for comments and discussions.}
\numberwithin{equation}{section}
\newcommand\1{\mathbf{1}}
\newcommand\R{\mathbb{R}}
\newcommand\N{\mathbb{N}}
\providecommand{\abs}[1]{\left\lvert#1\right\rvert}
\providecommand{\norm}[2]{\left\lVert#1\right\rVert_{#2}}
\newcommand{\scalp}[3]{\left\langle #1, #2\right\rangle_{#3}}
\newcommand{\iprod}[3]{\left\langle #1, #2\right\rangle_{#3}}
\newcommand{\ip}[3]{\left\langle #1, #2\right\rangle_{#3}}
\newcommand{\PP}{\mathbb{P}}
\newcommand{\EE}{\mathbb{E}}
\newcommand{\sD}{\mathbb{D}}
\renewcommand\d{\,\operatorname{d}\hspace{-0.05cm}}
\renewcommand{\P}[1]{\mathbb{P}\left[#1\right]}                     
\newcommand{\E}[1]{\mathbb{E}\left[#1\right]}                     
\newcommand\Id{\operatorname{Id}}                               
\newcommand\dom{{\rm dom}}                                      
\newcommand\ddt{\tfrac{\partial}{\partial t}}
\newcommand\ddx{\tfrac{\partial}{\partial x}}
\newcommand\ddxx{\tfrac{\partial^2}{\partial x^2}}
\theoremstyle{plain}
\newtheorem{thm}{Theorem}[section]
\newtheorem{prop}[thm]{Proposition}
\newtheorem{cor}[thm]{Corollary}
\newtheorem{lem}[thm]{Lemma}
\theoremstyle{remark}
\newtheorem{rmk}[thm]{Remark}
\newtheorem{model}[thm]{Model}
\theoremstyle{definition}
\newtheorem{df}[thm]{Definition}
\newtheorem{ex}[thm]{Example}
\newtheorem{ass}[thm]{Assumption}
\theoremstyle{example}
\begin{document}
\vspace*{-.1\baselineskip}
\maketitle
\vspace*{-2.5\baselineskip}
\begin{abstract}
We propose an analytically tractable class of models for the dynamics of a limit order book, described through a stochastic partial differential equation (SPDE) with multiplicative noise for the order book centered at the mid-price, along with  stochastic dynamics for the mid-price which is consistent with the order flow dynamics. We provide conditions under which the model admits a finite dimensional realization driven by a (low-dimensional) Markov process, leading to efficient estimation and computation methods.
  We study two examples of parsimonious  models in this class: a two-factor  model and a model with mean-reverting order book depth. For each model we analyze in detail the role of different parameters, the dynamics of the price, order book depth, volume and order imbalance, provide an intuitive financial interpretation of the variables involved and show how the model reproduces statistical properties of price changes, market depth and order flow in limit order markets.
\end{abstract}
$\,$
\vspace*{-.7cm}
\setcounter{tocdepth}{3}

\tableofcontents

\newpage
Financial instruments such as stocks and futures are increasingly traded in electronic, order-driven markets, in which
 orders to buy and sell are centralized in a {\it  limit order book} and
market orders are executed against the best available offers in the limit order book.
The dynamics of prices in such markets are not only interesting from the viewpoint of market participants --for trading and order execution--
but also from a fundamental perspective, since they provide a detailed view of the dynamics of supply fand demand and their role in
 price formation.

The  availability of a large amount of high frequency data on order flow, transactions and price dynamics on these markets has instigated a line of research which, in contrast to traditional market microstructure models which make assumptions on the behavior and preferences of various types of agents, focuses on the {\it statistical modeling} of aggregate order flow and its relation with price dynamics, in a quest to understand the interplay between price dynamics and order flow of various market participants \cite{IEEE}.

A fruitful line of approach to these questions has been to model the stochastic dynamics of the {\it limit order book}, which centralizes all buy and sell orders, either  as a queueing system \cite{luckock,smith2003,cont2010,contlarrard2012,cont2013,kelly2018} or, at a coarse-grained level, through a  (stochastic) partial differential equation  describing the evolution of the distribution of buy and sell orders  \cite{lasrylions,caffarelli2011,burger2013,carmonaWebster,markowich2016,hambly2018,horst2018}. 
These PDE models may be viewed as scaling limits of discrete point process models \cite{contlarrard2012,hambly2018,horst2018}.

Although joint modeling of order flow at all price levels in the limit order book is  more appealling,  (S)PDE models have lacked the analytical  and computational tractability needed for applications; as a result, most analytical results  have  been derived using reduced-form models of the best bid-ask queues \cite{contlarrard2012,cont2013,Chavez2017,rosenbaum2017}. 

We propose  a class of  stochastic models for the dynamics of the limit order book which represent the dynamics of the entire order book while retaining at the same time the analytical and computational
tractability of low-dimensional Markovian models, and provides realistic dynamics for the joint dynamics of the market price and order book depth.  
Starting with a description of the dynamics of the limit order book via a stochastic partial differential equation (SPDE) with multiplicative noise, we show that in many cases, the solutions of this equation may be parameterized in terms of a low-dimensional diffusion process, which then
makes the model computationally tractable. In particular, we are able to  derive analytical relations between
model parameters and various observable quantities. This feature may
 be used for calibrating model parameters to match statistical
features of the order flow and leads to empirically testable 
predictions, which we proceed to test using high frequency time series of order flow in electronic equity markets.

{\bf Outline} 
Section \ref{sec:linearhom} introduces a description of the dynamics of a limit order book through a stochastic partial differential equation (SPDE). We  describe the various terms in the equation, their interpretation and discuss the implications for price dynamics (Section \ref{sec:price}). This class of models is part of a more general family of SPDEs driven by semimartingales, introduced in Sec. \ref{ssec:general_class} and studied in Sec. \ref{sec:generalSPDE}. 

We then focus on two analytically tractable examples: a two-factor  model (Section \ref{sec:2factorLOB}) and a model
with mean-reverting depth and imbalance (Section \ref{sec:meanrev}). For each model we
perform a detailed analysis of the role of different parameters and
study the dynamics of the price, order book depth, volume and order imbalance,  provide
an intuitive financial interpretation of the variables involved and
show how the model may be estimated from  financial time series of
price, volume and order flow.

\begin{figure}[bt]
  \centering
  \includegraphics[width= .9\textwidth]{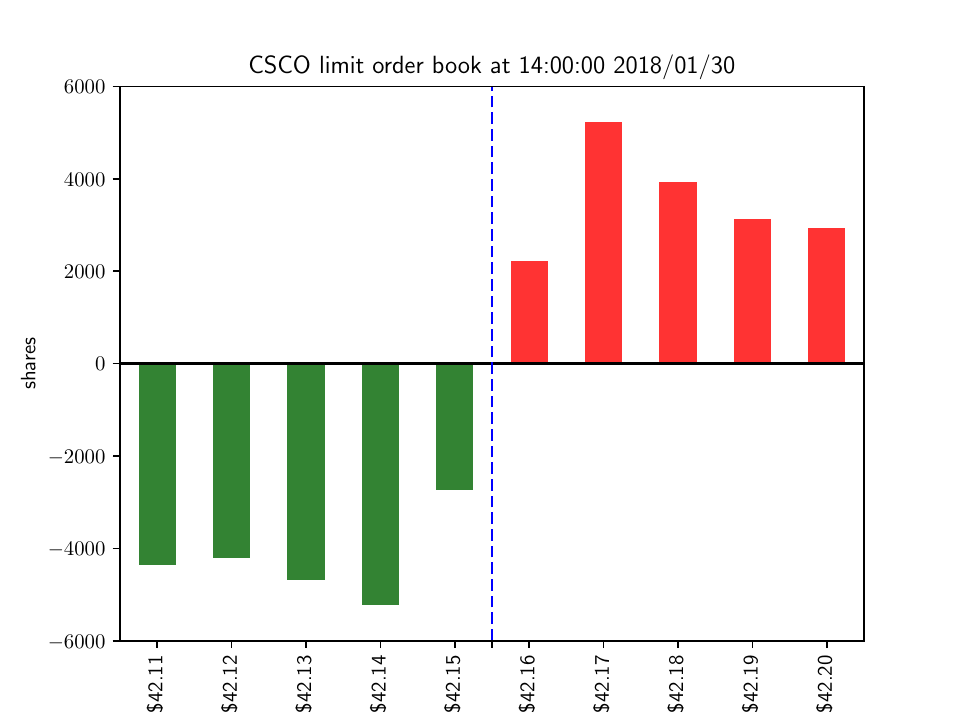}
  \caption{Snapshot of the NASDAQ limit order book for CISCO shares (Jan 30, 2018), displaying
    outstanding buy orders (green) and sell
    orders (red)  awaiting execution at different prices. The highest buying price (\$ 42.15 in this example) is the {\it bid} and the lowest selling price (\$ 42.16)  is the {\it ask}.}
  \label{fig:LOB}
\end{figure}

\section{A stochastic PDE model for limit order book dynamics}
\label{sec:linearhom}

We consider a market for a financial  asset (stock, futures contract, etc.) in which buyers and sellers may submit {\it limit orders} to buy or sell a certain quantity of the asset at a certain price, and {\it market orders} for immediate execution against the best available price.\footnote{In the following we do not distinguish market orders and marketable limit orders i.e. limit orders with a price better than the best price on the opposite side.} Limit orders awaiting execution are collected in the {\it limit order book}, an example of which is shown in Figure \ref{fig:LOB}:  at any  time $t$, the state of the limit order book is summarized by the volume  $V(t,p)$ of orders awaiting execution at  price levels $p$ on a  grid with mesh size given by the minimum price increment or {\it tick } size $\delta$. By convention we associate negative volumes with buy orders and positive volumes with sell orders, as shown in Figure \ref{fig:LOB}. 
An admissible order book configuration is then represented by a function $p\mapsto V(p)$ such that 
$$0<s^b(V):=\sup\{p>0,\quad V(p)<0\} \leq s^a(V):=\inf\{p>0,\quad V(p)>0\}<\infty.$$
$s^b(V)$ (resp. $ s^a(V)$) is called the bid (resp. ask) price and represents the price associated with the best buy (resp. sell) offer. The quantity 
$$S=\frac{s^a(V)+s^b(V)}{2}$$ is called the {\it mid-price} and the difference $s^a(V)-s^b(V)$ is called the bid-ask spread. In the example shown in Figure  \ref{fig:LOB}, $s^b(V)=42.15, s^a(V)=42.16$ and the bid-ask spread is equal in this case to the tick size, which is 1 cent. 

One modelling approach has been to represent the dynamics of $V(t,p)$ as a  spatial (marked) point process \cite{luckock,cont2010,cont2013,kelly2018}. These models preserve the discrete nature of the dynamics at high frequencies but can become computationally  challenging as one tries to incorporate realistic dynamics. In particular, price dynamics,  which is endogenous in such models, is difficult to study, even when the order flow is a Poisson point process.

When the bid-ask spread and tick size $\delta$ are much smaller than the price level, as is often the case,
another modelling approach is to use a continuum approximation for the order book, describing it through its  {\it density} $v(t,p)$ representing the volume of orders per unit price: $$ V(t,p)\simeq v(t,p)\delta.$$ 
The evolution of the density of buy and sell orders is then described through a partial differential equation (PDE).
 A deterministic description of the dynamics of order densities through a system of coupled partial differential equations was proposed by \cite{lasrylions} and studied in detail by \cite{chayes2009,caffarelli2011,burger2013}. In the Lasry-Lions model,   the  evolution of the density of buy and sell orders is described by a pair of diffusion equations coupled through the dynamics of the price, which represents the free boundary between prices of buy and sell orders.  This model is appealing in many respects, especially in terms of analytical tractability, but leads to a deterministic price process which decays to a constant price, so does not provide any insight into the relation between liquidity, depth, order flow and price volatility. \cite{markowich2016} explore some stochastic extensions of this model but essentially show that these extensions do not provide realistic price dynamics.

We adopt here this continuum approach for the description of the limit order book, but describe instead its dynamics through a {\it stochastic} partial differential equation, paying close attention to price dynamics and its relation with order flow.

The model we propose shares some features with \cite{lasrylions}, but also has some essential differences. Unlike the Lasry-Lions model, which is a free boundary problem in which the dynamics of the price  is implicitly determined, we formulate the model as a stochastic partial differential equation in relative price coordinates, which leads to a  {\it stochastic } moving boundary problem in absolute price coordinates.
This leads to a more realistic joint dynamics for the market price and order book depth which can be related to empirical observations. 
Our model also relates to  the classes of models studied in \cite{horst2018,hambly2018} as scaling limits of discrete queueuing systems.
 
We now describe our model in some detail.
 
\subsection{State variables and  scaling transformations}
\label{ssec:scaling}
We focus on the case where the tick size $\delta$ and the bid-ask spread  are small compared to the typical price level and
 consider a  limit order book described in terms of a mid-price $S_t$ and the density $v(t,p)$ of orders at each price level $p$, representing 
 buy orders for $p< S_t$, and 
 sell orders for $p>S_t$.
We use the convention, shown in Figure \ref{fig:LOB},  of representing buy orders with a negative sign and sell orders with a positive sign, so
$$ v(t,p)\leq 0\quad{\rm for}\quad p<S_t\quad{\rm and}\quad v(t,p)\geq 0\quad{\rm for}\quad p>S_t$$

Limit orders are executed  against market orders according to price priority and their position in the queue;  execution of a limit order only occurs if they are located at the best (buy/sell) prices. This means that price dynamics is determined by the interaction of market orders with limit orders of opposite type at or near the {\it interface} defined by the best price \cite{cont2010}. Due to this fact, most limit orders flow are submitted close to the best price levels: the frequency of limit order submissions is highly inhomogeneous  as a function of distance to the best price and concentrated near the best price. As shown in  previous empirical studies, order flow intensity at a given distance from the best price can be considered as a stationary variable in a first approximation \cite{bouchaud2009,cont2010}.
For this reason, in a stochastic description it is more convenient to model the dynamics of order flow in the reference frame of the (mid-)price $S_t$. We define
$$u_t(x)= v(t,S_t+x) $$ where $x$ represents a distance from the mid-price.
We refer to $u_t$ as the {\it centered} order book density.

The simplest way of centering  is to set $x(p)= p - S_t$ but other, nonlinear, scalings may be of interest.
Although limit orders may be placed at any distance from the bid/ask prices, price dynamics is dominated by the behavior of the order book a few levels above and below the mid price  \cite{contlarrard2012}. 
This region becomes infinitesimal if the tick size $\delta$ is naively scaled to zero, suggesting that the correct scaling limit is instead one in which we choose as coordinate a scaled version $(p-S_t)$, as classically done in boundary layer analysis of  PDEs \cite{boundary}, in order to zoom into the relevant region:
\begin{equation}
  \label{eq:scaling}
  x(p):= - (S_t-p)^a,\quad p< S_t,\qquad x(p) = (p-S_t)^a,\quad p>S_t, \qquad a>0
\end{equation}
for bid and ask side, respectively.  We will consider examples of such nonlinear scalings when discussing applications to high-frequency data in Sections 3 and 4.

These arguments also justify limiting the range of the argument  $x$to a bounded interval $[-L,+L]$, setting $u_t(x)=0$ for $x\notin (-L,L)$. This amounts to assuming that no orders are submitted at price levels at distances $|x|\geq L$ from the mid-price and that orders previously submitted at some price $p$ are cancelled as soon as $|S_t-p|\geq L$ i.e. when the mid-price $S_t$ moves away from $p$  by more than $L$. When $L$  is a large multiple of daily volatility, this is a realistic assumption. In some market (for example futures contracts), limit orders can be in fact only submitted within a range $\pm L$ of the mid-price.

\subsection{Dynamics of the centered limit order book}
\label{ssec:setup_dyn}
Empirical studies on intraday order flow in electronic markets reveal the coexistence of two, very different types of order flow operating at different frequencies \cite{lehalle2018}. 

On one hand, we observe the submission (and cancellation) of orders  queueing at various price levels on both sides of the market price by regular market participants. Cancellation may occur in several ways: we distinguish {\it outright } cancellations, which we model as proportional to current queue size, from cancellations with replacement (`order modifications'), in which an order is cancelled and immediately replaced by another one of the same type, usually at a neighboring price limit.
The former results in a net decrease in the volume of the order book whereas the latter is {\it conservative} and simply shifts orders across neighboring levels of the book.
Further decomposing this conservative flow into a symmetric and antisymmetric part leads to two terms in the dynamics of $u_t$: a  {\it diffusion} term representing the cancellation of orders and their (symmetric) replacement by orders  at neighboring price levels and a {\it convection} (or transport) term representing the cancellation of orders and their replacement by orders closer to the mid-price.
Denoting by $\nabla$ the gradient in the variable $x$, the net effect of this order flow on the order book may thus be described as a superposition of  
\begin{itemize}
\item a  term $f^b(x)$ (resp. $f^a(x)\ )$ representing the rate   of buy (resp. sell) order submissions  at a distance $x$ from the best price; 
\item a term $\alpha_b\ u_t(x)$ (resp. $\alpha_a\ u_t(x)$) representing (outright) proportional cancellation of limit buy (resp. sell) orders  at a distance $x$ from the mid-price  (where $\alpha_a, \alpha_b \leq 0$).
\item a convection term  $-\beta_b \nabla u_t(x)$ (resp. $+\beta_a \nabla u_t(x)$) with $\beta_a, \beta_b>0$ which models the replacement of buy (resp. sell) orders  by orders closer to the mid-price (i.e. closer to $x=0$, hence the signs in these terms): in the reference frame where the origin is the mid-price, this translates into a flow of volume towards the origin;
\item a diffusion term $\eta_b \Delta u_t(x)$ (resp. $\eta_a \Delta u_t(x)$) which represents the cancellation and symmetric replacement of orders at a distance $x$ from the mid-price. 
\end{itemize}
Another component of order flow is the one generated by high-frequency traders (HFT). These market participants buy and sell at very high frequency and under tight inventory constraints, submitting and cancelling large volumes of limit orders near the mid-price and  resulting in an order flow  whose net contribution to total order book volume is zero on average over longer time intervals  but whose sign over small time intervals fluctuates at high frequency.  At the coarse-grained  time scale of the average (non-HF) market participants, these features  may be modeled as a multiplicative noise term of the form 
\begin{itemize}
\item 
$\sigma_b u_t(x) dW^b$ for  buy orders ($x<0$) and  $\sigma_a u_t(x) dW^a$ for sell orders ($x>0$) \end{itemize}
where $(W^a,W^b)$ is  a two-dimensional Wiener process (with  possibly correlated components).
The multiplicative nature of the noise accounts for the high-frequency cancellations associated with HFT orders. 

The impact of these different order flow components may be summarized by the following  stochastic
partial differential equation for the centered order book density $u$:
\begin{eqnarray}
      \d u_t(x) &= \left[\eta_a \Delta u_t(x) + \beta_a \nabla u_t(x) + \alpha_a u_t(x) + f^a(x) \right]\d t  + \sigma_a u_t(x)  \d W^a_t ,  \quad\,x \in (0,L), \nonumber\\
    \d u_t(x) &= \left[\eta_b \Delta u_t(x) - \beta_b \nabla u_t(x) + \alpha_b u_t(x) - f^b(x)\right]\d t + \sigma_b u_t(x)  \d W^b_t,  \quad\,x\in (-L,0)\nonumber\\
    & u_t(x)\leq 0,\quad x< 0,\qquad\quad\;
    u_t(x) \geq 0,\quad x>0,  \nonumber \\
    & u_t(0+)  = u_t(0-) = 0, \qquad    u_t(-L) = u_t(L) = 0   \label{eq:lSPDE}
\end{eqnarray}
Here $\eta_a$, $\eta_b$, $\beta_b$, $\beta_a$, $\sigma_a$, $\sigma_b \in (0,\infty)$, $\alpha_a, \alpha_b \leq 0$ and $f^a$, $f^b
\colon I \to [0,\infty)$ although the equation may be equally considered without these sign restrictions.

Note that, unlike the Lasry-Lions model, there is no `smooth pasting' condition at $x=0$: in general $\nabla u_t(0+)\neq \nabla u_t(0-)$: the difference $\nabla u_t(0+)- \nabla u_t(0-)$ is in fact random and represents an imbalance in the flow of buy and sell orders, which drives price dynamics. This important feature is discussed in Section \ref{sec:price} below.

\begin{rmk}\label{remark.zero} In simple price impact models used in the literature on optimal trade execution it is assumed that the relation between price impact and order size is deterministic. This corresponds to the case $\alpha u+ f=\beta=\sigma=\eta=0$ which leads  to a constant centered order book profile $u_t(.)=u_0(.)$.
These terms thus correspond to  deformations of the centered order book profile due to new order book events and lead to a stochastic market impact of trades dependent on the current state of the order book.
\end{rmk}

The existence of a solution  satisfying the boundary and sign constraints is not obvious but we will see in Section \ref{sec:generalSPDE} that  \eqref{eq:lSPDE} is well-posed:
it follows from \cite[Theorem 6.7]{dPZinf} and \cite[Theorem 3]{milianComp} 
that, when $f_a$,
$f_b\in L^2(I)$, then for all $u_0\in L^2(I)$ there exists a unique  weak
solution of~\eqref{eq:lSPDE} (see Definition~\ref{df:weak_solution}
below) and, when $u_0 \vert_{(0,L)}\geq 0$ and  $u_0\vert_{(-L,0)}\leq 0$ this solution satisfies
\begin{equation}
  \label{eq:45}
  u_t \vert_{(0,L)} \leq 0,\qquad u_t\vert_{(-L,0)}\geq 0.
\end{equation}
We will study the mathematical properties of the solution in more detail below. 

\subsection{Price dynamics}\label{sec:price}

The dynamics of the limit order book determines the dynamics of the bid and ask price, which corresponds to the location of the best (buy and sell) orders. The dynamics of the price should thus be related to the arrival and execution of orders in the order book.

To understand the relation between price dynamics and order flow, let us take a step back and consider   an order book with discrete price levels, multiples of a tick size $\delta$, $D^b$ orders per level on the bid side and  $D^a$ orders per level on the ask side. 
Price changes during a time interval $[t, t+\Delta t]$ are triggered through the interaction of the net order flow, or {\it order flow imbalance} (OFI) and the outstanding limit orders at the top of the order book \cite{contImbalance}. 
As illustrated in  Figure~\ref{fig:impact}, an order flow imbalance of $\Delta D^a_t>0$  on the ask side  over a
short time interval $[t,t+\Delta t]$ represents an excess of buy orders, which  will then be executed against limit sell orders sitting on the ask side and move the ask price by
$\Delta D^a_t/D^a$ ticks, resulting  in a price move of $\delta \ \Delta D^a_t/D^a$. Similarly, an order flow imbalance $\Delta D^b_t$ on the bid side will move the bid price up by $\Delta D^b_t/D^b$ ticks.
Using our sign conventions for buy/sell volumes, this leads to the following dynamics: 
$$
     \Delta s^b_t = \delta \frac{\Delta D_t^b}{D^b_t}\qquad  \Delta s^a_t  =  -\delta \frac{\Delta D_t^a}{D^a_t},
$$
so the dynamics of  the mid price $s_t=(s^b_t +s^a_t )/2$ is given by
\begin{equation}
  \label{eq:discreteprice}
  \Delta S_t =  \frac{\delta}{2}\left(\frac{\Delta D_t^b}{ D^b_t} - \frac{\Delta D_t^a}{ D^a_t}\right).
\end{equation}
This relation is exact (up to rounding) in the case of a discrete order book with constant depth per level (and thus, no empty levels), as shown in Figure \ref{fig:impact}. However, in a dynamic setting where the order book may have an arbitrary profile which randomly shifts at each instant, one can only expect a `homogenized' version of \eqref{eq:discreteprice} to hold:
\begin{equation}
  \label{eq:price}
  \Delta S_t =  {\theta}\left(\frac{\Delta D_t^b}{ D^b_t} - \frac{\Delta D_t^a}{ D^a_t}\right)
\end{equation}
where $\theta$ is an {\it impact} coefficient which relates order imbalance to price movements.
This relation between order flow imbalance and price movements has been empirically verified in equity markets \cite{contImbalance}, and we shall use it as a basis for defining the relation between price dynamics and order flow in our model.
\begin{figure}[h!]
\noindent
\centering
\includegraphics[width=0.8\textwidth]{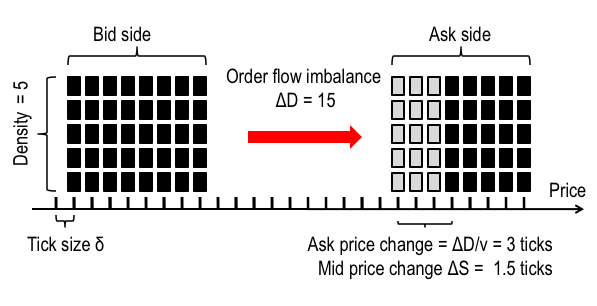}
\caption{Impact of order flow  imbalance on the order book and the price.}
\label{fig:impact}
\end{figure}

Let us now see how the relation \eqref{eq:price} translates in terms of the variables in our model.
Denoting by $D^b_t$  (resp. $D^a_t$) the volume of buy (resp. sell)
limit orders at the top of the book (i.e. the first or average of the first few levels). 
Given a
mid-price $S\in \R_+$, we define a scaling transformation $x\colon [S,S+L]\to [0,\infty)$  as discussed in Section~\ref{ssec:scaling}, with continuously differentiable
inverse and such that $x(S) = 0$. The volume  $D^a$   in the best ask  queue is then given by
\begin{equation}
  \label{eq:depthAskInt}
   D^a = \int_{s}^{s+\delta} u(x(p)) \d p = \int_0^{x(s+\delta)}u(y)
  (x^{-1})'(y)\d y .
\end{equation}
$D^b$ may be similarly defined for the bid side.
These quantities represent the depth at the top of the book; we will refer to them as `market depth'.
 In the case of linear scaling $x(p) = p-S$, using $u(0)=0$ a second order expansion in $\delta>0$ yields
\begin{equation}
  D^a= \int_0^\delta u(x)\d x \approx \delta u(0+) + \frac{\delta^2}{2} \nabla
  u(0+) = \frac{\delta^2}{2} \nabla u(0+). \label{eq:depthAsk}
\end{equation}
Similarly, for the bid side
\begin{equation}
  D^b \approx \frac{\delta^2}{2} \nabla u\left(0-\right).  \label{eq:depthBid}
\end{equation}
Substituting these expressions in \eqref{eq:price}, we obtain the following 
 dynamics of the mid-price:
\begin{equation}
  \label{eq:SDEprice}
  \d S_t =\theta \left(\frac{\d D^b_t}{D_t^b} - \frac{\d
      D^a_t}{D_t^a}\right)=\theta \left(\frac{\d \nabla u_t(0-)}{\nabla u_t(0-)}  - \frac{\d\nabla u_t(0+)}{\nabla u_t(0+)}\right).
\end{equation}
We observe that price dynamics is entirely determined by the order flow at the top of the book and the depth of the limit order book around the mid-price.
The tick size $\delta$, used in the derivation, does not appear anymore in \eqref{eq:SDEprice}. The only trace of the microstructure is  the impact coefficient $\theta$ which relates the order flow imbalance to the magnitude of the price change, and whose amplitude may vary across assets.
\begin{rmk}
Equation~\eqref{eq:SDEprice} requires left and right-differentiability of $u$ at the
origin. This can be
guaranteed whenever $u_t$ takes values in the Sobolev space
$H^{2\gamma}(I)$, for some $\gamma > \sfrac34$ which will be the case in
our  model. Note however that, in contrast to \cite{lasrylions}, in general $\nabla u(0+)
\neq \nabla u(0-)$: the difference between these two quantities is proportional to the order flow imbalance which drives price moves. 
\end{rmk}

\begin{rmk} As noted in Remark \ref{remark.zero}, in the case $\alpha u+f = \beta=\sigma=\eta=0$ corresponds to a constant centered order book profile $u_t=u_0$. In this case, 
Equation~\eqref{eq:SDEprice} implies $dS_t=0$ i.e. the price is constant, which is consistent with a zero net order flow. This is a (desirable) consequence of the consistency between the price dynamics  \eqref{eq:SDEprice} and the order book dynamics \eqref{eq:lSPDE}. 
\end{rmk}
\subsection{Dynamics in  absolute price coordinates} 
\label{ssec:2factor_noncentered}
The model above describes dynamics of the order book in {\it relative} price coordinates, i.e. as a function of the (scaled) distance from the mid-price. 
The density of the limit order book parameterized by the (absolute) price level 
$p\in \R$  is given by
\begin{equation}
  \label{eq:1}
  v_t(p) = u_t(p - S_t),\qquad x\in \R,
\end{equation}
where we extend $u_t$ to $\R$ by setting $u_t(y) = 0$ for
$y\in \R\setminus [-L,L]$.
Assume $S_t$ follows an (arbitrary) It{\^o process
\[ \d S_t =\theta \mu_t \d t +  \theta  \xi_t^b \d W_t^b - \theta \xi_t^a \d W_t^a\]
where $\theta >0$ and $\mu_t$ is predictable and integrable and $\xi_t^a$ and $\xi_t^b$ are predictable and square-integrable processes. This  includes the case of price dynamics \eqref{eq:SDEprice}, which can be used to  express $\mu_t, \xi_t^a, \xi_t^b $ in terms of $u_t$ and model parameters. We will not go into such detail here but will return to this in the examples in Section \ref{sec:2factorLOB} and \ref{sec:meanrev}.
Define
\[ \hat \xi_t := \sqrt{(\xi_t^b)^2 + (\xi_t^a)^2 - 2\varrho_{a,b} \xi_t^b \xi_t^a},\qquad t\geq 0.\]

Using a (generalized) It\^{o}-Wentzell formula (see Appendix~\ref{A:trafo}), we can show that $v$ is the solution of a {\it stochastic moving boundary problem}  \cite{mueller2016stochastic}:
  \begin{multline}
    \label{eq:vts+}
    \d v_t(p) = \left[(\eta_a+\tfrac12 \theta^2 \hat \xi_t^2) \Delta v_t(p)\right.\\
    \left.+ \left(   \beta_a - \theta \mu_t -\theta \sigma_a (\varrho_{a,b} \xi_t^b  - \xi_t^a) \right)\nabla v_t(p)  + \alpha_a v_t(p)\right] \d t \\
    + \left( \sigma_a v_t(p)  +  \theta  \xi_t^a \nabla v_t(p)\right) \d W_t^a  -\theta\xi_t^b \nabla v_t(p) \d W_t^b,    
  \end{multline}
  for $p \in (S_t, S_t+L)$, and
  \begin{multline}
    \label{eq:vts-}
    \d v_t(p) = \left[(\eta_b+\tfrac12 \theta^2 \hat \xi_t^2) \Delta v_t(p)\right.\\
    \left.+ (- \theta \mu_t  - \beta_b  -  \theta\sigma_b (\xi_t^b-\varrho_{a,b}  \xi_t^a)) \nabla v_t(x)  + \alpha_b v_t(p)\right] \d t \\
   + \theta   \xi_t^a \nabla v_t(p)\d W_t^a     + \left(\sigma_b  v_t(p) -\theta \xi_t^b \nabla v_t(p)  \right) \d W_t^b  
 \end{multline}
  for $x\in (S_t-L, S_t)$ with the moving boundary conditions
  \begin{equation}
    \label{eq:bdryst}
    v_t(S_t) = 0,\qquad v_t(y) = 0,\qquad \forall y\in \R\setminus(S_t-L,S_t+L),
  \end{equation}
We refer to \eqref{eq:bdryst} as a {\it stochastic boundary condition} at $S_t$.

Here, we assumed for simplicity that $f^a$, $f^b = 0$. A more detailed discussion of this result is given in Appendix~\ref{A:trafo}.

\subsection{Linear evolution models for order book dynamics}
\label{ssec:general_class}
We will now describe a more general class of linear models for order book dynamics, rich enough to cover the examples we discussed so far, but also covering all level-1 models where the best bid and ask queue are modeled by positive semimartingales. Generally, the densities of orders in the bid and ask side will take values in
some function spaces $H^b$ and $H^a$, respectively. We assume that
orders at relative price level $x$ for $\abs{x} \geq L\in
(0,\infty]$ will be cancelled. The relative price levels are on the bid side $I^b:= (-L,0)$, and on the ask side $I^a:= (0,L)$. Then, in order to preserve the
interpretation of a density it will be reasonable to ask $H^b\subset
L^1_{loc}(I^b)$ and $H^a\subset L^1_{loc}(I^a)$. From mathematical side, we will assume that $H^a$ and $H^b$ are real separable Hilbert spaces. For notational
convenience we now also set $I:= I^b\cup I^a$. 

The density of limit orders at relative price level $x$ and time $t$
is given by $u\colon I \times [0,\infty) \times \Omega \to \R$, such
that $u^\star := u\vert_{I^\star}$ is an $H^\star$-valued adapted process. The
initial state is described by $h\colon I\to \R$, such that $h^\star :=
h\vert_{I^\star}$ is an element in $H^\star$. The (averaged) intra-book dynamics are modeled by linear operators $A_\star\colon \dom(A_\star) \subset H^\star \to H^\star$, for $\star\in
\{a,b\}$, which we assume to be densely defined and such that for $\star \in\{ a,b\}$   there exist weak solutions in $H^\star$ of the equations
\begin{equation}
  \label{eq:gModelHomDet}
  \ddt g_t^\star(h^\star) = A_\star g_t^\star(h^\star),\quad t> 0,\quad g_0^\star(h^\star) = h^\star,
\end{equation}
for each initial state $h^\star\in H^\star$.

The random order arrivals and cancellations are assumed to be proportional and are modelled by cadlag semimartingales $X^b$ and $X^a$, which we assume to have jumps greater than $-1$ almost surely. We assume the initial
order book state is denoted by $h\in H$ and we write $h^a :=
h\vert_{I^a}$, $h^b := h\vert_{I^b}$. 

\begin{model}[Linear Homogeneous Evolution]\label{model:ghom}
  The  general form of the linear homogeneous model is
  \begin{equation}
    \label{eq:gModelHom}
    \left\{\begin{alignedat}{2}
      \d u_t^b &= A_b u_{t-}^b  \d t + u_{t-}^b \d X_t^b,\quad& \text{on }&I^b,\\
      \d u_t^a &= A_a u_{t-}^a  \d t + u_{t-}^a \d X_t^a,\quad& \text{on }&I^a,     
    \end{alignedat}\right.
  \end{equation}
  for $t\geq 0$, and $u_0 = h$.  $u$ can be alternatively expressed as
  \begin{equation}
    \label{eq:10}
    u_t = g_t^b (h^\star)\mathcal{E}_t(X^b) \1_{I^b} + g_t^a(h^\star) \mathcal{E}_t(X^a) \1_{I^a},
  \end{equation}
  where $g^b$ and $g^a$ are solutions of~\eqref{eq:gModelHomDet}, see Theorem~\ref{thm:homo} below. If, in addition, $t\mapsto \nabla g_t^b(0-)$ and $t\mapsto \nabla
  g_t^a(0+)$ are of bounded variation, then we obtain the price dynamics \eqref{eq:SDEprice}.
\end{model}

\begin{cor}
  Assume the setting of Model \ref{model:ghom} and, in addition, that $h^\star$
  is an eigenfunction of $-A_\star$ with eigenvalue $\nu_\star\in \R$, for $\star=b$ and
  $\star=a$. Then,~\eqref{eq:gModelHomDet} can be solved
  explicitly and 
  \begin{equation}
    \label{eq:13}
    u_t = h^b e^{-\nu_b t} \mathcal{E}_t(X^b) \1_{I^b} + h^a e^{-\nu_a
      t} \mathcal{E}_t(X^a) \1_{I^a}. 
  \end{equation}
\end{cor}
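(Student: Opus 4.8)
The plan is to use the eigenfunction hypothesis to collapse the operator evolution equation~\eqref{eq:gModelHomDet} into a scalar linear ODE, solve it in closed form, and then feed the resulting $g^\star$ into the representation~\eqref{eq:10} already provided by Model~\ref{model:ghom}.

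Fix $\star\in\{a,b\}$. Saying that $h^\star$ is an eigenfunction of $-A_\star$ with eigenvalue $\nu_\star$ means $A_\star h^\star = -\nu_\star h^\star$; in particular $h^\star\in\dom(A_\star)$. I would then check that
$$g_t^\star := e^{-\nu_\star t}\,h^\star$$
solves~\eqref{eq:gModelHomDet}. Since $h^\star$ lies in $\dom(A_\star)$, the curve $t\mapsto e^{-\nu_\star t}h^\star$ is strongly differentiable in $H^\star$, and linearity of $A_\star$ gives $\ddt g_t^\star = -\nu_\star e^{-\nu_\star t}h^\star = e^{-\nu_\star t}A_\star h^\star = A_\star g_t^\star$, with $g_0^\star = h^\star$. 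To match the weak-solution notion required by the model, I would equivalently test against $\phi\in\dom(A_\star^*)$: from $\langle g_s^\star, A_\star^*\phi\rangle = -\nu_\star\langle g_s^\star,\phi\rangle$ the scalar function $s\mapsto\langle g_s^\star,\phi\rangle = e^{-\nu_\star s}\langle h^\star,\phi\rangle$ satisfies $\langle g_t^\star,\phi\rangle = \langle h^\star,\phi\rangle + \int_0^t\langle g_s^\star, A_\star^*\phi\rangle\d s$, which is the defining identity.

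It then remains to substitute $g_t^b = e^{-\nu_b t}h^b$ and $g_t^a = e^{-\nu_a t}h^a$ into~\eqref{eq:10} of Model~\ref{model:ghom} (Theorem~\ref{thm:homo}); this reproduces~\eqref{eq:13} verbatim. The only point deserving attention, and the nearest thing to an obstacle, is that~\eqref{eq:10} is stated for a weak solution of~\eqref{eq:gModelHomDet}, so one must certify that the closed-form exponential genuinely qualifies; but since the eigenrelation forces $h^\star\in\dom(A_\star)$, the candidate is in fact a classical solution, and the corollary follows immediately.
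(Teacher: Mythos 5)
Your proposal is correct and follows essentially the same route as the paper, which states this corollary without a separate proof by deferring to Theorem~\ref{thm:homo} together with the accompanying remark that an eigenfunction initial condition yields $g_t = e^{\nu t}h_0$ for the deterministic equation~\eqref{eq:gPDE}. Your explicit verification that $g_t^\star = e^{-\nu_\star t}h^\star$ satisfies the weak formulation (testing against $\phi\in\dom(A_\star^*)$ and using $h^\star\in\dom(A_\star)$) is precisely the argument the paper leaves implicit, applied side by side on $I^b$ and $I^a$ before substituting into~\eqref{eq:10}.
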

\begin{rmk}
  In case that $X^b$ and $X^a$ are (local) martingales, the eigenvalues $-\nu_b$ and $-\nu_a$ play the role of  net order arrival rates on bid and ask side, respectively. 
\end{rmk}

\begin{model}[Linear models with source terms]\label{model:ginhom}
  A more realistic setting assumes in addition an influx/outflow
  of orders at a rate $f^a(x),f^b(x)$ which
  depends on the distance $x$ to the mid price \cite{cont2010}. The equation then becomes:
  \begin{equation}
    \label{eq:gModelInHom}
    \left\{\begin{alignedat}{2}
    \d u_t^b &= \left(A_b u_{t}^b+ f^b\right) \d t +  u_{t}^b \d X_t^b,\quad& \text{on }&I^b,\\
    \d u_t^a &= \left(A_a u_{t}^a+ f^a\right)  \d t +  u_{t}^a \d X_t^a,\quad& \text{on }&I^a,            
    \end{alignedat}\right.
  \end{equation}
  for $t\geq 0$, with initial condition $u_0 = h$.

  As we will discuss in Section \ref{sec:meanrev}, an interesting case is when
 $f^b$ (resp. $f^a$) is an eigenfunction of $-A^b$ (resp.
  $-A^a$) associated with some eigenvalue $\nu_b$ (resp. $\nu_a$). Then by Theorem~\ref{thm:inhomo} we obtain
  \begin{equation}
    \label{eq:10}
    u_t = \left(g_t^b(h^b-f^a)\mathcal{E}_t(X^b) + f^b Z_t^b\right) \1_{I^b} + \left(g_t^a(h^a-f^a) + f^a Z_t^a\right) \1_{I^a},
  \end{equation}
where, for $\star \in \{a,b\}$, $Z_t^\star$ is  the solution of 
  \begin{equation}
    \label{eq:16}
    \d Z_t^\star =  (1-\nu_\star Z_{t-}^\star)\d t + Z_{t-}^\star \d X_t^\star,\quad t\geq 0,\quad Z_0^\star = 1.
  \end{equation}
\end{model}
\begin{rmk}
  If $\nu_b$, $\nu_a>0$ the state of the order book is mean
  reverting to the state $f^b 1_{[-L,0)} + f^a\ 1_{(0,L]}$. We will give an example of such a mean-reverting order book  model in Section~\ref{sec:meanrev}.
\end{rmk}

\begin{rmk}
Any model for the dynamics of the order book implies a model for price dynamics via~\eqref{eq:SDEprice}. In particular this implies a relation between price volatility and parameters describing order flow, in the spirit of \cite{cont2013}. We will derive this relation for the examples studied in the sequel and use it to construct a model-based intraday volatility estimator.
\end{rmk}

In the next section, we will study this class of models from a mathematical point of view. We will then continue with the analysis of the two  examples mentioned above in Sections~\ref{sec:2factorLOB} and~\ref{sec:meanrev}. 

\section{Linear stochastic PDE models with multiplicative noise}\label{sec:generalSPDE}

In order to  further  study the properties of the  SPDE model \eqref{eq:lSPDE}, we require a more explicit characterization of the solution, in order to compute various quantities of interest and estimate model coefficients from observations. A useful approach is to look for a { finite dimensional realization} of the infinite-dimensional process $u$:
\begin{df}[Finite dimensional realizations]\label{df:fdr}  A process  $u= (u_t)_{t\geq 0}$ taking
  values in an (infinite-dimensional) function space $E$ is said to
  admit a {\it finite dimensional realization} of dimension $d\in
  \N$ if there exists an $\R^{d}$-valued stochastic  process $Z=(Z^1,...,Z^d)$
  and a map 
  \[ \phi: \R^{d} \to E\quad{\rm such\ that}\quad \forall t\geq 0, \quad u_t = \phi(Z_t).\]
\end{df}
Availability of a finite dimensional realization for the SPDE~\eqref{eq:lSPDE} makes simulation, computation and estimation problems more tractable, especially if the
process $Z$ is a low-dimensional Markov process.
Existence of such finite-dimensional realizations for stochastic PDEs  have been investigated for SPDEs arising in filtering \cite{levine1991} and interest rate modelling \cite{filipovic2003,gaspar2006}.

We will now show that finite dimensional realizations may indeed be
constructed for a class of SPDEs which includes \eqref{eq:lSPDE}, and
use this representation to perform an analytical study of these models.

\subsection{Homogeneous equations}
We now consider a more general class of linear homogeneous evolution equations
with multiplicative noise taking values in a real separable Hilbert space $(H, \langle \cdot,\cdot\rangle_H)$. Typically, $H$ will be a function space such as $L^2(I)$ for some interval $I\subset \R$. We consider the following class of evolution equations:
\begin{gather}
  \label{eq:gSPDE}
  \begin{split}
    \d u_t &= A u_{t-} \d t + u_{t-} \d X_t,\qquad t>0,\\
    u_0 &= h_0 \in H.
  \end{split}
\end{gather}
where $X$ is a real c\`{a}dl\`{a}g semimartingale whose jumps satisfy $\Delta X_t>-1$ a.\,s. and
$A\colon \dom(A) \subset H \to H$ a linear operator on $H$ whose adjoint we denote by $A^*$. We assume
that $\dom(A)\subset H$ is dense, and $A$ is closed. Since $A$ is closed we have that
also $\dom(A^*)\subset H$ is dense and that $A^{**} = A$ 
\cite[Theorem VII.2.3]{yosida1995functional}.
\begin{df}\label{df:weak_solution}
  An adapted $H$-valued stochastic process
$(u_t)$ is an \emph{(analytical) weak solution} of~\eqref{eq:gSPDE} with
initial condition $h_0$ if, for all $\varphi \in \dom(A^*)$,  $[0,\infty) \ni t\mapsto \langle u_t, \varphi\rangle_H \in \R$ is c\`{a}dl\`{a}g a.\,s. and for each $t\geq 0$, a.\,s.
\[\langle u_t,\varphi \rangle_{H} - \langle h_0, \varphi \rangle_{H} = \int_0^t \langle u_{s-}, A^*\varphi \rangle_{H} \d s+ \int_0^t \langle u_{s-}, \varphi \rangle_{H} \d X_s.\]
\end{df}
The case $X\equiv 0$ corresponds to a notion of weak
solution for the PDE:
\begin{equation}
  \label{eq:gPDE}
  \forall t>0,   \ddt g_t = A g_t\qquad 
    g_0 = h_0.
\end{equation}
That is, for all $\varphi \in \dom(A^*)$, 
\begin{equation}
  \label{eq:12}
  \langle g_t, \varphi\rangle_{H} - \langle h_0,\varphi\rangle_{H}= \int_0^t \langle g_s, A^* \varphi \rangle_{H} \d s,
\end{equation}
where the integral on the right hand side is assumed to exist.\footnote{Note that this slightly differs from the classical
  formulation of weak solutions for  PDEs.} In
particular, this yields that $[0,\infty) \ni t\mapsto \langle g_t,
\varphi\rangle_H\in \R$ is continuous.
\begin{rmk}
  By considering bid and ask side separately, we can
  bring~\eqref{eq:lSPDE} into the form of \eqref{eq:gSPDE}, where $X$
  is a Brownian motion and $A$ is given by $A := \eta\Delta \pm
  \beta\nabla + \alpha \operatorname{\Id}$ on $H:= L^2(I)$, $I :=
  (0,L)$ or $I:= (-L,0)$, with domain
  \[\operatorname{dom}(A):= H^2(I)\cap H^1_0(I),\]
  where  $H^1_0(I)$ is the closure in $H^1(I)$ of test functions with compact support in $I$. 
\end{rmk}
Denote by $Z_t=\mathcal{E}_t(X)$ the stochastic exponential of $X$. We recall the following useful lemma (see e.g. \cite[Lemma 3.4]{karatzas2007numeraire}):
\begin{lem}\label{lem:ExpRec}
  Let 
  \begin{equation}
    \label{eq:10}
    Y_t := -X_t + \left[ X, X\right]_t^c + \sum_{s\leq t} \frac{(\Delta X_s)^2}{1 + \Delta X_s},\qquad t\geq 0,
  \end{equation}
  Then, $\mathcal{E}_t(X) \mathcal E_t(Y) = 1$ almost surely, for all $t\geq 0$. Moreover,
  \begin{equation}
    \label{eq:11}
    \left[ X, Y\right]  = -\left[X,X\right]^c- \sum_{s\leq \cdot} \frac{(\Delta X_s)^2}{1+\Delta X_s}.
  \end{equation}
\end{lem}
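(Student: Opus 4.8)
The plan is to reduce both assertions to a single computation of the covariation $\left[X,Y\right]$, after which the identity $\mathcal{E}_t(X)\mathcal{E}_t(Y)=1$ becomes essentially automatic. The tool I would invoke is the product (Yor) formula for stochastic exponentials, $\mathcal{E}(X)\mathcal{E}(Y)=\mathcal{E}\!\left(X+Y+\left[X,Y\right]\right)$, valid for any two semimartingales. The point is that, once \eqref{eq:11} is proved, the very definition of $Y$ in the statement gives
\[
  X+Y+\left[X,Y\right] = X + \left(-X+\left[X,X\right]^c+\sum_{s\leq\cdot}\frac{(\Delta X_s)^2}{1+\Delta X_s}\right) + \left(-\left[X,X\right]^c-\sum_{s\leq\cdot}\frac{(\Delta X_s)^2}{1+\Delta X_s}\right) = 0,
\]
so that $\mathcal{E}(X)\mathcal{E}(Y)=\mathcal{E}(0)=1$. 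Thus the whole lemma hinges on establishing the covariation formula \eqref{eq:11}.

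To prove \eqref{eq:11}, I would write $Y=-X+A$ with $A:=\left[X,X\right]^c+\sum_{s\leq\cdot}\frac{(\Delta X_s)^2}{1+\Delta X_s}$, first noting that $A$ is of finite variation: its continuous part is increasing, and its jump sum converges absolutely because on compacts there are only finitely many jumps exceeding any fixed size, while $\sum_s(\Delta X_s)^2\leq\left[X,X\right]<\infty$ controls the small jumps (the hypothesis $\Delta X_s>-1$ keeping each summand finite). By bilinearity, $\left[X,Y\right]=-\left[X,X\right]+\left[X,A\right]$. Since $A$ has finite variation its covariation with $X$ charges only the jumps, $\left[X,A\right]=\sum_{s\leq\cdot}\Delta X_s\,\Delta A_s$; as $\left[X,X\right]^c$ is continuous we have $\Delta A_s=\frac{(\Delta X_s)^2}{1+\Delta X_s}$, hence $\left[X,A\right]=\sum_{s\leq\cdot}\frac{(\Delta X_s)^3}{1+\Delta X_s}$. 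Combining this with $\left[X,X\right]=\left[X,X\right]^c+\sum_{s\leq\cdot}(\Delta X_s)^2$ and simplifying the generic summand
\[
  -(\Delta X_s)^2+\frac{(\Delta X_s)^3}{1+\Delta X_s}=\frac{-(\Delta X_s)^2}{1+\Delta X_s}
\]
would give exactly \eqref{eq:11}.

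With \eqref{eq:11} in hand I could either cite Yor's formula as above, or—to keep the argument self-contained—apply integration by parts directly to $U:=\mathcal{E}(X)$ and $V:=\mathcal{E}(Y)$. Since $\d U=U_{-}\,\d X$ and $\d V=V_{-}\,\d Y$, one has $\d\left[U,V\right]=U_{-}V_{-}\,\d\left[X,Y\right]$, so the product rule yields
\[
  \d(U_tV_t)=U_{t-}V_{t-}\,\d\!\left(X+Y+\left[X,Y\right]\right)_t=0,
\]
and therefore $U_tV_t=U_0V_0=1$ for all $t$. I would also check that $\mathcal{E}(Y)$ is a genuine, non-vanishing exponential by computing $\Delta Y_s=-\Delta X_s/(1+\Delta X_s)>-1$, which holds precisely because $\Delta X_s>-1$.

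The hard part will not be the invocation of the product formula but the bookkeeping of the jump terms in the second step: one must carefully argue that the continuous bracket $\left[X,X\right]^c$ contributes nothing to $\left[X,A\right]$, that the jump sums converge and define honest finite-variation processes under only the hypothesis $\Delta X>-1$, and that the algebraic cancellation is exact. Everything else follows mechanically.
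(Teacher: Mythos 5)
Your proof is correct, and it takes a genuinely different route for half of the lemma. The paper disposes of the first assertion by citation — it quotes the identity $\mathcal{E}_t(X)\mathcal{E}_t(Y)=1$ from Lemma 3.4 of Karatzas--Kardaras (2007) — and then proves the bracket formula by exactly the jump bookkeeping you perform: expanding $[X,X]=[X,X]^c+\sum_{s\leq\cdot}(\Delta X_s)^2$, noting that the finite-variation part of $Y$ contributes only $\sum_{s\leq\cdot}(\Delta X_s)^3/(1+\Delta X_s)$ to $[X,Y]$, and cancelling the cubic terms via $-(\Delta X_s)^2+\frac{(\Delta X_s)^3}{1+\Delta X_s}=-\frac{(\Delta X_s)^2}{1+\Delta X_s}$. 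You invert the logical order: you establish $[X,Y]=-[X,X]^c-\sum_{s\leq\cdot}\frac{(\Delta X_s)^2}{1+\Delta X_s}$ first (your decomposition $Y=-X+A$ with $A$ of finite variation, and $[X,A]=\sum_{s\leq\cdot}\Delta X_s\,\Delta A_s$, is the paper's computation spelled out), and then deduce the product identity from Yor's formula $\mathcal{E}(X)\mathcal{E}(Y)=\mathcal{E}\left(X+Y+[X,Y]\right)$ together with the observed cancellation $X+Y+[X,Y]=0$, or equivalently by integration by parts on $U=\mathcal{E}(X)$, $V=\mathcal{E}(Y)$ using $\d[U,V]_t=U_{t-}V_{t-}\,\d[X,Y]_t$. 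What your route buys is a self-contained proof that replaces the external citation with one extra standard tool, and it surfaces two points the paper leaves implicit: that the jump sum defining $Y$ converges on compacts (finitely many large jumps, small jumps dominated by $2(\Delta X_s)^2$ with $\sum_{s\leq t}(\Delta X_s)^2\leq[X,X]_t<\infty$), so $Y$ is an honest semimartingale, and that $\Delta Y_s=-\Delta X_s/(1+\Delta X_s)>-1$, so $\mathcal{E}(Y)$ never vanishes. Both auxiliary checks are correct; what the paper's citation buys, by contrast, is brevity.
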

\begin{thm}\label{thm:homo}
  Let $Z := \mathcal{E}(X)$,  $h_0 \in H$. Then every
  weak solution of~\eqref{eq:gSPDE} is of the form 
  $$u_t:= Z_tg_t,\qquad t\geq 0$$ where $g$ is a weak solution of~\eqref{eq:gPDE}. 
\end{thm}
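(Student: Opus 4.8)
The plan is to factor out the stochastic exponential and thereby reduce the multiplicative-noise equation \eqref{eq:gSPDE} to the deterministic PDE \eqref{eq:gPDE}. Since $\Delta X_t > -1$ almost surely, the stochastic exponential $Z = \mathcal{E}(X)$ is strictly positive and never vanishes, so it is invertible; by Lemma \ref{lem:ExpRec} its inverse is $\mathcal{E}(Y)$, with $Y$ as defined there. Given an \emph{arbitrary} weak solution $u$ of \eqref{eq:gSPDE}, I would define $g_t := \mathcal{E}_t(Y)\, u_t$, so that $u_t = Z_t g_t$ by the identity $\mathcal{E}_t(X)\mathcal{E}_t(Y) = 1$ of Lemma \ref{lem:ExpRec}. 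The whole task then reduces to checking that $g$ is a weak solution of \eqref{eq:gPDE} in the sense of \eqref{eq:12}, with $g_0 = h_0$, the latter being immediate since $\mathcal{E}_0(Y) = 1$.

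The key step is to test against a fixed $\varphi \in \dom(A^*)$ and apply the semimartingale product rule. Because $\mathcal{E}_t(Y)$ is scalar-valued, $\langle g_t, \varphi\rangle_H = N_t M_t$, where $N_t := \mathcal{E}_t(Y)$ and $M_t := \langle u_t, \varphi\rangle_H$. By Definition \ref{df:weak_solution}, $M$ is a real c\`{a}dl\`{a}g semimartingale with $\d M_t = \langle u_{t-}, A^*\varphi\rangle_H \,\d t + M_{t-}\,\d X_t$, while $\d N_t = N_{t-}\,\d Y_t$. Integration by parts gives $\d(N_t M_t) = N_{t-}\,\d M_t + M_{t-}\,\d N_t + \d[N, M]_t$, and since the Lebesgue-$\d t$ part of $M$ has finite variation and does not contribute to the quadratic covariation, one has $[N, M] = \int_0^\cdot N_{s-} M_{s-}\,\d[Y, X]_s$.

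The crux is that the three stochastic contributions collapse: collecting the $\d X$, $\d Y$ and $\d[Y,X]$ terms produces $N_{t-} M_{t-}\,(\d X_t + \d Y_t + \d[X, Y]_t)$, and the explicit formulas for $Y$ and for $[X, Y]$ in Lemma \ref{lem:ExpRec} show that $\d X + \d Y + \d[X, Y] = 0$, the continuous-covariation piece and the jump-correction piece cancelling exactly. What survives is $\d(N_t M_t) = N_{t-}\langle u_{t-}, A^*\varphi\rangle_H\,\d t = \langle g_{t-}, A^*\varphi\rangle_H\,\d t$. Integrating from $0$ to $t$ and noting that the integrator is Lebesgue measure, so that $g_{s-}$ may be replaced by $g_s$ (they differ on an at most countable set), I obtain precisely \eqref{eq:12}; the right-hand side is then absolutely continuous in $t$, which also yields the required continuity of $t \mapsto \langle g_t, \varphi\rangle_H$.

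The main obstacle is the covariation bookkeeping in the product rule together with the exact cancellation $\d X + \d Y + \d[X, Y] = 0$; this is exactly what Lemma \ref{lem:ExpRec} is designed to supply, so the effort lies in invoking it correctly rather than in any new estimate. A secondary point requiring care is integrability: one must verify that $s \mapsto \langle u_{s-}, A^*\varphi\rangle_H$ is locally integrable, so that the final $\d t$-integral is well defined and the passage from $g_{s-}$ to $g_s$ is legitimate; this follows from the weak-solution hypothesis on $u$, under which the analogous integral in Definition \ref{df:weak_solution} already exists.
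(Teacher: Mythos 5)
Your proof is correct and follows essentially the same route as the paper's: you define $g_t := \mathcal{E}_t(Y)\,u_t$ with $Y$ from Lemma~\ref{lem:ExpRec}, apply the semimartingale product rule to $\langle g_t,\varphi\rangle_H$ for fixed $\varphi\in\dom(A^*)$, and exploit the exact cancellation $\d X_t + \d Y_t + \d\left[X,Y\right]_t = 0$ that the lemma supplies, which is precisely the second half of the paper's argument. The paper's proof additionally verifies the converse implication (that $u := Zg$ solves \eqref{eq:gSPDE} whenever $g$ solves \eqref{eq:gPDE}), which the statement as phrased does not strictly require, and your supplementary remarks on local integrability of $s\mapsto\langle u_{s-},A^*\varphi\rangle_H$ and on replacing $g_{s-}$ by $g_s$ under the Lebesgue integral are sound.
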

\begin{rmk}
  In particular, the SPDE~\eqref{eq:gSPDE} admits a two dimensional realization in the sense of Definition~\ref{df:fdr} with factor process $(t,\mathcal{E}_t(X))$ and $\phi(t,y):= y g_t$.
\end{rmk}
\begin{proof}
  Set $u_t := g_t Z_t$, $t\geq 0$, and for $\varphi \in
  D(A^*)$ write $B^\varphi_t:= \langle g_t,\varphi \rangle_{H}$,
  $C^\varphi_t := B^\varphi_t Z_t = \langle u_t,\varphi \rangle_{H}$. Since $t\mapsto \langle g_t,\varphi \rangle_H$ is continuous and $Z$ is scalar and c\`{a}dl\`{a}g, we get that $t\mapsto \langle u_t, \varphi \rangle_H$ is c\`{a}dl\`{a}g. Note that $B^\varphi$ is of
  finite variation and $Z$ is a semimartingale, so that also $C^\varphi$ is a
  semimartingale. Moreover, by It\^{o} product rule and since $B^\varphi$ is of
  finite variation and continuous,
  \begin{equation}
    \label{eq:10}
    \d C^\varphi_t = B^\varphi_{t} \d Z_t + Z_{t-} \d B^\varphi_t 
    = B^\varphi_{t-} Z_{t-} \d X_t +  \langle u_{t-}, A^* \varphi
    \rangle_{H} \d t,
  \end{equation}
  which is~\eqref{eq:gSPDE}. 
    Now, let $u$ be a solution of \eqref{eq:gSPDE} and set 
  \begin{equation*}
    Y := -X + \left[ X, X\right]^c + J,\qquad J :=\sum_{s\leq \cdot} \frac{(\Delta
      X_s)^2}{1+ \Delta X_s},
  \end{equation*}
  and $Z_t := \mathcal{E}_t(Y)$, $t\geq 0$. Recall that by Lemma~\ref{lem:ExpRec}
  we have $Z_t\mathcal{E}_t(X) = 1$ for all $t\geq 0$. 
  Set $g_t := Z_t u_t$, and, as above, fix $\varphi \in \dom(A^*)$ and
  write $B^\varphi_t:= \langle u_tZ_t, \varphi\rangle_{H}  = \langle g_t,\varphi
  \rangle_{H}$ and $C^\varphi_t := \langle u_t,\varphi \rangle_{H}$. By It\^{o}'s
  product rule and Lemma~\ref{lem:ExpRec},
  \begin{align}\nonumber
    \d B^\varphi_t &= C^\varphi_{t-} \d Z_t + Z_21{t-} \d C^\varphi_t + \d \left[ C^\varphi, Z \right]_t\\
    \nonumber &= C^\varphi_{t-}Z_{t-} \d Y_t + Z_{t-} \langle u_{t-},
                A^*\varphi \rangle_{H} \d t + C^\varphi_{t-}Z_{t-} \d X_t + C^\varphi_{t-}
                Z_{t-} \d \left[ X,Y\right]_t\\
    \nonumber &= \langle g_{t-},
                A^*\varphi \rangle_{H} \d t 
    + B^\varphi_{t-} \left(\d \left[ X,X\right]^c_t + \d J_t\right)
                - B^\varphi_{t-} \left( \d \left[X,X\right]^c_t + \d
                J_t\right)\\
    \nonumber &= \langle g_{t-},
                A^*\varphi \rangle_{H} \d t.
  \end{align}
  Thus, $g$ is a weak solution of~\eqref{eq:gPDE}.
\end{proof}

\begin{ex}
  Let $A$ be the generator of a strongly continuous semigroup
  $(S_t)_{t\geq 0}$. Then, for $h_0\in H$ define
  \[ g_t := S_t h_0,\quad t\geq 0, \]
  which is a weak solution of~\eqref{eq:gPDE}. By Theorem~\ref{thm:homo}.
  \[u_t := \mathcal{E}_t(X) S_t h_0, \qquad t\geq 0, \]
  is a weak solution of~\eqref{eq:gSPDE}.
\end{ex}

\begin{rmk}
  If $h_0$ is an eigenfunction of $A$ with eigenvalue $\nu$, then,
  $g_t = e^{\nu t} h_0$ is the unique locally $H$-integrable solution
  of~\eqref{eq:gPDE}, and the unique solution of \eqref{eq:gSPDE} is given by 
  \[ u_t := h_0 e^{\nu t} \mathcal{E}_t (X).\]
\end{rmk}

\subsection{Inhomogeneous equations}
We keep the assumptions on $A$, $h_0$ and $X$ from the previous
section and let $f\in H$. We now consider the inhomogeneous
linear evolution equations
\begin{gather}
  \label{eq:gSPDEinh}
  \begin{split}
    \d u_t &= \left[ A u_{t} + \alpha f \right] \d t +   u_{t-} \d X_t,\qquad t\geq 0,\\
    u_0 &= h_0.
  \end{split}
\end{gather}
\begin{df}
  A weak solution of~\eqref{eq:gSPDEinh} is an adapted
  $H$-valued stochastic process $u$ such that for all $\varphi \in
  \dom(A^*)$ the mapping $[0,\infty)\ni t\mapsto \iprod{u_t}{\varphi}{H}$
  is c\`{a}dl\`{a}g and 
  \[\langle u_t,\varphi \rangle_{H} - \langle h_0, \varphi \rangle_{H} =
    \int_0^t \langle u_{s-}, A^*\varphi \rangle_{H} \d s+ \int_0^t
    \langle u_{s-}, \varphi \rangle_{H} \d X_s + t\alpha \langle
    f,\varphi\rangle_H,\quad t\geq 0,\]
  almost surely. 
\end{df}
We exclude the cases $\alpha = 0$ or $f\equiv 0$ which correspond to
the homogeneous case discussed above. Let us first consider
the case where $A$ admits at least one eigenfunction. 
\begin{thm}\label{thm:inhomo}
  Suppose that $f\in \dom(A)$ is an eigenfunction for $A$ with eigenvalue $\lambda \in \R$, and let $z_0 >0$ and $Z$ be the solution of
  \begin{equation} 
    \label{eq:Z}
    \d Z_t = \left(\lambda Z_{t-} 
      +  \alpha\right) \d t + Z_{t-} \d X_t,\quad t\geq 0, \quad Z_0 = z_0. 
  \end{equation}
  Then:
  \begin{enumerate}[label=(\roman*)]
  \item\label{inhomo:i:1factor} 
    The stochastic process defined by $u_t = Z_t f$, $t\geq 0$, is a
    solution of \eqref{eq:gSPDEinh} with initial condition $h_0 := z_0
    f$.
  \item\label{inhomo:i:2factor} Let, in addition, $h_0\in H$ be such that there exists a
    weak solution $g = (g_t)_{t\geq 0}$ of the deterministic equation
    \begin{equation}
      \label{eq:detPDEh0f}
      \ddt g_t = A g_t,\; t\geq 0\,\qquad  g_0 = h_0 -z_0 f.
    \end{equation}
    Then, $u_t:= g_t\mathcal{E}_t(X) +  f Z_t$ is a solution
    of~\eqref{eq:gSPDEinh} with initial condition $h_0$. 
  \item\label{inhomo:i:2factor-nec} Let $h_0\in H$ be such that there exists a weak solution
    $u = (u_t)_{t\geq 0}$ of~\eqref{eq:gSPDEinh} with initial condition
    $h_0$. Then, $g := (u - f Z)\mathcal{E}(X)^{-1}$, is a weak solution
    of~\eqref{eq:detPDEh0f}.
\end{enumerate}
\end{thm}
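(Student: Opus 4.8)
The plan is to handle the three parts in order: I would prove \ref{inhomo:i:1factor} by a direct Itô computation, and then obtain both \ref{inhomo:i:2factor} and \ref{inhomo:i:2factor-nec} from it by combining the linearity of the weak formulation with Theorem~\ref{thm:homo}. The whole argument rests on the observation that the particular solution $fZ$ isolates the inhomogeneity, so that everything else reduces to the homogeneous theory already established.

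For \ref{inhomo:i:1factor} I would fix $\varphi\in\dom(A^*)$ and test $u_t=Z_tf$ against it. Since $Z$ is scalar and càdlàg, $\iprod{u_t}{\varphi}{H}=Z_t\iprod{f}{\varphi}{H}$ has càdlàg paths, and because $f\in\dom(A)$ is an eigenfunction with $Af=\lambda f$ we have the identity $\iprod{f}{A^*\varphi}{H}=\iprod{Af}{\varphi}{H}=\lambda\iprod{f}{\varphi}{H}$. Substituting the SDE~\eqref{eq:Z} into $\d\iprod{u_t}{\varphi}{H}=\iprod{f}{\varphi}{H}\,\d Z_t$ and regrouping, the $\lambda Z_{t-}\,\d t$ contribution becomes $\iprod{u_{t-}}{A^*\varphi}{H}\,\d t$ via the eigenvalue relation, the $Z_{t-}\,\d X_t$ term becomes $\iprod{u_{t-}}{\varphi}{H}\,\d X_t$, and the constant $\alpha\,\d t$ term becomes $\alpha\iprod{f}{\varphi}{H}\,\d t$; integrating and using $Z_0=z_0$ yields precisely the weak formulation of~\eqref{eq:gSPDEinh} with $h_0=z_0 f$. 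The required integrability of the resulting integrals is automatic, $Z$ being a semimartingale and $f$ fixed.

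For \ref{inhomo:i:2factor} the point is that the weak formulation is linear in $u$ up to the single additive source term $t\alpha\iprod{f}{\varphi}{H}$. By Theorem~\ref{thm:homo} the process $v_t:=g_t\mathcal{E}_t(X)$ is a weak solution of the homogeneous equation~\eqref{eq:gSPDE} with datum $g_0=h_0-z_0f$, while by \ref{inhomo:i:1factor} the process $w_t:=fZ_t$ solves~\eqref{eq:gSPDEinh} with datum $z_0f$. Adding the two weak identities, the homogeneous contributions assemble into the homogeneous part for $u=v+w$ and exactly one copy of the source term survives, so $u$ solves~\eqref{eq:gSPDEinh}; the initial conditions add to $(h_0-z_0f)+z_0f=h_0$. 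For \ref{inhomo:i:2factor-nec} I would run this in reverse: given a weak solution $u$ with datum $h_0$, subtracting the particular solution $fZ$ from \ref{inhomo:i:1factor} cancels the two $t\alpha\iprod{f}{\varphi}{H}$ terms, so $v:=u-fZ$ is a weak solution of the \emph{homogeneous} equation~\eqref{eq:gSPDE} with datum $h_0-z_0f$. Applying the converse direction of Theorem~\ref{thm:homo} then identifies $g=v\,\mathcal{E}(X)^{-1}=(u-fZ)\mathcal{E}(X)^{-1}$ as a weak solution of~\eqref{eq:detPDEh0f}, with $g_0=h_0-z_0f$ since $\mathcal{E}_0(X)=1$.

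The only genuinely delicate step is \ref{inhomo:i:2factor-nec}: one must invoke Theorem~\ref{thm:homo} in the nontrivial direction that recovers a deterministic weak solution from an \emph{arbitrary} homogeneous SPDE solution, namely the computation in its proof showing that multiplying by the reciprocal $\mathcal{E}(X)^{-1}=\mathcal{E}(Y)$ (with Lemma~\ref{lem:ExpRec} supplying $[X,Y]$ and the cancellation of the quadratic-variation terms) annihilates the $\d X$ part. One should also check that the subtraction $u-fZ$ is meaningful at the level of the càdlàg real-valued processes $t\mapsto\iprod{u_t}{\varphi}{H}$ before the reciprocal stochastic exponential is applied. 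The computations in \ref{inhomo:i:1factor} are then routine Itô bookkeeping once the eigenvalue relation $\iprod{f}{A^*\varphi}{H}=\lambda\iprod{f}{\varphi}{H}$ is recorded.
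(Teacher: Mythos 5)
Your proposal is correct and takes essentially the same route as the paper's proof: part \ref{inhomo:i:1factor} by the identical direct computation testing $u_t=Z_tf$ against $\varphi\in\dom(A^*)$ via the eigenvalue relation $\iprod{f}{A^*\varphi}{H}=\lambda\iprod{f}{\varphi}{H}$, and parts \ref{inhomo:i:2factor} and \ref{inhomo:i:2factor-nec} via the decomposition $u=u^{\circ,(h_0-z_0f)}+u^{(z_0f)}$ combined with both directions of Theorem~\ref{thm:homo}. You merely spell out the linearity bookkeeping and the appeal to the converse direction of Theorem~\ref{thm:homo} (multiplication by $\mathcal{E}(X)^{-1}=\mathcal{E}(Y)$ via Lemma~\ref{lem:ExpRec}), which the paper compresses into a single sentence.
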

\begin{rmk}
  Let $(Z_t^1)_{t\geq 0}$ and $(Z_t^2)_{t\geq 0}$ be given by
  \eqref{eq:Z} with respective initial data $z_1$, $z_2>0$, $z_1\neq
  z_2$. Then, in fact $Z_t^2 - Z_t^1 = (z_2-z_1) \mathcal{E}_t(X)$,
  which is consistent with choosing different values for $z_0$ in  \ref{inhomo:i:2factor}. 
\end{rmk}
\begin{proof}
 Part~\ref{inhomo:i:1factor} follows by direct a computation: Let $\varphi \in H$, then for $t\geq 0$,
  \begin{multline}
    \label{eq:15}
    d\iprod{u_t}{\varphi}{H} = \iprod{f}{\varphi}{H} dZ_t = \\
    =\iprod{f}{\varphi}{H}\left(\lambda Z_{t-} + \alpha\right)dt +
     \iprod{f}{\varphi}{H} Z_{t-} d X_t\\
     =\left[\iprod{u_{t-}}{A^*\varphi}{H}  + \alpha \iprod{f}{\phi}{H} \right] dt + \iprod{u_{t-}}{\phi}{H} dX_t.
  \end{multline}  
  Similarly, we obtain that any solution $u$ of~\eqref{eq:gSPDEinh}
  with initial data $h_0\in H$ can be written as
  \[ u = u^{\circ, (h_0-z_0f)} + u^{(z_0f)}\]
  where $u^{\circ,(h_0-z_0 f)}$ is the solution of the
  homogeneous problem~\eqref{eq:gSPDE} with initial data $h_0-z_0 f$
  and $u^{(z_0f)}$ is a solution of~\eqref{eq:gSPDEinh} with initial data
  $z_0f$. Then, part (i) and Theorem~\ref{thm:homo} finish the proof of
  (ii) and (iii). 
\end{proof}
  It is then readily verified using It\^o's formula that the unique solution of~\eqref{eq:Z} is given by
  \begin{equation}
    \label{eq:Zexplicit}
    Z_t:=  \mathcal{E}_t(X)e^{\lambda t}\left( Z_0 + \alpha \int_0^t e^{-\lambda s} \mathcal{E}_{s-}(Y) \d s\right),\qquad t\geq 0.
  \end{equation}
  where   \[ Y_t:= -X_t + [X,X]^c_t + \sum_{s\leq t} \frac{\Delta X_s^2}{1+
      \Delta X_s},\qquad t\geq 0.\]

We now focus on the case $X = \sigma W$ for a real Brownian
motion $W$ and a constant $\sigma>0$. Then, we will consider {\it regular} two-dimensional realizations of the form $u_t = \Phi(t,Y_t)$, where
\begin{enumerate}[label=(\alph*)]
\item $Y$ is a diffusion process with state space $J \subseteq \R$, satisfying
  \[\d Y_t = b(Y_t) \d t + a(Y_t) \d W_t,\]
  for  measurable functions $b$, $a\colon J\to \R$, where $J$ has non-empty interior, $a(y) > 0$ for all $y \in J$ and $1/a$ is locally integrable on $J$.
\item $\Phi \colon [0,\infty) \times J \to  \dom(A)$ such that for
  all $\varphi \in \dom(A^*)$, the maps defined by
  $\Phi^\varphi(t,y) := \langle \Phi(t,y), \varphi\rangle$, $t\geq 0$, $y\in J$, are in $C^{1,2}(\R_{\geq 0}\times J; \R)$.
\end{enumerate}

Examples of such regular two-dimensional realizations are given by
Theorem~\ref{thm:inhomo}.(i).

\begin{thm}\label{thm:inhomo_fdr}
  Let $X_t = \sigma W_t$, $t\geq 0$, for $\sigma>0$ and a real Brownian
  motion $W$, and assume that \eqref{eq:gSPDEinh} admits a regular finite-dimensional realization $u_t= \Phi(t,Y_t)$, $t\geq 0$. Then $f$ is an eigenfunction of $A$ 
for some eigenvalue $\lambda \in\mathbb{R}$, and there exists  
an invertible transformation $h\colon J\to \R_+$  such that for $t\geq
0$, almost surely
$$ Z_t = h(Y_t), \qquad u_t = \Phi(t,h^{-1}(Z_t)) = f Z_t, $$
where $Z$ is given by \eqref{eq:Z}.
\end{thm}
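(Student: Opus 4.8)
The plan is to test the assumed realization against the weak formulation of~\eqref{eq:gSPDEinh} and read off the required structure by comparing two It\^o decompositions of the scalar process $t\mapsto\langle u_t,\varphi\rangle$. Fix $\varphi\in\dom(A^*)$ and write $\Phi^\varphi(t,y):=\langle\Phi(t,y),\varphi\rangle$. Since $\Phi^\varphi\in C^{1,2}$ and $\d Y_t=b(Y_t)\d t+a(Y_t)\d W_t$, It\^o's formula gives
\begin{equation*}
 \d\langle u_t,\varphi\rangle = \left[\partial_t\Phi^\varphi(t,Y_t) + b(Y_t)\,\partial_y\Phi^\varphi(t,Y_t) + \tfrac12 a(Y_t)^2\,\partial_{yy}\Phi^\varphi(t,Y_t)\right]\d t + a(Y_t)\,\partial_y\Phi^\varphi(t,Y_t)\,\d W_t ,
\end{equation*}
while the weak formulation of~\eqref{eq:gSPDEinh} with $X=\sigma W$ reads
\begin{equation*}
 \d\langle u_t,\varphi\rangle = \left[\langle u_t, A^*\varphi\rangle + \alpha\langle f,\varphi\rangle\right]\d t + \sigma\,\langle u_t,\varphi\rangle\,\d W_t .
\end{equation*}
By uniqueness of the semimartingale decomposition the drift and diffusion coefficients agree $\d t\otimes\d\PP$-a.e.; as all coefficients are continuous in $(t,y)$ and $Y$ (with $a>0$) has full support in $J$, the resulting identities hold pointwise for every $t\ge 0$ and every $y$ in the interior of $J$.

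Matching the diffusion coefficients yields the first-order linear ODE $a(y)\,\partial_y\Phi^\varphi(t,y)=\sigma\,\Phi^\varphi(t,y)$. First I would solve it: since $a>0$ and $1/a$ is locally integrable, setting $h(y):=\exp\!\big(\sigma\int_{y_0}^y a(\xi)^{-1}\d\xi\big)$ gives $\Phi^\varphi(t,y)=h(y)\,\Phi^\varphi(t,y_0)$ for every $\varphi$, hence $\Phi(t,y)=h(y)\,g_t$ in $H$ with $g_t:=\Phi(t,y_0)\in\dom(A)$. The map $h$ is strictly increasing, hence invertible onto a subinterval of $\R_+$, and $C^2$ because $\Phi^\varphi$ is.

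The heart of the argument, and the step I expect to be the main obstacle, is the drift identity. Substituting $\Phi(t,y)=h(y)g_t$ and using $\langle u_t,A^*\varphi\rangle=h(Y_t)\langle A g_t,\varphi\rangle$, then dividing by $h(y)>0$, gives for all $y$ and $\varphi$
\begin{equation*}
 \langle \partial_t g_t - A g_t,\varphi\rangle = -\,p(y)\,\langle g_t,\varphi\rangle + \frac{\alpha}{h(y)}\,\langle f,\varphi\rangle,\qquad p(y):=\frac{b(y)h'(y)}{h(y)}+\frac{a(y)^2 h''(y)}{2h(y)} .
\end{equation*}
Because the left-hand side is independent of $y$, subtracting this identity at two points $y_1,y_2$ with $h(y_1)\neq h(y_2)$ and using $\alpha\neq0$, $f\neq0$ forces $g_t$ to be a scalar multiple of $f$, say $g_t=c_t f$ with $c_t\in C^1$ and $c_t\neq0$ for a nontrivial realization; in particular $f=g_t/c_t\in\dom(A)$. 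Feeding this back in, $\langle A f,\varphi\rangle$ must equal $\big(c_t'/c_t + p(y) - \alpha/(c_t h(y))\big)\langle f,\varphi\rangle$, and since the left side is constant in $(t,y)$ a separation-of-variables argument (again using that $1/h$ is nonconstant) first forces $c_t\equiv c$ to be constant and then identifies the common value as an eigenvalue, giving $A f=\lambda f$. The delicate points are extracting differentiability and constancy of $c_t$ from the $C^{1,2}$ hypothesis and disentangling the mixed $(t,y)$-dependence, which simultaneously yields the relation $p(y)=\lambda+\alpha/h(y)$.

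Finally I would set $Z_t:=h(Y_t)$, absorbing $c$ into $h$ so that $u_t=\Phi(t,Y_t)=h(Y_t)g_t=f Z_t$. Applying It\^o to $Z_t=h(Y_t)$ and using $a(y)h'(y)=\sigma h(y)$ turns the diffusion term into $\sigma h(Y_t)\d W_t=Z_t\,\d X_t$, while the drift equals $h(Y_t)p(Y_t)=\lambda Z_t+\alpha$ by the relation just derived. Thus $Z$ solves~\eqref{eq:Z} with $z_0=h(Y_0)$, and the representation $u_t=\Phi(t,h^{-1}(Z_t))=fZ_t$ follows, which is consistent with Theorem~\ref{thm:inhomo}.(i) and completes the proof.
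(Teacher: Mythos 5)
Your proposal is correct and follows essentially the same route as the paper's proof: It\^o's formula applied to $\Phi^\varphi(t,Y_t)$, matching the diffusion term to obtain the separated form $\Phi^\varphi(t,y)=g^\varphi(t)h(y)$ with $h(y)=\exp\bigl(\sigma\int_{y_0}^y a(\eta)^{-1}\d\eta\bigr)$, then a two-point (separation-of-variables) comparison of the drift terms to force $g_t$ proportional to $f$, the eigenvalue relation $Af=\lambda f$, and $Z_t=h(Y_t)$ solving \eqref{eq:Z}. The only deviations are cosmetic: you work in the $y$-variable with $p(y)$ where the paper passes to $z=h(y)$ and the affine drift $m(z)=\lambda z+c$, and you use $g_t=\Phi(t,y_0)\in\dom(A)$ directly from the regularity hypothesis, which lets you bypass the paper's Riesz-representation and $A=A^{**}$ step.
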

\begin{proof}
  Let $\varphi \in \dom(A^*)$, and 
  \begin{equation}
    \label{eq:18}
    \Phi^\varphi(t,Y_t) := \iprod{\Phi(t,Y_t)}{\varphi}{}.
  \end{equation}
  An application of the It\^{o} formula yields
  \begin{multline}\label{eq:dux}
    \d \iprod{u_t}{\varphi}{} = d\Phi^\varphi(t,Y_t)=\\
    = \left(\partial_t \Phi^\varphi(t,Y_t) + \partial_y
      b(Y_t)\Phi^\varphi(t,Y_t)  + \frac{1}{2}a^2(Y_t) \partial_{yy}
      \Phi^\varphi(t,Y_t)  \right) dt + \\ 
    + a(Y_t)\partial_y \Phi^\varphi(t,Y_t)  dW_t.
  \end{multline}
  Comparing the martingale term  with \eqref{eq:gSPDEinh}, we see that $\Phi^\varphi$ satisfies the ODE
  \[\partial_y \Phi^\varphi(t,Y_t) = \frac{\sigma \Phi^\varphi(t,Y_t)}{a(Y_t)},\]
  $dt \otimes d\PP$-a.\,e. and hence, $\Phi^\varphi$ must be of the form 
  \[\Phi^\varphi(t,y) = g^\varphi(t) h(y) = g^\varphi(t) \exp\left(\int_{y_0}^y \frac{\sigma d\eta}{a(\eta)}\right),\qquad t\geq 0,\, y\in J,\]
  for some $g^\varphi \in C^{1}(\R_{\geq 0})$ and $y_0$ in the
  interior of $J$. The regularity property of the representation
  guarantees that $h$ is well-defined and strictly monotone
  increasing. We stress that $h$ is in fact independent of $\varphi
  \in \dom(A^*)$.
  Setting $Z_t = h(Y_t)$, we see that $Z$ satisfies
  \[dZ_t =  m(Z_t) dt  +   \sigma Z_t dW_t\]
  for the drift function $m = (b h') \circ h^{-1} + \frac{1}{2}(a^2
  h'') \circ h^{-1}$. 
  
  Note that for each $t\geq 0$, the mapping
  $\varphi \mapsto g^\varphi(t)$ is linear continuous from
  $\dom(A^*)\subset H$ into $\R$. Since $\dom(A^*)\subset H$ is dense,
  by Riesz representation theorem for each $t\geq 0$ there exists $g(t)\in H$ such that
  \begin{equation}
    \label{eq:17}
    \iprod{g(t)}{\varphi}{} = g^\varphi(t). 
  \end{equation}
  Since $\Phi^\varphi(t,y) = g^\varphi(t) h(y)$, $g^\varphi$ is differentiable and \eqref{eq:dux} becomes, for $\varphi \in \dom(A^{*})$,
  \[d\iprod{u_t}{\varphi}{} = \left( Z_t \partial_t g^\varphi(t) + g^\varphi(t) m(Z_t) \right) dt  + g^\varphi(t) Z_t dW_t.\] 
  Comparing the drift terms with \eqref{eq:gSPDEinh} yields for $t \geq 0$, $\varphi \in \dom(A^*)$ and $z \in h(J)$,
  \begin{equation}\label{eq:redux0}
    z\left( \iprod{g(t)}{A^*\varphi}{} -  \partial_t g^\varphi(t)\right) + \alpha\iprod{f}{\varphi}{} = m(z) g^\varphi(t).
  \end{equation}
  Evaluating at two different points $z_0, z_1 \in h(J)$ and subtracting we obtain that 
  \[( \iprod{g(t)}{A^*\varphi}{} - \partial_t \iprod{g(t)}{\varphi}{}) \cdot (z_1 - z_0) = \iprod{g(t)}{\varphi}{}  \cdot (m(z_1) - m(z_0)),\]
  for all $t \in \R_{\geq 0}$, $\varphi \in \dom(A^*)$ and $z_0,z_1 \in h(J)$.
  We conclude that there exists a constant $\lambda \in \R$ such that
  \begin{align}
    \iprod{g(t)}{A^* \varphi}{} - \partial_t \iprod{g(t)}{\varphi}{} &= \lambda\iprod{g(t)}{\varphi}{}\label{eq:Alambda}\\
    \intertext{and}
    m(z_1) - m(z_0) &= \lambda (z_1 - z_0).\label{eq:az}
  \end{align}
  Thus $m$ must be of the form $m(z) = \lambda z + c$ for $c:= m(0)$. Inserting into \eqref{eq:redux0} we obtain that
  \[\alpha \iprod{f}{\varphi}{} = c \iprod{g(t)}{\varphi}{},\qquad
    \forall \varphi \in \dom(A^*).\]
  Since $\dom(A^*)\subset H$ is dense the equation holds  for all $\varphi \in H$.  Due to the
  assumption that $\alpha \neq 0$ and  $f$ is non-zero, also $c \neq
  0$ and we get $g(t) =  \frac{\alpha}{c} f$. In particular, $g(t)$ is
  independent of $t$ and \eqref{eq:Alambda} yields 
  \begin{equation*}
    \langle f, A^*\varphi \rangle =  \left\langle \tfrac{\lambda c}{\alpha}
    f,\varphi\right\rangle \qquad \forall \varphi \in \dom(A^*).
  \end{equation*}
  This means that $f\in \dom(A^{**})$. Since $A=A^{**}$, see e.\,g. \cite[Theorem
  VII.2.3]{yosida1995functional}, we have $f\in \dom(A)$ and
  \begin{equation*}
    \langle f, A^* \varphi \rangle = \langle A f,\varphi \rangle = \lambda  \langle
    f,\varphi\rangle, \qquad \forall \varphi \in \dom(A^*).
  \end{equation*}
  By density of $\dom(A^*)$ in $H$ this yields that $Af = \lambda f$, i.e. $f$ must be an eigenfunction of $A$ with eigenvalue $\lambda$. 

  Putting everything together, we have shown that $u_t = \frac{\alpha}{c} f Z_t$ where 
  \[dZ_t = \left(\lambda Z_t  + c \right) dt +  \sigma Z_t dW_t.\]
  Rescaling $Z$ by $\frac{\alpha}{c}$ concludes the proof.
\end{proof}

\subsection{Linear SDEs \& Pearson diffusions}\label{sec:pearson}
Let again $X_t = \sigma W_t$ for some $\sigma>0$ and a real Brownian
motion $W$. The factor processes $Z$ appearing above are then special cases of the linear SDE
\begin{equation}\label{eq:linearSDE}
  dZ_t = (aZ_t + c) dt + (bZ_t + d) dW_t, \quad t\geq 0, \quad Z_0 = z_0,
\end{equation}
studied e.g. in \cite[Ch.~4]{Kloeden1992} or
\cite[Prop. 21.2]{kallenberg}. Well-known special cases are the
geometric Brownian motion ($c=d=0$) and the Ornstein-Uhlenbeck-process
($b=0$). Relevant in our context is the less common case $d=0$, on
which we focus now. Using ~\eqref{eq:Zexplicit}, the solution is given by
\begin{equation}
  Z_t = X_t \left(Z_0 + c\int_0^t X_s^{-1} ds\right),\quad t\geq 0,
\end{equation}
where
\begin{equation}
  X_t  = \exp\left((a -
    \frac{b^2}{2})t + bW_t\right), \quad t\geq 0. \label{eq:gBM_toinv}
\end{equation}
Solutions of \eqref{eq:linearSDE} have also been studied in the
context of reciprocal gamma diffusions (see e.g. the `Case~4' in
\cite{forman2008pearson}) or also Pearson diffusions. These are generalizations of
\eqref{eq:linearSDE} that allow for a square-root term in the
diffusion coefficient. 

\begin{prop}\label{prop:pearson_ergodicity}
  Assume that $z_0>0$, $a<0$ and $c>0$. Then, $Z$ has   unique invariant distribution
  $\varpi$, which  is an Inverse Gamma distribution with shape parameter $1-\frac{2a}{b^2}$ and scale
  parameter $\frac{b^2}{2c}$ and, for any bounded  measurable function
  $\phi \colon (0,\infty)\to \R$,
  \[\lim_{t \to \infty} \E{\phi(Z_t)} = \lim_{t\to\infty} \frac1{t} \int_0^t \phi(Z_s)\d s=  \int_0^\infty \phi(x)\varpi(\d
    x).\]
\end{prop}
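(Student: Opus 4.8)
The plan is to analyse $Z$ as a regular one-dimensional, time-homogeneous diffusion on the open half-line $(0,\infty)$ and to read off its invariant law and ergodic behaviour from the scale function and speed measure. Since $d=0$, the coefficients $z\mapsto az+c$ and $z\mapsto bz$ are smooth with $bz>0$ on $(0,\infty)$, so the SDE has a pathwise unique solution up to the first exit time of $(0,\infty)$, with generator
\[ \cL f(z) = (az+c)\, f'(z) + \tfrac12 b^2 z^2\, f''(z),\qquad z>0. \]
First I would record the scale and speed densities. Integrating $2(ay+c)/(b^2 y^2)$ gives the scale density
\[ s'(z) = z^{-2a/b^2}\exp\!\left(\tfrac{2c}{b^2 z}\right), \]
up to a positive constant and a fixed lower limit, and hence the speed density
\[ m(z) = \frac{1}{b^2 z^2\, s'(z)} = \frac{1}{b^2}\, z^{2a/b^2-2}\exp\!\left(-\tfrac{2c}{b^2 z}\right),\qquad z>0. \]

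Next I would carry out Feller's boundary classification to confirm that the state space is genuinely $(0,\infty)$ and that $Z$ never leaves it. As $z\to 0^+$ the factor $\exp(2c/(b^2 z))$ blows up faster than any power, so $\int_{0^+} s'(y)\d y=\infty$, i.e.\ $s(0^+)=-\infty$; as $z\to\infty$, since $a<0$ the exponent $-2a/b^2>0$ forces $\int^{\infty} s'(y)\d y=\infty$, i.e.\ $s(\infty)=+\infty$. Both boundaries thus lie at infinite scale distance and are inaccessible (checking the associated Feller integrals, $0$ is of entrance type and $\infty$ is natural), so starting from $z_0>0$ the process remains in $(0,\infty)$ for all $t\geq 0$ without explosion. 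I would then verify that the speed measure is finite: near $0$ the factor $\exp(-2c/(b^2 z))$ (with $c>0$) dominates the polynomial singularity $z^{2a/b^2-2}$, while near $\infty$ the exponent $2a/b^2-2<-1$ (again because $a<0$) ensures integrability, so $\int_0^\infty m(z)\,\d z<\infty$. Consequently $Z$ is positive recurrent, its normalized speed measure is the unique invariant probability $\varpi$, and its density is proportional to $z^{2a/b^2-2}\exp(-2c/(b^2 z))$, which is exactly the Inverse-Gamma density with the shape $1-2a/b^2$ and scale $b^2/(2c)$ claimed in the statement; identifying the normalizing constant is routine.

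Finally I would deduce the two ergodic limits. The almost sure time-average convergence $\tfrac1t\int_0^t\phi(Z_s)\,\d s\to\int\phi\,\d\varpi$ is the ratio-ergodic theorem for positive recurrent one-dimensional diffusions, valid for every $\varpi$-integrable, in particular every bounded measurable, $\phi$ (see, e.g., \cite{forman2008pearson}). For the convergence of the marginals $\E{\phi(Z_t)}\to\int\phi\,\d\varpi$ I would use that on $(0,\infty)$ the diffusion is nondegenerate with smooth coefficients, hence admits a smooth, strictly positive transition density; the semigroup is therefore strong Feller and $\varpi$-irreducible, and positive Harris recurrence together with aperiodicity yields convergence of $\PP[Z_t\in\cdot]$ to $\varpi$ in total variation, which gives the stated limit for all bounded measurable $\phi$.

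I expect the main obstacle to be the boundary analysis at $0$: the positivity of $Z$, the finiteness of the speed measure, and hence the very existence and uniqueness of $\varpi$ all hinge on showing that $0$ is an entrance (inaccessible) boundary rather than an exit, which is precisely what the divergence $s(0^+)=-\infty$ together with the integrability of $m$ near $0$ encodes. A secondary subtlety is that the second limit is asserted for bounded \emph{measurable} $\phi$, so weak convergence alone does not suffice; the argument must genuinely produce total-variation convergence, which is why the smoothing (strong Feller) property of the transition density is needed rather than mere tightness and uniqueness of the invariant law.
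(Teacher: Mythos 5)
Your proposal is correct and follows essentially the same route as the paper: compute the scale density and speed measure, deduce that $Z$ is positive recurrent on $(0,\infty)$ with finite speed measure whose normalization is the Inverse Gamma invariant law, and then invoke ergodic limit theorems for one-dimensional diffusions. The only differences are in presentation: where the paper cites references (Karatzas--Shreve for recurrence, Borodin--Salminen and Revuz--Yor for the two limits), you carry out the boundary classification and the strong-Feller/Harris argument for total-variation convergence yourself --- and, incidentally, your densities $s'(z)=z^{-2a/b^{2}}e^{2c/(b^{2}z)}$ and $m(z)\propto z^{2a/b^{2}-2}e^{-2c/(b^{2}z)}$ are the correct ones, the paper's displayed formulas having a typo with $c$ in place of $-a$ in the power.
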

\begin{proof}
  First, note that
  \[ s'(x) := x^{-2\frac{a}{b^2}} e^{2 \frac{c}{b^2 x}},\qquad m(\d x):=
    x^{2(\frac{a}{b^2} - 1)} e^{-2\frac{c}{b^2 x}} \d x,\quad x\in
    (0,\infty)\]
  define a scale density and speed measure for $Z$. Then, one can easily
  verify that $Z$ is strictly positive and recurrent on $(0,\infty)$, see
  e.\,g. \cite[Prop. 5.5.22]{karatzas2012brownian}. Moreover, $m((0,\infty))
  <\infty$ and so the unique invariant distribution of $Z$ is
  \begin{equation}
    \label{eq:7}
    \varpi(A) := \frac{m(A)}{m((0,\infty))}. 
  \end{equation}
  The remaining results then follow from e.\,g. \cite[II.35]{borodin2012handbook} or
  \cite[X.3.12]{ry}.
\end{proof}
 $\mu(t) := \EE Z_t$, $t\geq 0$, satsifies the ODE
\begin{equation*}
  \ddt \mu(t) = a \mu(t) + c,\; t>0,\;\qquad \mu(0) = Z_0.
\end{equation*}
Thus,
\begin{equation}
  \label{eq:meanPearson}
  \mu(t) = \left(Z_0 + \frac{c}{a} \right)e^{a t}  -\frac{c}{a}.
\end{equation}
\begin{rmk}\label{rmk:autocorr}
  Let $a<0$, $c>0$ and $(Z_t)$ be the stationary solution of
  \[ \d Z_t = \left( aZ_t + c\right) \d t + b Z_t\d W_t,\]
  that is, $Z_0$ is chosen distributed according to inverse gamma distribution with shape parameter $1-\frac{2a}{b^2}$ and scale parameter $\frac{b^2}{2c}$. Then, as shown in \cite{bibby2005diffusion}, the autocorrelation function of $(Z_t)$ is given by
  \begin{equation}
    \label{eq:autocorr}
    r(t) := \operatorname{Corr}(Z_{s+t}, Z_s) = e^{at},\qquad s,\,t\geq 0. 
  \end{equation}
\end{rmk}
To study price dynamics it is  also useful to examine  the reciprocal process $Y=1/Z$. When 
$d=0$, $Y = 1/Z$ is the unique solution of 
  \begin{equation}
    \label{eq:logistic}
    \d Y_t =- Y_t (a -b^2 +  c Y_t) \d t - b Y_t \d W_t,\qquad Y_0 = z_0^{-1}.
  \end{equation}
  In particular, with $X$ given in~\eqref{eq:gBM_toinv},
  \begin{equation}
    \label{eq:6}
    Y_t = \mathcal{E}_t(-b W_. -a (.)) \left(Z_0 + c \int_0^tX_s^{-1}\d s\right)^{-1},\quad t\geq 0.
  \end{equation}
When $a<b^2$, ~\eqref{eq:logistic} is called the stochastic logistic  equation.

\subsection{Positivity, stationarity and martingale property}\label{sec.martingale}
Let us first come back to the linear homogeneous situation. On
average, market makers do not accumulate inventory, which suggests to consider the baseline case of {\it balanced order flow}for  which  $X$ is a (local)
martingale. 
 If $X $ is a local martingale with $\Delta X>-1$ a.\,s., then, from the properties of stochastic exponentials we obtain that:
\begin{itemize}
\item   The    weak solution $u_t$ of the homogeneous equation \eqref{eq:gSPDE}  is a local martingale,
    if and only if the initial condition $h_0$ is $A-$harmonic: $h_0\in \dom(A)$ and $Ah_0 = 0$. 
  \item  If $\mathcal{E}(M)$ is a  martingale and $Ah_0 = 0$, then $(u_t)_{t\geq 0}$ is a martingale.
  \end{itemize}

In the Brownian motion case, from the discussion in the previous section we directly obtain: 
\begin{cor}\label{eq:stationarity_inh}
  Let $X= \sigma W$ where $W$ is  a standard Brownian motion   and $\sigma >0$, and  $u$ be the solution of the inhomogeneous equation \eqref{eq:gSPDEinh}, where $f$ is an eigenfunction of $A$ with eigenvalue $-\nu$ and $h_0 = z_0 f$, for some $z_0>0$. If $\nu > 0$ and $\alpha > 0$, then 
  $$u_t \mathop{\Rightarrow}^{t\to \infty}\quad f Z_\infty$$ where $Z_\infty$ has an Inverse Gamma distribution with shape parameter $1 + 2 \frac{\nu}{\sigma^2}$ and scale parameter $\frac{\sigma^2}{2\alpha}$.
\end{cor}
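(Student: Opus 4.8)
The plan is to reduce Corollary~\ref{eq:stationarity_inh} to the ergodicity result already established in Proposition~\ref{prop:pearson_ergodicity}, by first pinning down the precise scalar SDE governing the factor process. Since $f$ is an eigenfunction of $-A$ with eigenvalue $\nu$, it is an eigenfunction of $A$ with eigenvalue $\lambda = -\nu$. Applying Theorem~\ref{thm:inhomo}.\ref{inhomo:i:1factor} with $X = \sigma W$ and the initial condition $h_0 = z_0 f$, we obtain that the solution is $u_t = Z_t f$, where $Z$ solves
\begin{equation*}
  \d Z_t = \left(-\nu Z_{t} + \alpha\right)\d t + \sigma Z_{t}\,\d W_t,\qquad Z_0 = z_0.
\end{equation*}
This is exactly the linear SDE~\eqref{eq:linearSDE} with $d = 0$ and coefficient identification $a = -\nu$, $b = \sigma$, $c = \alpha$.

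First I would verify that the hypotheses of Proposition~\ref{prop:pearson_ergodicity} are met: the assumption $\nu > 0$ gives $a = -\nu < 0$, the assumption $\alpha > 0$ gives $c = \alpha > 0$, and $z_0 > 0$ is given. Hence that proposition applies verbatim and yields that $Z$ has a unique invariant distribution $\varpi$, namely an Inverse Gamma law. Reading off the parameters from the statement of Proposition~\ref{prop:pearson_ergodicity}, the shape parameter is $1 - \frac{2a}{b^2} = 1 + \frac{2\nu}{\sigma^2}$ and the scale parameter is $\frac{b^2}{2c} = \frac{\sigma^2}{2\alpha}$, which match exactly the parameters claimed in the corollary. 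Proposition~\ref{prop:pearson_ergodicity} moreover provides convergence $\E{\phi(Z_t)} \to \int \phi\,\d\varpi$ for every bounded measurable $\phi$, which is precisely convergence in distribution $Z_t \Rightarrow Z_\infty$ with $Z_\infty \sim \varpi$.

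The last step is to transfer this scalar weak convergence to the $H$-valued convergence $u_t \Rightarrow f Z_\infty$. Because $u_t = Z_t f$ with the fixed vector $f \in H$, the map $z \mapsto z f$ is continuous (indeed linear and bounded) from $\R$ into $H$, so the continuous mapping theorem immediately upgrades $Z_t \Rightarrow Z_\infty$ to $Z_t f \Rightarrow Z_\infty f$ in $H$. This completes the argument.

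I expect no serious obstacle here, since all the analytical work has already been done upstream; the proof is essentially a bookkeeping exercise in matching coefficients and invoking the continuous mapping theorem. The only point demanding mild care is the sign and parameter translation between the eigenvalue convention ($f$ an eigenfunction of $-A$ with eigenvalue $\nu$, so $\lambda = -\nu$) and the SDE coefficient convention in~\eqref{eq:linearSDE}; getting $a = -\nu$ rather than $a = \nu$ is what makes the shape parameter come out as $1 + \frac{2\nu}{\sigma^2}$ with the correct sign. One should also note that the convergence statement is genuinely in distribution (the Cesàro/time-average form in Proposition~\ref{prop:pearson_ergodicity} is stronger but not needed), and that the nondegeneracy conditions $\alpha \neq 0$ and $f \neq 0$ implicit in the earlier theorems are consistent with the standing assumption $\alpha > 0$.
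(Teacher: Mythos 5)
Your proof is correct and is exactly the route the paper intends: the paper states this corollary with no written proof beyond ``From the discussion in the previous subsection we directly obtain,'' meaning precisely your chain of Theorem~\ref{thm:inhomo}\ref{inhomo:i:1factor} with $\lambda=-\nu$, followed by Proposition~\ref{prop:pearson_ergodicity} with $a=-\nu<0$, $b=\sigma$, $c=\alpha>0$. Your explicit verification of the coefficient dictionary and the final continuous-mapping step $Z_t f \Rightarrow Z_\infty f$ in $H$ simply fills in the details the paper leaves implicit.
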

\begin{rmk}\label{rmk:invgamma}
  The Inverse Gamma distribution has a Pareto (right) tail with tail index  $1 + 2 \nu$ in this case: the $k$-th moment of $\mathbb{E}(Z_\infty^k)<\infty$  if and only if $k < 1 + 2\nu$.
\end{rmk}

So far, we have set aside  the positivity constraint for $u$. By Theorem~\ref{thm:homo} this reduces to analysis of the deterministic equation. In the case of second-order elliptic operators, positivity results from the comparison principle, whenever the initial condition $h_0$ is positive:
\begin{ass}\label{ass:elliptic}
  Let $I\subset \R$ be an interval and suppose that $A$ is a uniformly elliptic operator of the form 
  \[Au(x) = \eta(x) \Delta u(x) + \beta(x) \nabla u(x) + \alpha(x) u(x), \qquad x\in I,\]
  with Dirichlet boundary conditions, and where $\eta, \beta$ and $\alpha$ are smooth and bounded coefficients, and in particular $\eta(x) \geq \underline{\eta} >0$ for all $x\in I$.
\end{ass}

In addition, the principal eigenvalue of $A$,  $\lambda_1$ has  an eigenfunction $f$  which is positive on $I$  \cite[Sec.~6.5]{Evans2010}. Note that the factor process $Z_t$ has state space $(0,\infty)$ both in Theorem~\ref{thm:homo} and \ref{thm:homo}. We thus obtain the following corollary.
\begin{cor}[Positivity]
  Under Assumption~\ref{ass:elliptic},
  \begin{enumerate}[label=(\roman*)]
  \item If $h_0$ is positive on $I$, then the solution $g_t$ of \eqref{eq:gPDE} and the solution $u_t$ of \eqref{eq:gSPDE} are a.s. positive on $I$.
  \item If $f$ is the principal eigenfunction of $A$, then the finite-dimensional realization $u_t = f Z_t$ of \eqref{eq:gSPDEinh} is a.s. positive on $I$.
  \end{enumerate} 
\end{cor}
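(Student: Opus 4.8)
The plan is to read both assertions as immediate consequences of the factorizations already assembled in the excerpt, writing $u_t$ as a product of a deterministic spatial profile and a strictly positive scalar factor, and then checking positivity of each factor separately. No new analytic machinery is needed beyond the strong parabolic maximum principle and the spectral facts recorded just before the statement.

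For part (i) I would start from Theorem~\ref{thm:homo}, which represents every weak solution of~\eqref{eq:gSPDE} as $u_t = g_t \mathcal{E}_t(X)$, where $g$ is the weak solution of the deterministic equation~\eqref{eq:gPDE} with the same initial datum $h_0$. Under Assumption~\ref{ass:elliptic} the operator $A$ is uniformly elliptic with Dirichlet data and smooth bounded coefficients, so by interior parabolic regularity $g_t$ is classical on the interior of $I$ for $t>0$, and the strong parabolic maximum principle (\cite[Sec.~7.1]{Evans2010}) forces $g_t>0$ there whenever $h_0\geq 0$ is not identically zero; the only vanishing is on the boundary, imposed by the Dirichlet condition. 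It then remains to note that the scalar factor is strictly positive. The Dol\'eans--Dade exponential has the product representation $\mathcal{E}_t(X) = \exp\left(X_t - \tfrac12 [X,X]^c_t\right)\prod_{s\leq t}(1+\Delta X_s)e^{-\Delta X_s}$, in which every jump factor $1+\Delta X_s$ is strictly positive precisely by the standing hypothesis $\Delta X_s>-1$ a.s. Hence $\mathcal{E}_t(X)>0$ a.s., and $u_t = g_t\mathcal{E}_t(X)>0$ a.s. on the interior of $I$.

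For part (ii) I would invoke the eigenfunction realization $u_t = f Z_t$ from Theorem~\ref{thm:inhomo}.(i), where $Z$ solves the scalar equation~\eqref{eq:Z}. The spectral theory of uniformly elliptic operators with Dirichlet boundary conditions (\cite[Sec.~6.5]{Evans2010}) gives that the principal eigenvalue $\lambda$ is simple with an eigenfunction of definite sign; choosing the positive representative $f$ yields $f>0$ on the interior of $I$. Strict positivity of the scalar factor $Z$ follows from the explicit formula of Proposition~\ref{prop:Zexp}, namely $Z_t = \mathcal{E}_t(X)e^{\lambda t}\left(Z_0 + \alpha\int_0^t e^{-\lambda s}\mathcal{E}_{s-}(Y)\d s\right)$, where $Z_0>0$, $\alpha>0$, and both $\mathcal{E}_t(X)$ and $\mathcal{E}_{s-}(Y)$ are positive by Lemma~\ref{lem:ExpRec} together with the jump argument above, so the bracket is a strictly positive sum; this is exactly the state space $(0,\infty)$ for $Z$ noted before the statement. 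Multiplying, $u_t = f Z_t>0$ a.s.

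I do not expect a substantial obstacle, since all ingredients are pre-established; the only delicate points are bookkeeping ones. I would be most careful to (a) match the phrase ``positive on $I$'' with the Dirichlet behaviour, interpreting it as strict positivity on the open interval, where $g_t$ and $f$ are positive, rather than at the endpoints where they vanish, and (b) make the strict positivity of the scalar factors genuinely rigorous, i.e. deduce $\mathcal{E}_t(X)>0$ from $\Delta X>-1$ and confirm that the sign conditions $Z_0>0$, $\alpha>0$ are exactly what confines $Z$ to $(0,\infty)$ in the inhomogeneous case. These hypotheses are precisely those recorded before the corollary, so the proof reduces to assembling them.
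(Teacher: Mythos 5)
Your proposal is correct and follows essentially the same route as the paper, which states the corollary as an immediate consequence of the preceding paragraph: the strong parabolic maximum principle for $g_t$, the factorization $u_t = g_t\mathcal{E}_t(X)$ from Theorem~\ref{thm:homo} together with strict positivity of the stochastic exponential under $\Delta X>-1$, the positivity of the principal eigenfunction from \cite[Sec.~6.5]{Evans2010}, and the fact that $Z$ has state space $(0,\infty)$. Your write-up merely makes the bookkeeping explicit (the Dol\'eans--Dade product formula and the sign conditions $z_0>0$, $\alpha>0$ behind Proposition~\ref{prop:Zexp}), which is exactly what the paper leaves implicit.
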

This simple result thus guarantees the existence of a solution with the correct sign, thereby avoiding recourse to `reflected' solutions as in \cite{hambly2018} and considerably simplifying the analysis of our model.

\section{A two-factor model}
\label{sec:2factorLOB}
We now study the simplest example of  model  satisfying Assumption~\ref{ass:elliptic}, namely the case of constant coefficients $\eta_a$, $\eta_b$, $\sigma_a$, $\sigma_b >0$, $\beta_a$, $\beta_b \geq 0$, $\alpha_a$, $\alpha_b\in \R$;
\begin{equation}
  \begin{split}
    \d u_t(x) &= \left[\eta_a \Delta u_t(x) + \beta_a \nabla u_t(x) + \alpha_a u_t(x) \right]\d t 
    + \sigma_a u_t(x)  \d W^a_t ,  \quad\,x \in (0,L), \\
    \d u_t(x) &= \left[\eta_b \Delta u_t(x) - \beta_b \nabla u_t(x) + \alpha_b u_t(x) \right]\d t 
    + \sigma_b u_t(x)  \d W^b_t,  \quad\,x \in (-L,0),\\    
    u_t(x) &= 0,\quad x \in\{-L,0,L\}, \qquad
    u_0  \in L^2(-L,L). \qquad 
  \end{split} \label{eq:lSPDEhom}
\end{equation}
together with the sign condition:
\begin{equation}
  \label{eq:positivity}
  u_t(x) \leq 0, \quad x\in (-L,0),\quad \text{and}\quad u_t(x) \geq 0,\quad x\in (0,L), \;t\geq 0.
\end{equation}
In the following, we will write $u^b_0 := u_0\vert_{[-L,0]}$ and $u^a_0 := u_0\vert_{[0,L]}$.

\subsection{Spectral representation of solutions}

A spectral representation of the operator may be used to obtain an analytical solution to this model.
\begin{prop}\label{prop:ev}
  Let $I =(-L,0)$ or $I= (0,L)$ and $\eta >0$, $\beta$, $\alpha \in
  \R$, and consider the linear operator 
  \begin{equation}
    A := \eta \Delta + \beta \nabla + \alpha \Id  
  \end{equation}
  on $L^2(I)$, with $\dom(A) := \left\{u\in H^2(I)\vert\, u\vert_{\partial I} =
    0\right\}=H^2(I)\cap H^1_0(I)$.
  The eigenvalues of $-A$ are real and given by
\begin{equation} \nu_k =  -\alpha +  \frac{\eta k^2 \pi^2}{L^2} + \frac{\beta^2}{4\eta},\quad k= 1, 2, ... \label{eq:spectrum}\end{equation}
  with corresponding eigenfunctions
  \[h_k(x) := e^{-\frac{\beta}{2\eta} x} \sin\left(\tfrac{k\pi }{L} x\right),\qquad x\in I.\]
  In particular  the only positive eigenfunction is $h_1$.
\end{prop}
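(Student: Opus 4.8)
The plan is to treat the eigenvalue equation $-Ah = \nu h$ as a second-order constant-coefficient ODE with Dirichlet data and reduce it, via a similarity transform that removes the drift term, to the classical Dirichlet eigenvalue problem for the Laplacian on an interval of length $L$. Spelling out the equation, $\nu$ is an eigenvalue of $-A$ with eigenfunction $h$ precisely when $\eta h'' + \beta h' + (\alpha + \nu)h = 0$ on $I$ together with $h = 0$ on $\partial I$. The key observation is that the zeroth- and second-order parts already have constant coefficients, so the only nuisance is the first-order term, which I eliminate by a multiplicative change of unknown.

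First I would substitute $h(x) = e^{-\frac{\beta}{2\eta}x}\,w(x)$. Because the exponential prefactor is smooth and strictly positive, one has $h \in H^2(I)\cap H^1_0(I)$ if and only if $w \in H^2(I)\cap H^1_0(I)$, and the homogeneous Dirichlet condition transfers unchanged; this takes care of the domain bookkeeping and shows $h_k \in \dom(A)$ at the end. A direct computation then shows that the $\beta h'$ term cancels and the equation collapses to $w'' + \mu w = 0$ with $\mu := \frac{1}{\eta}\bigl(\alpha + \nu - \frac{\beta^2}{4\eta}\bigr)$, again with $w$ vanishing at the endpoints of $I$.

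Next I would solve this classical problem, being careful to admit \emph{a priori complex} $\mu$ (equivalently complex $\nu$) so as to capture the entire spectrum rather than merely exhibit some real eigenvalues. For $\mu = 0$ only the trivial solution meets both Dirichlet conditions; for $\mu \neq 0$ the general solution (after translating the left endpoint to $0$) is $A\sin(\sqrt{\mu}\,x) + B\cos(\sqrt{\mu}\,x)$, and the two boundary conditions force $B = 0$ and $\sin(\sqrt{\mu}\,L) = 0$. Hence $\sqrt{\mu}\,L \in \pi\Z$, so $\mu = \mu_k := \frac{k^2\pi^2}{L^2}$ for some $k \in \{1,2,\dots\}$; in particular every eigenvalue is real. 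Each $\mu_k$ yields a one-dimensional eigenspace spanned by $\sin(\frac{k\pi}{L}x)$ (up to sign, on either interval), so the spectrum is simple. Solving $\mu_k = \frac{1}{\eta}\bigl(\alpha + \nu - \frac{\beta^2}{4\eta}\bigr)$ for $\nu$ gives exactly $\nu_k = -\alpha + \frac{\eta k^2\pi^2}{L^2} + \frac{\beta^2}{4\eta}$, and undoing the substitution returns $h_k(x) = e^{-\frac{\beta}{2\eta}x}\sin(\frac{k\pi}{L}x)$.

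Finally, the positivity claim follows by inspection: the exponential prefactor never vanishes and has constant sign, so the sign behaviour of $h_k$ is governed entirely by $\sin(\frac{k\pi}{L}x)$, which is of one sign throughout an interval of length $L$ only for $k = 1$ and has interior zeros for $k \geq 2$. Thus $h_1$ (up to a positive scalar) is the only eigenfunction that does not change sign, consistent with the general principal-eigenvalue theory invoked after Assumption~\ref{ass:elliptic}. I expect the only genuine subtlety to be the insistence on complex $\mu$ in the third step: this is what upgrades the conclusion from ``these are eigenvalues'' to ``these are the only eigenvalues, and they are all real.'' The cancellation computation and the domain transfer are routine.
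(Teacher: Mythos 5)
Your proposal is correct and takes essentially the same route as the paper: both eliminate the drift by the exponential conjugation $h(x)=e^{-\frac{\beta}{2\eta}x}w(x)$ and reduce to the Dirichlet Sturm--Liouville problem $w''+\mu w=0$, yielding $\mu_k=\frac{k^2\pi^2}{L^2}$ and hence \eqref{eq:spectrum}. The only difference is how completeness and reality of the spectrum are secured: the paper appeals to self-adjointness and compact resolvent of the conjugated operator $A_0=\eta\Delta+\alpha\Id$, whereas you argue directly over $\C$ and use that $\sin z=0$ forces $z\in\pi\Z$ --- a self-contained substitute that reaches the same conclusion.
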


\begin{proof}
  First we note that that $\phi$ is an eigenfunction of $A$ with eigenvalue $\nu$, if and only if 
  \[x \mapsto e^{\frac{\beta}{2\eta} x} \phi(x)\]
  is an eigenfunction of $A_0:= \eta \Delta + \alpha \Id$ with zero Dirichlet boundary conditions, for
  eigenvalue $\nu + \frac{\beta^2}{4\eta}$. Details of calculations are given in \cite{cont2005}.
  The operator $A_0$ with
  domain $\dom(A_0):= \dom(A)$ is self-adjoint, has compact resolvent \cite{cont2005} and eigenvalues
  \begin{equation}
    \label{eq:2}
    \alpha -  \frac{\eta k^2 \pi^2}{L^2},\qquad k\in \N.
  \end{equation}
  Eigenfunctions of $A_0$ with eigenvalue  $\nu\in \R$ are solutions of  the Sturm-Liouville problem
  \begin{equation}
    \label{eq:EVode0}
    \eta g''(x)  + (\alpha - \nu) g(x) = 0,\qquad x\in I,
  \end{equation}
with zero boundary conditions, which 
  yields that $g$ must be of the form
  \begin{equation}
    \label{eq:15}
    g(x) = c e^{-\gamma_1 x} \sin(\gamma_2 x),\quad{\rm where}\quad \gamma_1 = 0,\qquad \gamma_2 = \frac{\nu- \alpha}{\eta}.
  \end{equation}
 The zero boundary conditions at $0$ and $\pm L$  imply 
  $\gamma_2 = \frac{k}{L}\pi$ for some $k\in
  \N$ so
  \begin{equation}
    \label{eq:gBMdrift}
    \nu = \alpha -  \frac{\eta k^2 \pi^2}{L^2}.  
  \end{equation}
  Translating this from $A_0$ to $A$  yields the result.
\end{proof}
Define the following  bilinear forms:
\begin{equation}
  \label{eq:iprod_gamma-}
  L^2(-L,0)\times L^2(-L,0) \ni (f,g) \mapsto \iprod{f}{g}{-\gamma} := \frac{2}{L}\int_{-L}^0 f(x) g(x) e^{-2\gamma x} \d x
\end{equation}
and 
\begin{equation}
  \label{eq:iprod_gamma+}
  L^2(0,L)\times L^2(0,L) \ni (f,g) \mapsto \iprod{f}{g}{\gamma} := \frac{2}{L}\int_0^L f(x) g(x) e^{2\gamma x} \d x
\end{equation}
which define equivalent inner products, respectively for $L^2(-L,0)$
and $L^2(0,L)$. For $\gamma > 0$, and $k\in\N$, define
\begin{align}
  \nu_k^a :=  -\alpha_a +&  \frac{\eta_a k^2 \pi^2}{L^2} +
                           \frac{\beta_a^2}{4\eta_a},\quad
                           \nu_k^b :=  -\alpha_b +  \frac{\eta_b k^2 \pi^2}{L^2} +
                           \frac{\beta_b^2}{4\eta_b},\\    
  h_k^a(x) &:= e^{\frac{-\beta_a}{2\eta_a}x}
             \sin\left(\frac{k\pi}{L}x\right),\qquad x\in (0,L), \label{eq.hka}\\
  h_k^b(x) &:=  e^{\frac{\beta_b}{2\eta_b}x}
             \sin\left(\frac{k\pi}{L}x\right),\qquad x\in (-L,0). \label{eq.hkb}
\end{align}
Let
\begin{equation}
  \label{eq:1}
   \gamma_a := \frac{\beta_a}{2\eta_a},\qquad \gamma_b := \frac{\beta_b}{2\eta_b}.
\end{equation}
 Then $(h_k^b)_{k\in \N}$ is an orthonormal basis of $\left(L^2(-L,0),\ip{\cdot}{\cdot}{-\gamma_b}\right)$ and $(h_k^a)_{k\in \N}$ is an orthonormal basis for 
$ \left(L^2(0,L),\ip{\cdot}{\cdot}{\gamma_a}\right)$ and solutions for the SPDE may be constructed using an expansion along these bases:
\begin{prop}
  \label{prop:solutionLinHom}
  Let $u_0\in L^2(-L,L)$, $u_0^a:=u_0\vert_{[0,L]}$, $u_0^b:= u_0\vert_{[-L,0]}$.Then $(u_t)_{t\geq 0}$ defined by
  \begin{equation}
    u_t(x) :=
    \begin{cases}
      \mathcal{E}_t(\sigma_b W^b)\sum_{k=1}^\infty e^{-\nu_k^b t}
      \iprod{u_0^b}{h_k^b}{-\gamma_b} h_k^b(x),&\qquad x\in
      (-L,0), \\
      \mathcal{E}_t(\sigma_a W^a)\sum_{k=1}^\infty e^{-\nu_k^a t}
      \iprod{u_0^a}{h_k^a}{\gamma_a} h_k^a(x),&\qquad x\in
      (0,L), \\
      0,& x\in \{-L,0,L\}.
    \end{cases}          
  \end{equation}
is the unique continuous weak solution
  of~\eqref{eq:lSPDEhom} in the sense of Definition~\ref{df:weak_solution}.
\end{prop}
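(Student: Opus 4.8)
The plan is to reduce the coupled SPDE \eqref{eq:lSPDEhom} to a pair of \emph{deterministic} parabolic equations, solve each by spectral decomposition, and reassemble the solution through Theorem~\ref{thm:homo}. First I would observe that the Dirichlet condition $u_t(0)=0$ decouples the two sides, so \eqref{eq:lSPDEhom} splits into two independent problems of the form \eqref{eq:gSPDE}: one on $I^a=(0,L)$ with $A=A_a$ and $X=\sigma_a W^a$, and one on $I^b=(-L,0)$ with $A=A_b$ and $X=\sigma_b W^b$. Since $X^\star=\sigma_\star W^\star$ is a continuous local martingale, the jump hypothesis $\Delta X>-1$ holds trivially and Theorem~\ref{thm:homo} applies on each side: every weak solution is of the form $u_t^\star=\mathcal{E}_t(\sigma_\star W^\star)\,g_t^\star$, where $g^\star$ is a weak solution of \eqref{eq:gPDE} with generator $A_\star$ and datum $u_0^\star$, and conversely $\mathcal{E}_t(\sigma_\star W^\star)\,g_t^\star$ solves the SPDE whenever $g^\star$ solves the PDE. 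As $\mathcal{E}_t(\sigma_\star W^\star)=\exp(\sigma_\star W_t^\star-\tfrac12\sigma_\star^2 t)$ is a.s. continuous and strictly positive, this yields a pathwise bijection between weak solutions of the SPDE and of the PDE, reducing existence, uniqueness, and continuity to the deterministic problem.

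Next I would solve the deterministic equation by spectral expansion. By Proposition~\ref{prop:ev} the $(h_k^\star)_k$ are eigenfunctions of $-A_\star$ with eigenvalues $\nu_k^\star$ and form an orthonormal basis of $L^2(I^\star)$ for the equivalent weighted inner product $\ip{\cdot}{\cdot}{\pm\gamma_\star}$. I therefore set $g_t^\star:=\sum_{k\ge1}e^{-\nu_k^\star t}\,\iprod{u_0^\star}{h_k^\star}{\pm\gamma_\star}\,h_k^\star$, which is exactly the candidate in the statement. To verify that it is a weak solution in the sense of Definition~\ref{df:weak_solution}, I would test against $\varphi\in\dom(A_\star^*)$ in the \emph{standard} $L^2$ pairing, use $\iprod{h_k^\star}{A_\star^*\varphi}{L^2}=\iprod{A_\star h_k^\star}{\varphi}{L^2}=-\nu_k^\star\iprod{h_k^\star}{\varphi}{L^2}$, and compute $\int_0^t(-\nu_k^\star)e^{-\nu_k^\star s}\,ds=e^{-\nu_k^\star t}-1$ termwise; summation then gives precisely $\iprod{g_t^\star}{\varphi}{L^2}-\iprod{u_0^\star}{\varphi}{L^2}$.

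The interchange of sum and integral is justified because the coefficients $\iprod{u_0^\star}{h_k^\star}{\pm\gamma_\star}$ are square-summable by Parseval, while $(\iprod{h_k^\star}{\varphi}{L^2})_k$ is square-summable (these are the Fourier sine coefficients of a bounded exponential weight times $\varphi$, which lies in $L^2$); their product is thus absolutely summable by Cauchy--Schwarz, and the time factor $|e^{-\nu_k^\star t}-1|$ is uniformly bounded. Continuity of $t\mapsto g_t^\star$ in $L^2$ then follows by dominated convergence on compact time intervals, so $u_t^\star=\mathcal{E}_t(\sigma_\star W^\star)\,g_t^\star$ has a.s. continuous paths.

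Finally, for uniqueness of the deterministic solution I would argue by duality. If $w$ is a weak solution with $w_0=0$, testing against the eigenfunctions $h_k^{\star,*}(x)=e^{\pm\gamma_\star x}\sin(k\pi x/L)$ of the $L^2$-adjoint $A_\star^*$ (which carry the opposite exponential weight and eigenvalue $-\nu_k^\star$) shows that $a_k(t):=\iprod{w_t}{h_k^{\star,*}}{L^2}$ satisfies $a_k(t)=-\nu_k^\star\int_0^t a_k(s)\,ds$ with $a_k(0)=0$, forcing $a_k\equiv0$; since $\{h_k^{\star,*}\}$ is total in $L^2(I^\star)$ (the sines are complete and $e^{\pm\gamma_\star x}$ is nonvanishing), $w\equiv0$. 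Combined with the bijection of Theorem~\ref{thm:homo}, this gives uniqueness of the continuous weak solution of \eqref{eq:lSPDEhom}. I expect the only delicate point to be the convergence bookkeeping in the weak formulation, and in particular keeping the weighted inner product $\ip{\cdot}{\cdot}{\pm\gamma_\star}$---in which the $h_k^\star$ are orthonormal and $A_\star$ is self-adjoint---carefully distinct from the standard $L^2$ pairing of Definition~\ref{df:weak_solution}, whose adjoint $A_\star^*$ has the sign-flipped eigenfunctions; everything else is routine spectral calculus.
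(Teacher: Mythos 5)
Your proof is correct, and it takes a genuinely different route through the deterministic part of the argument, though the skeleton is the same as the paper's: decouple the two sides at the Dirichlet point $x=0$, invoke Theorem~\ref{thm:homo} to reduce \eqref{eq:lSPDEhom} to the deterministic problem \eqref{eq:gPDE} on each side, and feed in the spectral data of Proposition~\ref{prop:ev}. Where you diverge is in how existence and uniqueness for the deterministic problem are obtained. The paper simply quotes parabolic semigroup theory: the unique continuous solution is $S_t^\star u_0^\star$ for the Dirichlet semigroup generated by $A_\star$, and the series representation drops out by expanding $u_0^\star$ in the eigenbasis and using $S_t^\star h_k^\star = e^{-\nu_k^\star t}h_k^\star$ together with continuity of $S_t^\star$. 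You instead construct the solution by hand, verifying the weak formulation termwise with the Cauchy--Schwarz/Fubini bookkeeping you describe, and you prove uniqueness by duality, testing a null solution against the eigenfunctions $e^{\pm\gamma_\star x}\sin(k\pi x/L)$ of the $L^2$-adjoint (which indeed lie in $\dom(A_\star^*)$, carry eigenvalue $-\nu_k^\star$, and are total in $L^2(I^\star)$ since the weight is bounded and nonvanishing), so that $a_k(t)=-\nu_k^\star\int_0^t a_k(s)\,\d s$ forces $a_k\equiv 0$. Your route is more self-contained --- no appeal to well-posedness of the parabolic Dirichlet problem --- and your duality argument gives uniqueness among \emph{all} weak solutions, not only continuous ones; the paper's route is shorter and reuses the semigroup framing that appears elsewhere in Section~\ref{sec:generalSPDE}. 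Your explicit care in separating the weighted pairing $\langle\cdot,\cdot\rangle_{\pm\gamma_\star}$ (in which the $h_k^\star$ are orthonormal) from the standard $L^2$ pairing of Definition~\ref{df:weak_solution} is exactly the subtlety the paper navigates implicitly, and you handle it correctly.

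One small omission relative to the paper: the displayed formula defines $u_t(x)$ pointwise in $x$, and the paper's proof closes with a verification that the series converges pointwise (in fact uniformly in $x$) for $t>0$, via Cauchy--Schwarz, Parseval's identity and the integral criterion applied to $\sum_{k} e^{-2\nu_k^\star t}$, using that $\nu_k^\star$ grows like $k^2$. Your argument establishes convergence in $L^2$ and $L^2$-continuity of the paths, which suffices for the weak-solution property, but if the formula is to be read pointwise rather than as an identity in $L^2(-L,L)$ you should add this (routine) estimate; it fits directly into the convergence bookkeeping you already set up.
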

\begin{proof}
  The unique continuous solutions of the respective deterministic equations are given by
  $(S^b_t u^b_0)_{t\geq 0}$ and $(S^a_t u^a_0)_{t\geq 0}$, where $(S_t^b)_{t\geq 0}$ and
  $(S_t^a)_{t\geq 0}$ are the Dirichlet semigroups generated by 
  \begin{equation}
    A_b = \eta_b \Delta u_t - \beta_b \nabla  + \alpha_b
    \quad\text{and}\quad   A_a =\eta_a \Delta + \beta_a \nabla  + \alpha_a 
  \end{equation}
  on $(-L,0)$ and $(0,L)$, respectively. Thus, from
  Theorem~\ref{thm:homo} we get 
  \begin{equation}\label{eq:lSPDEhomSolAbstract}
    u_t(x) =
    \begin{cases}
      \mathcal{E}_t(\sigma_b W^b)S_t^b u_0^b(x),& x \in (-L,0),\\
      \mathcal{E}_t(\sigma_a W^a)S_t^a u_0^a(x),& x \in (0,L).
    \end{cases}
  \end{equation}
  $(S_t^a)$ and $(S_t^b)$ are linear continuous so
  that for each $h^a\in L^2(0,L)$, $h^b \in L^2(-L,0)$,
  \[ S_t^a h^a = \sum_{k\in
      \N}\iprod{u^a_0}{h_k^a}{\gamma_a}
    S_t^a h_k^b,\qquad{\rm
  and }
  \qquad S_t^b h^b = \sum_{k\in
      \N}\iprod{u^b_0}{h_k^b}{-\gamma_b}
    S_t^b h_k^b.\]
  By Proposition~\ref{prop:ev} $h_k^a$ (resp. $h_k^b$) are eigenfunctions of
  $A_a$ (resp. $A_b$) and thus also of $S^a$ (resp. $S^b$). This yields
  the desired representation, where the  series converge in
  $L^2$. To obtain pointwise convergence, we note that for $x\in [0,L]$ and $t>0$,
  by Cauchy-Schwarz inequality, Parseval's identity and integral
  criterion for sequences, for $\star \in \{a,b\}$,
  \begin{align*}
    \sum_{k=1}^\infty \abs{e^{-\nu_k^\star t}
      \iprod{u_0\vert_{(0,L)}}{h_k^\star}{\frac{\beta_\star}{\eta_\star}} h_k^\star(x)}
    &\leq \norm{h\vert_{(0,L)}}{L^2(0,L)}
      \sqrt{\sum_{k=1}^\infty e^{-2\nu_k^\star t}}\\
    &\leq  \norm{u_0\vert_{(0,L)}}{L^2(0,L)}^2
      e^{t(\alpha_\star - \frac{\beta_\star^2}{4\eta_\star})} \sqrt{\int_0^\infty
    e^{-2t\frac{\eta_\star\pi^2}{L^2}y^2} \d y}.\qedhere
  \end{align*}
\end{proof}
When $\eta>0$, then the weights $e^{-\nu_k t}$ of the spectral
decomposition decay exponentially in $k^2$ for large $k$. This 
justifies approximating the solution by  the  first few
terms. Note also that the only positive eigenfunctions are the principal eigenfunctions $h^a_1$ and $-h^1_b$ so the sign
constraints~\eqref{eq:positivity} only if the projection of the solution along the principal eigenfunctions dominates the other terms in the expansion. 
This motivates us  to focus on solutions which live in the first eigenspace. This occurs  if the initial condition is a (positive) linear combination of  $h^a_1$ and $h^1_b$. 
We will later show that this assumption is supported by  market data. 
This leads to a finite-dimensional realization which satisfies the sign constraints~\eqref{eq:positivity}:
\begin{cor}\label{cor:fdr}
  Let $V_0^a>0$ resp. $V_0^b>0$ and define
    \begin{equation} \label{eq:L1} H_1^a(x)=\frac{h^a_1(x)\1_{(0,L)}(x)}{\int_0^L |h^a_1|}\geq 0\qquad{\rm and}  \qquad H_1^b(x)=\frac{h^b_1(x) \1_{(-L,0)}(x)}{\int_{-L}^0 |h^b_1|}\leq 0.\end{equation} 
  The  unique solution of~\eqref{eq:lSPDEhom}--\eqref{eq:positivity} with initial condition $u_0= V_0^a H_1^a +  V_0^b H_1^b$  is given by
  \begin{equation}
    \label{eq:8}
    u_t(x) = H_1^b(x)V^b_t  + H_1^a(x)V^a_t ,\quad t\geq 0,\, x \in [-L,L],\quad {\rm where}
  \end{equation} 
  \begin{align}\label{eq.nu}
    \nu^a = -\alpha_a +  \frac{\eta_a \pi^2}{L^2} + \frac{(\beta_a)^2}{4\eta_a}, \qquad 
    \nu^b = -\alpha_b +  \frac{\eta_b \pi^2}{L^2} + \frac{(\beta_b)^2}{4\eta_b}\quad {\rm and}
  \end{align}
  \begin{alignat}{2}
    \label{eq:16}
    \d V_t^a = -\nu^a V_t^a \d t+ \sigma_a V_t^a \d W_t^a,& \quad
    \d V_t^b = -\nu^b V_t^b \d t+ \sigma_b V_t^b \d W_t^b
  \end{alignat}
  In particular, $u_t\vert_{[-L,0]}\leq 0$, $u_t\vert_{[0,L]}\geq 0$ and 
  \[  \nabla u_t(0+) = \frac{\pi}{L} V_t^a ,\quad  \nabla u_t(0-) =  \frac{\pi}{L}  V_t^b.\]  
\end{cor}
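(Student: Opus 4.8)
The plan is to treat the bid and ask sides separately and to exploit the fact that the prescribed initial datum lies \emph{exactly} in the one-dimensional principal eigenspace on each side, so that the general spectral solution of Proposition~\ref{prop:solutionLinHom} collapses to a single mode. First I would observe that, by \eqref{eq:L1}, $H_1^a$ (resp. $H_1^b$) is a strictly positive (resp. negative) scalar multiple of the principal eigenfunction $h_1^a$ (resp. $h_1^b$) of Proposition~\ref{prop:ev}. Since multiplication by a nonzero constant preserves the eigenfunction property, $H_1^a$ is an eigenfunction of $A_a$ and $H_1^b$ an eigenfunction of $A_b$, for the eigenvalues $-\nu^a$ and $-\nu^b$ with $\nu^a,\nu^b$ as in \eqref{eq.nu}. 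Consequently the Dirichlet semigroups act by $S_t^a H_1^a = e^{-\nu^a t}H_1^a$ and $S_t^b H_1^b = e^{-\nu^b t}H_1^b$.

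With this in hand I would invoke the representation \eqref{eq:lSPDEhomSolAbstract} of Proposition~\ref{prop:solutionLinHom} (equivalently Theorem~\ref{thm:homo} and its eigenfunction remark, applied on each interval). On $(0,L)$ the initial state is $u_0^a = V_0^a H_1^a$, so $u_t = \mathcal{E}_t(\sigma_a W^a)\,S_t^a(V_0^a H_1^a) = V_0^a e^{-\nu^a t}\mathcal{E}_t(\sigma_a W^a)\,H_1^a$, and symmetrically on $(-L,0)$. Setting $V_t^a := V_0^a e^{-\nu^a t}\mathcal{E}_t(\sigma_a W^a)$ and $V_t^b := V_0^b e^{-\nu^b t}\mathcal{E}_t(\sigma_b W^b)$ then gives precisely the product form \eqref{eq:8}, and uniqueness is inherited directly from Proposition~\ref{prop:solutionLinHom}. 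To verify that $V^\star$ solves \eqref{eq:16}, I would apply the It\^o product rule to the finite-variation factor $e^{-\nu^\star t}$ and the stochastic exponential, using $\d \mathcal{E}_t(\sigma_\star W^\star) = \sigma_\star \mathcal{E}_t(\sigma_\star W^\star)\,\d W_t^\star$; this is the standard geometric-Brownian-motion computation and yields $\d V_t^\star = -\nu^\star V_t^\star\,\d t + \sigma_\star V_t^\star\,\d W_t^\star$.

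The sign constraints \eqref{eq:positivity} then follow immediately: by construction $H_1^a\geq 0$ on $(0,L)$ and $H_1^b\leq 0$ on $(-L,0)$, while $V_t^a,V_t^b>0$ almost surely as (constants times) exponentials, so $u_t$ carries the correct sign on each side. For the boundary-gradient identities I would differentiate the explicit eigenfunctions at the origin: writing $h_1^\star(x)=e^{\mp\gamma_\star x}\sin(\tfrac{\pi}{L}x)$, the exponential prefactor equals $1$ at $x=0$ while the derivative of that prefactor multiplies $\sin(0)=0$, so the surviving contribution is $\tfrac{d}{dx}\sin(\tfrac{\pi}{L}x)\big|_{x=0}=\tfrac{\pi}{L}$. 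Dividing by the normalizing constant of \eqref{eq:L1} and multiplying by $V_t^\star$ then expresses $\nabla u_t(0\pm)$ as a scalar multiple of $V_t^\star$.

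The argument involves no deep obstacle; the only genuine care is bookkeeping. One must confirm that restricting $u_0$ to the principal eigenspace prevents any leakage into higher modes — this is automatic, since $u_0^\star$ is \emph{exactly} an eigenfunction and hence only the $k=1$ term survives in the spectral sum of Proposition~\ref{prop:solutionLinHom}. The remaining point requiring attention is the bid/ask-derivative step, where one must track the $L^1$-normalization factors $\int_0^L h_1^a$ and $\int_{-L}^0\lvert h_1^b\rvert$ appearing in \eqref{eq:L1} so that the proportionality constant relating $\nabla u_t(0\pm)$ to $V_t^\star$ is identified correctly.
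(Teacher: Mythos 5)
Your proof is correct and follows essentially the same route as the paper: there the corollary is given without a separate proof, as an immediate consequence of Proposition~\ref{prop:solutionLinHom} (equivalently Theorem~\ref{thm:homo}) with the initial datum lying exactly in the principal eigenspace, which is precisely the single-mode collapse and geometric-Brownian-motion identification you carry out, together with the positivity of $H_1^a$, $-H_1^b$ and direct differentiation at the origin. Your closing caveat about the normalization is apt: with $H_1^\star$ normalized as in \eqref{eq:L1} one gets $\nabla u_t(0+)=\tfrac{\pi}{L}\,V_t^a\big/\int_0^L \lvert h_1^a\rvert$, so the constant $\tfrac{\pi}{L}$ claimed in the corollary is exact only up to the $L^1$-normalizing factor — a slip in the paper's statement rather than in your argument.
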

The $L^1$ normalization \eqref{eq:L1} allows to interpret the variables  in terms of order book volume and depth:
$\int_0^L|u_t|=V_t^a$ (resp. $\int_{-L}^0|u_t|=V_t^b$) represents the volume of sell (resp. buy) orders, while $\nabla u_t(0+)\theta = \frac{\theta \pi}{L} V_t^a $ (resp. $ \nabla u_t(0-) .\theta=  \frac{\theta \pi}{L}  V_t^b$) represents the depth at the top of the book.
In this simple two-factor model, these two  are proportional to each other: they may be decoupled by considering multifactor specifications involving higher-order eigenfunctions. 

The drift parameter $-\nu^a$ (resp. $-\nu^b$)  thus represents the net growth rate of decrease of the volume of sell (resp. buy) orders. As shown in \eqref{eq.nu}, this net growth rate results from the superposition of several effects:
\begin{itemize}
\item submission/ cancellation  of limit sell (resp. buy) orders by directional sellers (resp. buyers) at rate $\alpha_a$ (resp. $\alpha_b$); this may be interpreted as the `low frequency' component of the order flow;
\item replacement of limit orders by new ones closer to the mid-price, at rate 
$\frac{\beta_a^2}{4\eta_a}$ (resp. $\frac{\beta_b^2}{4\eta_b}$);
\item cancellation of limit orders as the mid-price moves away (i.e. at distance $\pm L$ from the mid-price), at rate $\frac{\eta_a \pi^2}{L^2}$ (resp. $\frac{\eta_b \pi^2}{L^2}$).
\end{itemize}
In the case of a {\it balanced} order flow for which 
there is no systematic accumulation or depletion of  limit orders away from the mid-price, these terms compensate each other  and
the volume of limit orders in any interval $[S_t+x_1,S_t+x_2]$ is a (local) martingale.
The following result follows from the remarks in Section~\ref{sec.martingale}:
\begin{cor}[Balanced order flow]
  The order book density $u$ is a  local martingale (in $L^2$), if and only if 
  \[u_0(x) =  V^b_0 H_1^b(x) \1_{(-L,0)}(x) + V^a_0 H_1^a(x) \1_{(0,L)}(x),\]
  for some $V^b_0 \geq 0,V^a_0 \geq 0$ and
  \begin{equation}
    \label{eq:3}
    \alpha_{ a} = \frac{\eta_{a} \pi^2}{L^2} + \frac{\beta_{a}^2}{4\eta_{ s}},
    \quad
    \alpha_{ b} = \frac{\eta_{ b} \pi^2}{L^2} + \frac{\beta_{b}^2}{4\eta_{ b}}.
  \end{equation}
\end{cor}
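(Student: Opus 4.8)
\emph{Plan of proof.} The strategy is to decouple the bid and ask sides, reduce each to the harmonicity criterion of Corollary~\ref{cor:locmartingale}, identify the $A$-harmonic functions from the spectral data of Proposition~\ref{prop:ev}, and then use the sign constraint~\eqref{eq:positivity} to single out the principal mode. First I would note that $L^2(-L,L) = L^2(-L,0)\oplus L^2(0,L)$ is an orthogonal direct sum, the two summands having disjoint spatial support, and that under this splitting the solution of~\eqref{eq:lSPDEhom} decomposes as $u = u^b\mathbf{1}_{(-L,0)} + u^a\mathbf{1}_{(0,L)}$, where $u^a$ (resp. $u^b$) solves the homogeneous equation~\eqref{eq:gSPDE} on $L^2(0,L)$ (resp. $L^2(-L,0)$) with operator $A_a = \eta_a\Delta + \beta_a\nabla + \alpha_a$ (resp. $A_b = \eta_b\Delta - \beta_b\nabla + \alpha_b$) driven by the continuous martingale $\sigma_a W^a$ (resp. $\sigma_b W^b$). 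Since projection onto each closed orthogonal summand preserves the (local) martingale property and a sum of local martingales is a local martingale, $u$ is a local martingale iff both $u^a$ and $u^b$ are. As $\sigma_\star W^\star$ is continuous ($\Delta X\equiv 0 > -1$), Corollary~\ref{cor:locmartingale} applies to each side and gives: $u$ is a local martingale iff $u_0^a\in\ker A_a$ and $u_0^b\in\ker A_b$.

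Next I would characterize these kernels spectrally. By Proposition~\ref{prop:ev}, $A_\star$ has simple real eigenvalues $-\nu_k^\star$ with eigenfunctions $h_k^\star$ forming a basis, so $\ker A_\star$ is nontrivial iff $\nu_k^\star = 0$ for some $k$; and since $k\mapsto\nu_k^\star$ is strictly increasing, at most one such index exists, in which case $\ker A_\star = \operatorname{span}\{h_k^\star\}$. Here the sign constraint enters: a harmonic initial datum is of the form $u_0^a = c\, h_k^a$, and~\eqref{eq:positivity} forces $u_0^a\geq 0$ on $(0,L)$, so $h_k^a$ must have constant sign; by Proposition~\ref{prop:ev} only the principal eigenfunction $h_1^a$ does. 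Hence the vanishing index is $k=1$, i.e. $\nu_1^a = 0$, which by~\eqref{eq.nu} is exactly $\alpha_a = \tfrac{\eta_a\pi^2}{L^2} + \tfrac{\beta_a^2}{4\eta_a}$, and $u_0^a$ is a nonnegative multiple $V_0^a H_1^a$ of the normalized principal eigenfunction. The bid side is identical up to the overall sign (with $H_1^b\leq 0$ and $V_0^b\geq 0$), giving the second identity in~\eqref{eq:3}. This establishes the forward implication and pins down both the admissible initial data and the parameter constraints.

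For the converse I would run the argument backwards: if the two identities in~\eqref{eq:3} hold then $\nu_1^a = \nu_1^b = 0$, so $H_1^a\in\ker A_a$ and $H_1^b\in\ker A_b$; choosing $u_0 = V_0^a H_1^a + V_0^b H_1^b$ with $V_0^a, V_0^b\geq 0$ makes each restricted datum $A_\star$-harmonic, whence Corollary~\ref{cor:locmartingale} yields that $u$ is a local martingale, while Corollary~\ref{cor:fdr} confirms that this $u$ satisfies~\eqref{eq:positivity} and is the admissible solution. The main obstacle, and the only genuinely delicate step, is the use of positivity to reduce the harmonic mode to $k=1$: harmonicity alone admits any eigenfunction with $\nu_k^\star = 0$, and it is precisely the sign constraint~\eqref{eq:positivity}, combined with the fact that only the principal eigenfunction keeps a constant sign, that selects the principal mode and thereby the specific relations~\eqref{eq:3}.
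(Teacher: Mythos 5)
Your proof is correct and follows essentially the same route the paper intends: the paper derives this corollary directly from Corollary~\ref{cor:locmartingale} (local martingale iff the initial datum is $A$-harmonic on each side), with the spectral data of Proposition~\ref{prop:ev} identifying the kernel and the remark there that $h_1$ is the only sign-definite eigenfunction selecting the principal mode, so that $\nu_1^\star=0$ yields exactly~\eqref{eq:3}. Your write-up merely makes explicit the bid/ask decoupling and the strict monotonicity of $k\mapsto\nu_k^\star$, which the paper leaves implicit.
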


\begin{rmk}[Balance between high- and low-frequency order flow]
The balance condition \eqref{eq:3} expresses a balance between the slow arrival of directional orders, represented by the terms $\alpha_a$ and $\alpha_b$, and the fast replacement of orders inside the book, represented by the terms $\frac{\beta_{a}^2}{4\eta_{ b}}$ and  $\frac{\beta_{a}^2}{4\eta_{ b}}$, and finally the cancellation of limit orders deep inside the book, at rate ${\eta_{a} \pi^2}/{L^2}.$
\end{rmk}
This balance between order flow at various frequencies may be seen as  a mathematical counterpart of the observations  made by \cite{kirilenko2017} on the nature of intraday order flow.

\subsection{Shape of the order book}
\label{ssec:obshape}
An implication of the above results is that the average profile of the order book is given, up to a constant, by the principal eigenfunctions $H_1^a, H_1^b$:
\begin{equation} \mathbb{E}( u_t(x) )= \mathbb{E}(V^b_t)\  H_1^b(x)  +   \mathbb{E}(V^a_t )\ H_1^a(x) \label{eq:average}\end{equation}
Dropping the indices $a,b$, the normalized profile of the order book has the  form:
\[ H_1(x) := c_1 e^{-\frac{\beta}{2\eta} x}\sin(\tfrac{\pi}{L}x),\qquad x \in [0,L],\]
where $c_1$ is such that $\int_0^L |H_1|=1$:
\[ \frac{1}{c_1}=  \int_0^L e^{-\frac{\beta}{2\eta} x}\sin(\tfrac{\pi}{L}x) \d x = \frac{4\pi L\eta^2}{L^2\beta^2 + \pi^2 4\eta^2}\left(e^{-\frac{\beta}{2\eta}L}+1\right),\]
Figure   \ref{fig:shapes} shows this function for different values of $\beta$:
$H_1$ has a unique maximum  at
\begin{equation}\label{eq:xhat}
\hat x :=  \frac{L}{\pi} \arctan\left(\tfrac{2 \eta \pi}{L\beta}\right). 
\end{equation}
The position of the maximum moves closer to the origin as $\sfrac{\beta}{\eta}$ is increased.
For $\beta = 0$ we have $\hat x = \frac{L}2$, and, on the other hand
$\hat x \searrow 0$ as $\sfrac{\beta}{\eta}\to \infty$. 
Typically, the order book profile for liquid large--tick securities a few ticks from the mid price. Figure~\ref{fig:profile_fit_qqq} shows the average order book  profile for QQQ; similar results were found  in  \cite{bouchaud2009,cont2010}. 
This suggests $\hat x$ is of the order of a few ticks, so we are interested in the parameter range for which $\sfrac{\beta}{\eta}$ is large.

The value at the maximum is
\begin{multline}
  \label{eq:maxvolratio}
 \max_{x \in [0,L]} H_1(x)  
  = \sqrt{ \frac{\beta^2}{4\eta^2} +\frac{ \pi^2}{L^2}} \exp\left(-\frac{\beta L}{2\eta \pi} \arctan\left(\frac{2\eta \pi}{L\beta}\right)\right) \left(e^{-\frac{\beta L}{2\eta}} + 1 \right)^{-1}.
\end{multline}
which grows linearly as $\sfrac{\beta}{2\eta} \to \infty$, as shown in 
Figure~\ref{fig:shapes}, where we have plotted $h$, normalized by its $L^1$-norm, for various values
of $\beta$ with $L:= 3\pi$ and $\eta =1$. 

\begin{figure}[bt]
  \centering
  \includegraphics[width=\textwidth]{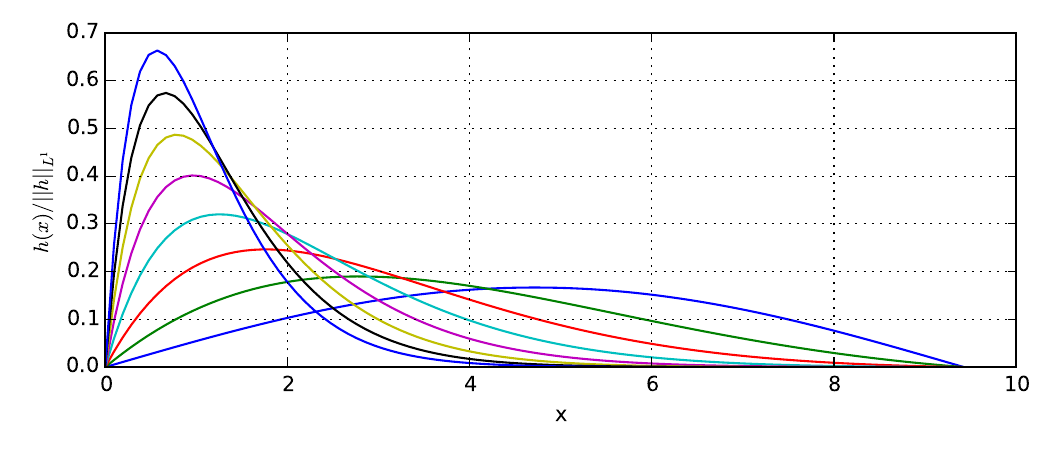}
  \caption{ Shape of the normalized principal eigenfunction $H_1$, which corresponds to the average profile of the normalized order book,
   for $L:=  3\pi$, $\eta:=1$ and different values of $\beta\in \{0,0.5,...3.5\}$.}
  \label{fig:shapes}
\end{figure}
The above results are valuable for calibrating the model parameters $\frac{\beta}{2\eta}$, $\alpha$ and $\sigma$ to reproduce the average profile (for each side) of the order book. 

  $\frac{\beta}{2\eta}$ can be estimated  from the position $\hat x$ of the maximum   using \eqref{eq:xhat}. Note that, when $L$ is large then
    \[\hat x\approx \frac{2\eta}{\beta}.\]
  The height of this maximum gives a further constraint on parameters, using   \eqref{eq:maxvolratio}.
  
    We will use this result for parameter estimation in
  Section~\ref{ssec:fitHom}.

\subsection{Dynamics of order book volume}
As noted in Corollary~\ref{cor:fdr}, $V^a_t$ and $V^b_t$ may be identified as the volume of sell (resp. buy) limit orders: they follow (correlated) geometric Brownian motions:
\begin{align*}
  \label{eq:11}
  V_t^a &= \int_{0}^L \abs{u_t(x)}\d x = V^a_0 \exp(\sigma_a W^a_t-\nu_a t-\frac{\sigma_a^2t}{2})\\
  V_t^b &= \int_{-L}^0 \abs{u_t(x)}\d x = V^b_0 \exp(\sigma_b W^b_t-\nu_b t-\frac{\sigma_b^2t}{2})
\end{align*}
where $[W^a, W^b]_t = \rho_{a,b} t$.
The average volume of the order book $V_t=V^a_t+V^b_t$ satisfies
\begin{equation*}
  \EE ( V_t )= V_0  -\int_0^t V^0_a \nu_a e^{-\nu_a s} -  V^0_b \nu_b e^{-\nu_b s} \d s= V_0 +V^0_a e^{-\nu_a t}+ V^0_b e^{-\nu_b t}.
\end{equation*}
Intraday studies of order book volume show it to be stable away from the open and close. Here
 $\EE V_t = V_a + V_b$  if and only
if $V$ is a martingale, i.\,e. $\nu_a = \nu_b = 0$. 

\subsection{Dynamics of price and market depth} 
\label{ssec:pricelinh}
Recall from the discussion in Section~\ref{sec:price} that the
order book dynamics yield the price process
\begin{equation*}
  \d S_t =   \theta \left(\frac{\d D_t^b}{D_t^b} - \frac{\d
      D_t^a}{D_t^a}\right),
\end{equation*}
where $\theta$ is an impact coefficient and
$D_t^b$ and $D_t^a$ represent the depth at the top of the order book
\cite{contImbalance}:
\begin{equation}
  \label{eq:6}
  D_t^a := \int_0^{\delta} u_t(x)\d x\approx \frac12 \delta^2 \nabla u_t(0+),\quad
  D_t^b := \int_{-\delta}^0 |u_t(x)| \d x \approx \frac12 \delta^2\nabla u_t(0-).
\end{equation}
Using the results in Corollary~\ref{cor:fdr}, we obtain the following price dynamics:
\begin{equation}
  \label{eq:pricemodel_linhom}
  \d S_t =   \theta  \left(\frac{\d V_t^b}{V_t^b} - \frac{\d V_t^a}{V_t^a}\right),
\end{equation}
where
\begin{equation}
  \d V_t^b = -\nu_b V_t^b \d t + \sigma_b  V_t^b \d W_t^b ,\quad
  \d V_t^a = -\nu_a V_t^a \d t + \sigma_a V_t^a \d W_t^a .
\end{equation}
The price dynamics can thus be written as
\begin{align*}
  S_t &= S_0 - \theta t \left(\nu_b - \nu_a\right)  +  \theta \sigma_b W_t^b -  \theta \sigma_a W_t^a \\
      &{=} S_0 - \theta t \left(\nu_b - \nu_a\right)  +
        \sigma_S B_t
\end{align*}
where $B$ is a Brownian motion and $\sigma_S$ is the mid price volatility, which may be expressed in terms of parameters describing the order flow:
\begin{equation}
  \label{eq:midprice_vol}
  \sigma_S :=   \theta  \sqrt{\sigma_b^2 + \sigma_a^2 - 2\sigma_a \sigma_b\varrho_{a,b}}.
\end{equation}
The implied price dynamics thus corresponds to the Bachelier model:
\begin{itemize}
\item The drift term $\nu_a - \nu_b$ only depends on the rate of relative increase of the bid/ask depth, not the actual depths $D^b_t$ and
  $D^a_t$. 
\item The quadratic variation of the mid price is $\sigma_S^2t$ decreases with the correlation between the buy and sell order flow. This correlation, generated by market makers, reduces price volatility.
\end{itemize}
\begin{rmk}
  Replacing $\sigma_aW^a$ and $\sigma_b W^b$ by
  arbitrary semimartingales $X^a$ and $X^b$ with jumps bounded from
  below by $-1$, yields the following price dynamics:
  \begin{equation}
    S_t = S_0 -  \theta  t  \left(\nu_b - \nu_a\right)  +   \theta  (X_t^b - X_t^a).
  \end{equation}
  In particular, this relation  links price jumps to large changes (`jumps') in order flow imbalance:
    \begin{equation}
 \Delta S_t= \theta \   \Delta (X_t^b - X_t^a).
  \end{equation}
\end{rmk}

\subsection{Absolute price coordinates: stochastic moving boundary problem} 
\label{ssec:2factor_noncentered}
The model above describes dynamics of the order book in relative price coordinates, i.e. as a function of the (scaled) distance $x$ from the mid-price. 
The density of the limit order book parameterized by the (absolute) price level 
$p\in \R$  is given (in the case of linear scaling) by
\begin{equation}
  \label{eq:1}
  v_t(p) = u_t(p - S_t),\qquad x\in \R,
\end{equation}
where we extend $u_t$ to $\R$ by setting $u_t(y) = 0$ for
 $y\in \R\setminus [-L,L]$. As observed in
Section~\ref{ssec:pricelinh}, the mid-price dynamics is given by
\begin{equation}
  \label{eq:7}
  \d S_t = -  \theta  (\nu_b - \nu_a)\d t +   \theta \sigma_b \d
  W_t^b -   \theta  \sigma_a \d W_t^a.
\end{equation}
The dynamics of $v$ may then be described, via an application of the It\^{o}-Wentzell formula, as the solution of a {\it stochastic moving boundary problem}  \cite{mueller2016stochastic}: 
\begin{thm}[Stochastic moving boundary problem]\label{thm:non-centered_hom}
  The order book density $v_t(p)$, as a function of the price level $p$ is a solution, in the sense of distributions, of the stochastic moving boundary problem
  \begin{multline}
    \label{eq:vts+}
    \d v_t(p) = \left[(\eta_a+\tfrac12 \sigma_s^2) \Delta v_t(p)\right.\\
    \left.+ (\nu_b - \nu_a + \beta_a -  \theta  (\varrho_{a,b} \sigma_b \sigma_a -  \sigma_a^2) \nabla v_t(p)  + \alpha_a v_t(p)\right] \d t \\
    + \left( \sigma_a v_t(p)  +   \theta  \sigma_a \nabla v_t(p)\right) \d W_t^a  -  \theta\sigma_b \nabla v_t(p) \d W_t^b,    
  \end{multline}
  for $p \in (S_t, S_t+L)$, and
  \begin{multline}
    \label{eq:vts-}
    \d v_t(p) = \left[(\eta_b+\tfrac12 \sigma_s^2) \Delta v_t(p)\right.\\
    \left.+ (\nu_b - \nu_a - \beta_b -   \theta (\sigma_b^2-\varrho_{a,b} \sigma_b \sigma_a)) \nabla v_t(x)  + \alpha_b v_t(p)\right] \d t \\
   +  \theta   \sigma_a \nabla v_t(p)\d W_t^a     + \left( \sigma_b v_t(p) -  \theta \sigma_b \nabla v_t(p)  \right) \d W_t^b  
  \end{multline}
  for $x\in (S_t-L, S_t)$ with the   moving  boundary conditions
  \begin{equation}
    \label{eq:bdryst_hom}
    v_t(S_t) = 0,\qquad v_t(y) = 0,\qquad \forall y\in \R\setminus(S_t-L,S_t+L),
  \end{equation}
  in the following sense:
  $(v_t)_{t\geq 0}$ is an continuous $L^2(\R)$-valued stochastic process  and for all 
  $\varphi \in C^\infty_0(\R)$ and $t\geq 0$,
  \begin{multline}
    \langle v_t, \varphi \rangle - \langle v_0, \varphi \rangle 
    =\int_0^t \langle m (x-S_t,\Delta v_r, \nabla v_r, v_r) ,\varphi
    \rangle \d r  \quad+ \\
   \frac12 
     \int_0^t \left(\nabla v_r(S_r-)-  \nabla v_r(S_{r}+) ) \varphi(S_r) - \nabla v_r(S_{r}-L+)  \varphi(S_r-L)  + \nabla v_r(S_{r}+L-)  \varphi(S_r+L)  \right)d\langle S \rangle_r 
    \\
    +\int_0^t \langle \1_{(S,S_r+L)} \sigma_a v_r, \varphi\rangle \d W_r^a
    + \int_0^t \langle \1_{(S_r-L,S_r)} \sigma_b v_r , \varphi \rangle \d W^b_r \\
    +  \theta \sigma_a  \int_0^t \langle \nabla v_r, \varphi \rangle \d W^a_r 
    -  \theta \sigma_b \int_0^t \langle \nabla v_r, \varphi \rangle \d W^b_r,
  \end{multline}
  where we denote,  for $S\in \R$, $V\in H^1_0((-L,L)\setminus\{0\}) \cap H^2((-L,L)\setminus\{0\})$,
  \begin{equation*}
    m(x, y'',y',y) =
    \begin{cases}
      (\eta_a +\tfrac12 \sigma_s^2) y'' &\\
      + (\nu_b - \nu_a + \beta_a -  \theta  (\varrho_{a,b} \sigma_b \sigma_a -  \sigma_a^2)) y' + \alpha_a y,& x\in (0,L),\\
      (\eta_b +  \theta  \sigma_s^2) y''&\\
      + (\nu_b - \nu_a - \beta_b -    \theta (\sigma_b^2-\varrho_{a,b} \sigma_b \sigma_a))) y' + \alpha_b y,& x\in (-L,0)\\
      0, & \text{ else,}
    \end{cases}
  \end{equation*}
  for $x, y'',y',y \in \R$.
\end{thm}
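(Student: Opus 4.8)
The plan is to reduce the statement to a finite-dimensional It\^o computation by using the realization obtained in Corollary~\ref{cor:fdr}, and then to recover the distributional formulation by integration by parts. Writing $u_t = H_1^a V_t^a + H_1^b V_t^b$ with deterministic profiles $H_1^a,H_1^b$ and scalar factors $V^a,V^b$ solving the geometric SDEs of Corollary~\ref{cor:fdr}, the order book in price coordinates is
\[ v_t(p) = H_1^a(p-S_t)\,V_t^a + H_1^b(p-S_t)\,V_t^b. \]
Fixing a test function $\varphi\in C_0^\infty(\R)$ and substituting $x=p-S_t$, I would write, for $\star\in\{a,b\}$,
\[ \langle v_t,\varphi\rangle = V_t^a\,\Psi^a(S_t) + V_t^b\,\Psi^b(S_t),\qquad \Psi^\star(S):=\int_{I^\star} H_1^\star(x)\,\varphi(x+S)\,\d x, \]
where $\Psi^a,\Psi^b$ are smooth in $S$ because $\varphi$ is. The point of this reduction is that $\langle v_t,\varphi\rangle$ is now an honestly $C^2$ function $F(V_t^a,V_t^b,S_t)$ of three scalar semimartingales, so no infinite-dimensional It\^o--Wentzell theorem is needed for the realization at hand.

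I would then apply the ordinary multidimensional It\^o formula to $F(y^a,y^b,s)=y^a\Psi^a(s)+y^b\Psi^b(s)$, inserting the dynamics of $V^a,V^b$ from Corollary~\ref{cor:fdr}, the mid-price dynamics of Section~\ref{ssec:pricelinh}, and the covariations $\d[V^\star,S]$ and $\d\langle S\rangle_t=\sigma_s^2\,\d t$ (with $\sigma_s$ as in~\eqref{eq:midprice_vol}) read off from $\d[W^a,W^b]_t=\varrho_{a,b}\,\d t$. This yields three kinds of contributions: the terms $\Psi^\star(S_t)\,\d V_t^\star$, the first-order terms $\partial_s F\,\d S_t$ together with the mixed terms $\partial_{y^\star s}F\,\d[V^\star,S]_t$, and the second-order term $\tfrac12\,\partial_{ss}F\,\d\langle S\rangle_t$. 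The mixed terms are precisely the It\^o--Wentzell cross-variation between the multiplicative noise in $u$ and the price, and are what generate the extra convection coefficient $-c_s\theta(\varrho_{a,b}\sigma_b\sigma_a-\sigma_a^2)$ on the ask side (and its bid analogue) in the drift $m$.

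To pass back to the distributional form I would re-express each $s$-derivative of $\Psi^\star$ as a spatial integral against $v_t$ and integrate by parts on the two intervals $(S_t,S_t+L)$ and $(S_t-L,S_t)$ separately. One integration by parts gives $V_t^\star(\Psi^\star)'(S_t)=-\langle \1_{I^\star+S_t}\nabla v_t,\varphi\rangle$, the endpoint terms vanishing because $v_t$ vanishes at $S_t$ and at $S_t\pm L$; this turns the first-order terms into the $\nabla v$ part of $m$ and into the stochastic integrals $\pm c_s\theta\sigma_\star\langle\nabla v_r,\varphi\rangle\,\d W^\star_r$, while $\Psi^\star(S_t)V_t^\star=\langle\1_{I^\star+S_t}v_t,\varphi\rangle$ produces the multiplicative noise terms. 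In the second-order term $\tfrac12\,\partial_{ss}F\,\d\langle S\rangle$ one integrates by parts twice: $\langle v_t,\varphi''\rangle$ splits into $\langle\Delta v_t,\varphi\rangle$ on each side plus endpoint contributions, and since $\nabla v_t$ jumps across the moving interface $p=S_t$ and across the support edges $p=S_t\pm L$, these endpoint contributions are exactly the localized boundary terms of the weak formulation, weighted by $\tfrac12\,\d\langle S\rangle_r$. This is the concrete meaning of reading $\tfrac12\,\Delta v_t\,\d\langle S\rangle$ distributionally: the kink of $v_t$ at the free boundary contributes a Dirac mass. Combining the classical $\Delta v$ part found here with the $\eta_\star\Delta v$ inherited from~\eqref{eq:lSPDEhom} gives the diffusion coefficient $\eta_\star+\tfrac12\sigma_s^2$.

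I expect the delicate point to be the treatment of the moving interface at $p=S_t$: one must retain the jump $\nabla v_t(S_t+)-\nabla v_t(S_t-)$ --- which by Corollary~\ref{cor:fdr} equals $\tfrac{\pi}{L}(V_t^a-V_t^b)$ and is generically nonzero --- rather than discarding endpoint terms as one would for a globally $H^2$ field, since precisely these terms carry the moving-boundary conditions. For a general weak solution of~\eqref{eq:lSPDEhom} not of the finite-dimensional form, the same computation requires the genuine It\^o--Wentzell formula for the random field $u_t(\cdot)$ composed with the semimartingale $-S_t$, whose cross term $\d[\nabla u_\cdot(\cdot),S]$ replaces the mixed terms above; establishing enough spatial regularity (each $u_t\in H^2$ up to the interface) to justify that formula together with the two integrations by parts is then the main technical burden, and is presumably what is carried out in Appendix~\ref{A:trafo}.
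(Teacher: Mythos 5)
Your argument is correct for the case it treats, but it takes a genuinely different --- and strictly narrower --- route than the paper's. The paper's proof (Appendix~\ref{A:trafo}) never uses the finite-dimensional realization: it takes a general (analytically) strong solution $u_t$ with values in $H^2(I)\cap H^1_0(I)$, extends it by zero to $\tilde u_t$ on $\R$, records the distributional identity \eqref{eq:33} by which the kinks of $\tilde u_t$ at $\{-L,0,L\}$ turn $\partial_{xx}\tilde u_t$ into $\Delta\tilde u_t$ plus three Dirac masses, and then applies Krylov's generalized It\^o--Wentzell formula (Theorem~\ref{thm:itowentzell}) to $v_t=\tilde u_t(\cdot-S_t)$; the interface terms of the weak formulation are exactly those Dirac masses weighted by $\tfrac12\d \langle S\rangle_t$, and the cross terms $\sigma_t^k\,\ddx c_t^k$ of that formula produce the convection corrections that you obtain from $\d[V^\star,S]_t$. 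You instead restrict to the two-factor realization of Corollary~\ref{cor:fdr}, where $\langle v_t,\varphi\rangle = V_t^a\Psi^a(S_t)+V_t^b\Psi^b(S_t)$ is a smooth function of three scalar semimartingales, so the ordinary It\^o formula suffices; your two integrations by parts are precisely the test-function-side dual of \eqref{eq:33}, and your bookkeeping checks out: $\d[V^a,S]_t = c_s\theta\,\sigma_a V_t^a(\varrho_{a,b}\sigma_b-\sigma_a)\d t$ does give the ask-side coefficient $-c_s\theta(\varrho_{a,b}\sigma_b\sigma_a-\sigma_a^2)$, and the eigenfunction relation $A_\star H_1^\star=-\nu_\star H_1^\star$ (which your phrase ``inherited from \eqref{eq:lSPDEhom}'' implicitly invokes) is what converts the scalar drifts $-\nu_\star V_t^\star$ back into the spatial operator $\eta_\star\Delta\pm\beta_\star\nabla+\alpha_\star$ acting on $v_t$. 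What each approach buys: yours is elementary and self-contained, avoiding any distribution-valued It\^o--Wentzell theorem; the paper's proves the weak identity for arbitrary $H^2$-regular data --- all higher eigenmodes, and the inhomogeneous model of Section~\ref{sec:meanrev}, since the appendix carries the source terms $f_a,f_b$ along --- which is the scope that the theorem's generic one-sided derivatives $\nabla v_r(S_r\pm)$, $\nabla v_r(S_r\pm L\mp)$ suggest is intended. So strictly as a proof of the stated theorem, your argument covers only the (economically central) special case spanned by $H_1^a,H_1^b$; you flag this honestly, and your deferral of the general case to the It\^o--Wentzell formula is exactly what the appendix carries out. One reassuring by-product: your double integration by parts yields the boundary term with signs $\left(\nabla v_r(S_r+)-\nabla v_r(S_r-)\right)\varphi(S_r)+\nabla v_r(S_r-L+)\varphi(S_r-L)-\nabla v_r(S_r+L-)\varphi(S_r+L)$, which agrees with \eqref{eq:33} and the appendix computation; the opposite signs displayed in the theorem appear to be a typo in the statement, not a defect of your method.
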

\begin{rmk}
  Note that  \eqref{eq:bdryst_hom} is a {\it stochastic boundary condition} at $S_t$.
\end{rmk}
The proof, given in Appendix~\ref{A:trafo},
is based on
Krylov's extended It\^{o}-Wentzell
formula \cite[Theorem 1.1]{krylovItoWentzell}.

\subsection{Parameter estimation}
\label{ssec:fitHom}

We now describe a method for estimating model parameters. We use time series of order books for NASDAQ stocks and ETFs, from the LOBSTER database.

Given that we do not observe separately the various components of the order flow as in \eqref{eq:lSPDEhom},   we use the relations discussed in Sec. \ref{ssec:obshape}
 to calibrate the parameters $\sigma$,  $\nu$ and the shape parameter 
\begin{equation}
  \label{eq:gamma}
  \gamma:= \frac{\beta}{2\eta}.
\end{equation}
for each side of the order book.
We set $L$ to the largest value  in our data set,
($L:=1000$). Parameters may be calibrated either through
\begin{enumerate}[label=(\alph*)]
\item a least squares fit of  \eqref{eq:average} to the average order book profile, or
 \item calibrating parameters to reproduce the position $\hat x$ and height of the maximum of the order book profile. 
\end{enumerate}
\begin{rmk}
  The estimator based on the
  maximum position of the peak is fast in computation but the fixed
  price level grid in the data restricts the set possible values for
  estimation of $\gamma$. In particular, the estimator is sensitive to
   the location of the maximum (i.e. the mode of the order book profile).
\end{rmk}
We show results for a set of NASDAQ stocks and
ETFs. Figure~\ref{fig:profile_fit_qqq} shows how the model reproduces the average book profile for QQQ at NASDAQ on 17th November 2017. In
Figure~\ref{fig:profile_fit_gamma} we see the coefficient $\gamma$
estimated across various 30-min windows during the trading day.
The
one-factor model based on the principal eigenfunction yields a
reasonable approximation for the average order book profile, which justifies our assumptions on the dynamics in~Section~\ref{ssec:setup_dyn}.

For low-price/large tick stocks, the average order book profiles may
differ from the exponential-sine shape. For such stocks, we  use
the nonlinear scaling described in Section~\ref{ssec:scaling}, leading to an average order book profile:
\begin{equation}
  \label{eq:9}  U(p) =  V\exp(-\gamma ((p-S_t)/\delta)^{a}) 
  \sin\left(((p-S_t)/\delta)^{a} \pi / L\right)),
\end{equation}
where $S_t $ is the best price. Figure~\ref{fig:profile_fit_siri} shows such a nonlinear fit for the average order book profile of
SIRI.

\begin{figure}
  \centering
  \includegraphics[width=\textwidth]{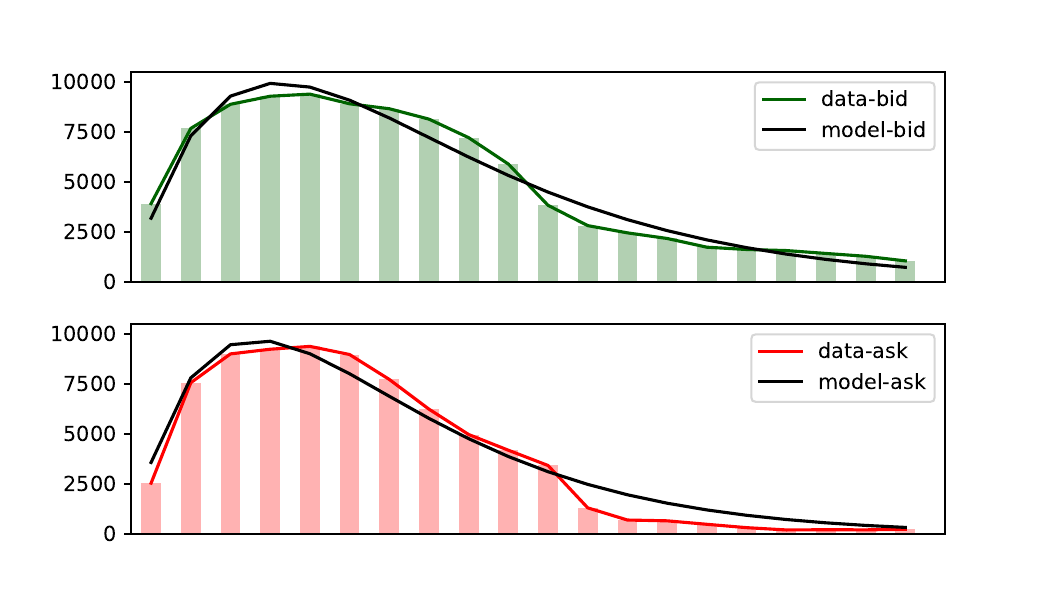}\\[-.85cm]
  \caption{Average profile of QQQ order book (first 20 levels), 17 Nov 2016 (Top: bid, Bottom: ask).}
  \label{fig:profile_fit_qqq}
\end{figure}
\begin{figure}
  \centering
  \includegraphics[width=\textwidth]{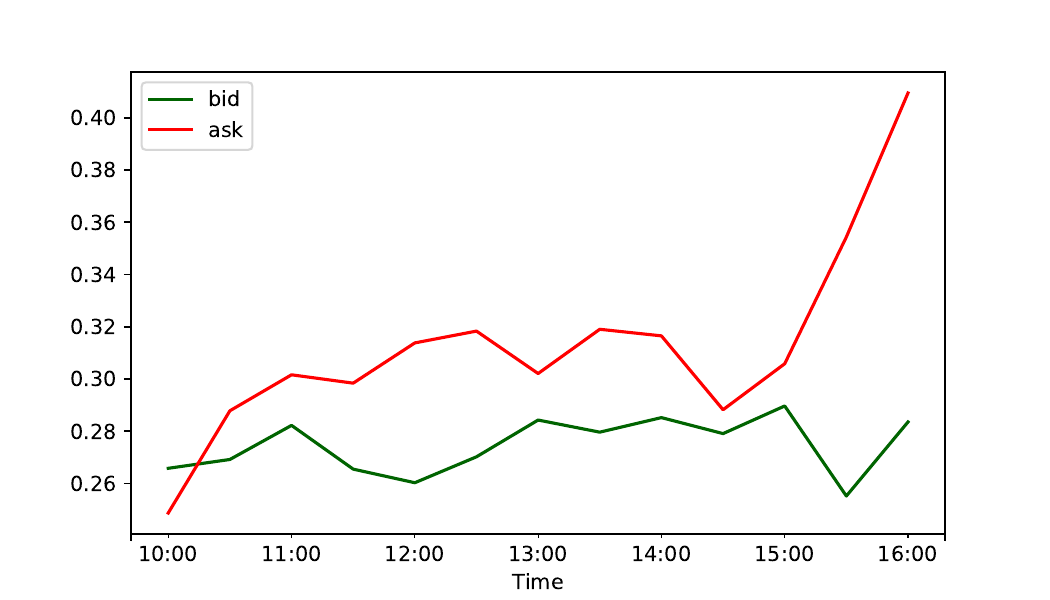}\\[-.35cm]
  \caption{Values of parameter $\gamma_a,\gamma_b$ estimated from 30 min average profile of  QQQ  order book (first 20 levels), 17 Nov 2016.}
  \label{fig:profile_fit_gamma}
\end{figure}
\begin{figure}
  \centering
  \includegraphics[width=\textwidth]{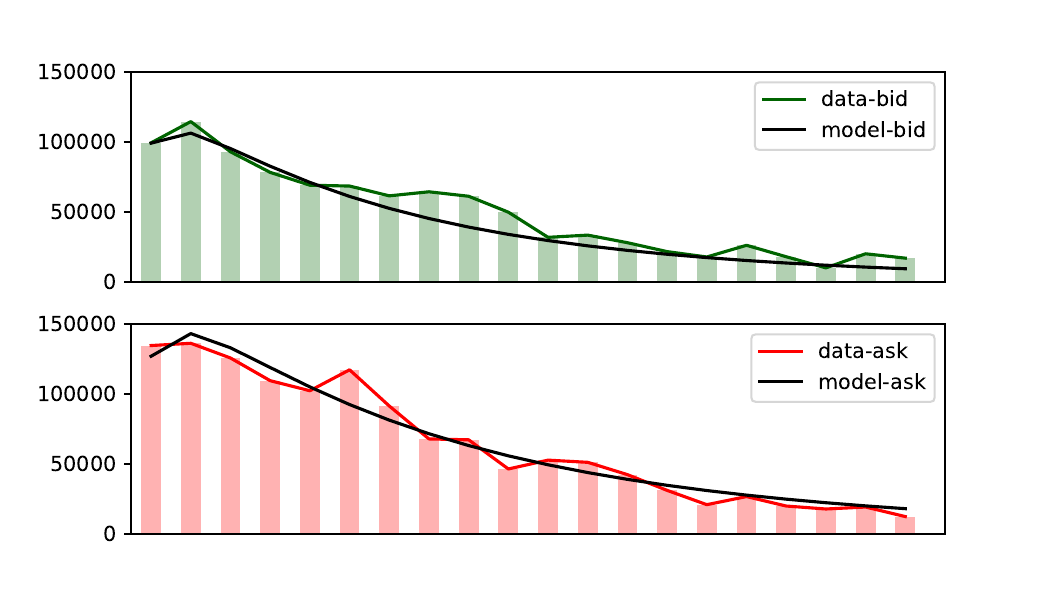}\\[-.85cm]
  \caption{Average profile of SIRI order book (first 20 levels, 17th November 2016)(Top: bid, Bottom: ask) $\gamma_b = 0.95$,
    $\gamma_a = 0.86$, $a_b= 0.52$, $a_a = 0.56$.}
  \label{fig:profile_fit_siri}
\end{figure}

\section{Mean-reverting models}
\label{sec:meanrev}
\subsection{A class of models with mean-reversion}

We now return to the full model \eqref{eq:lSPDE} with  non-zero source terms $ f^a(x), f^b(x)$ representing the rate of arrival of new limit orders at a distance $x$ from the best price:
\begin{equation}
 \nonumber
  \begin{split}
    \d u_t(x) &= \left[\eta_a \Delta u_t(x) + \beta_a \nabla u_t(x) + \alpha_a u_t(x) +  f^a(x)  \right]\d t+ \sigma_a u_t(x)  \d W^a_t , \quad \,x \in (0,L), \\
    \d u_t(x) &= \left[\eta_b \Delta u_t(x) - \beta_b \nabla u_t(x) + \alpha_b u_t(x) + f^b(x) \right]\d t  + \sigma_b u_t(x)  \d W^b_t, \quad \,x \in (-L, 0), \\
    u_t(0+) & = u_t(0-) = 0,\qquad
    u_t(-L)  = u_t(L) = 0,\quad  t>0,
  \end{split}
\end{equation}
with the sign condition
\begin{equation}\nonumber
  u_t(x) \leq 0, \quad x\in (-L,0),\quad \text{and}\quad u_t(x) \geq 0,\quad x\in (0,L), \;t\geq 0,  
\end{equation}
where, as above $\eta_a$, $\eta_b$, $\sigma_a$, $\sigma_b >0$,
$\beta_a$, $\beta_b \geq 0$, $\alpha_a$, $\alpha_b\in \R$ are constants and  $u_0\in L^2((-L,L))$. As above, we denote $u^b_0 :=
u_0\vert_{[-L,0]}$ and $u^a_0 := u_0\vert_{[0,L]}$.
 We will show that, when $\alpha_a$ and $\alpha_b$ are negative and $f^a(x) >0$, $f^b(-x)<0$ for all $x\in (0,L)$, this class of models leads to mean reverting
dynamics for the order book profile, consistent with the observation that
intraday dynamics of order book volume and queue size over
intermediate time scales (hours, day) typically exhibit mean reversion rather than a trend. 

Projecting the equation on the eigenfunctions $h_k^a,h_k^b$, as in Section \ref{sec:2factorLOB},  we see that, due to the fast increase in the eigenvalues \eqref{eq:spectrum}, solutions starting from a generic initial condition may be approximated by their projection on the principal eigenfunctions $h_1^a,h_1^b$ (we will justify this below in Proposition \ref{prop:asympLinInhom}) and the main contribution of heterogeneous order arrivals arises from the projection of  $f^a$ (resp. $f^b$) on $h_1^a$ (resp. $h_1^b$). 

 This motivates the following specfication, which leads to a tractable class of models:
\begin{equation}
  f^a(x) := \bar V_a\   H_1^a(x) ,\qquad f^b(x) := \bar V_b \ H_1^b(x), \quad \bar V_a>0, \quad \bar V_b >0 \label{eq:efinhom}
\end{equation}
Theorem~\ref{thm:inhomo} then gives explicit solutions to \eqref{eq:lSPDE}.
 Recall the notations~\eqref{eq:iprod_gamma+} and
\eqref{eq:iprod_gamma-} and define $V_t^b$ and $V_t^a$ by
\begin{equation}
  \label{eq:model_recgammadiff}
  \begin{alignedat}{2}
    \d V_t^a = \left(\bar V_a - \nu_a V_t^a\right) \d t + \sigma_a V_t^a \d W_t^a,&
    \d V_t^b = \left(\bar V_b - \nu_b V_t^b\right) \d t + \sigma_b V_t^b \d W_t^b
  \end{alignedat}  
\end{equation}
where $\nu_{i}:= \frac{\eta_{i} \pi^2}{L^2} +
\frac{\beta_{i}^2}{4\eta_{i}} -\alpha_{i}$, $i\in \{a,\,b\}$. The solution of the SPDE may then be obtained as follows:
\begin{prop} \label{prop:solutionLinInhom}
  \begin{enumerate}[label=(\roman*)]
  \item \label{i:inhom:general} The unique $L^2$-continuous solution
    of~\eqref{eq:lSPDE} -- \eqref{eq:efinhom} for a general initial condition $u_0$ is given by
    \begin{equation*}
      u_t(x) =
      \begin{cases}
        V_t^b H_1^b(x)  + \mathcal{E}_t(\sigma_b W^b) \sum_{k=1}^\infty e^{-\nu_k^bt}
        \iprod{u_0^b - V_0^b H_1^b}{h_k^b}{(-\frac{\beta_b}{2\eta_b})}
        h_k^b(x),&x\in (-L,0), \\
        V_t^a H_1^a(x)  + \mathcal{E}_t(\sigma_a W^a)\sum_{k=1}^\infty e^{-\nu_k^at}
        \iprod{u_0^a -V_0^a H_1^a}{h_k^a}{\frac{\beta_a}{2\eta_a}}
        h_k^a(x),&x\in 
        (0,L), \\
        0,\qquad x\notin 
        (-L,0)\cup (0,L).&
      \end{cases}          
    \end{equation*}
  \item \label{i:inhom:special} For an initial condition of the form
    \[ u_0(x) =        V_0^a H_1^a(x)
    \1_{[0,L]}+
        V_0^b H_1^b(x)\1_{[-L,0]} \]
    the unique $L^2$-continuous solution of~\eqref{eq:lSPDE} -- \eqref{eq:efinhom}  is given by
  \begin{equation}u_t(x) = \left(V_t^a H_1^a(x) \1_{(0,L)}(x) + V_t^b
        H_1^b(x) \1_{(-L,0)}(x)\right),\quad x\in [-L,L].\label{eq:proj} \end{equation}
  \end{enumerate}  
\end{prop}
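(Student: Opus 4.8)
The plan is to reduce the statement to a side-by-side application of Theorem~\ref{thm:inhomo} together with the spectral representation of the Dirichlet semigroups already obtained in Proposition~\ref{prop:solutionLinHom}. First I would observe that, with the source term~\eqref{eq:efinhom}, the system~\eqref{eq:lSPDE} decouples into two independent inhomogeneous equations of the form~\eqref{eq:gSPDEinh}: one on $I^a=(0,L)$ driven by $X=\sigma_a W^a$ with operator $A_a=\eta_a\Delta+\beta_a\nabla+\alpha_a$ and source $f^a=\bar V_a H_1^a$, and symmetrically on $I^b=(-L,0)$. The correlation between $W^a$ and $W^b$ plays no role at the level of the pathwise solution formula, so the two sides may be solved separately and recombined; the boundary conditions $u_t(0\pm)=u_t(\pm L)=0$ are automatically respected, since every $h_k^\star$, and hence $H_1^\star$, vanishes at $0$ and at $\pm L$.

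Next I would identify $H_1^\star$ as an eigenfunction. By Proposition~\ref{prop:ev} the function $h_1^\star$ is the principal eigenfunction of $A_\star$ with eigenvalue $-\nu_\star$, where $\nu_\star=\nu_1^\star$; since $H_1^\star$ is a positive scalar multiple of $h_1^\star$ by~\eqref{eq:L1}, it is an eigenfunction of $A_\star$ for the same eigenvalue. This places us exactly in the setting of Theorem~\ref{thm:inhomo}, with the dictionary $\lambda=-\nu_\star$, $f=H_1^\star$ and $\alpha=\bar V_\star$. Under this identification the factor SDE~\eqref{eq:Z} coincides with the defining equation~\eqref{eq:model_recgammadiff} for $V_t^\star$, so that $V^\star$ is precisely the factor process $Z^\star$ of that theorem.

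Part~\ref{i:inhom:special} is then immediate: for $u_0=V_0^a H_1^a+V_0^b H_1^b$ the restriction $u_0^\star=V_0^\star H_1^\star$ is of the form $z_0 f$ with $z_0=V_0^\star$, so Theorem~\ref{thm:inhomo}\ref{inhomo:i:1factor} gives $u_t^\star=V_t^\star H_1^\star$ on each side, which recombines to~\eqref{eq:proj}. For Part~\ref{i:inhom:general} I would invoke Theorem~\ref{thm:inhomo}\ref{inhomo:i:2factor}: writing $u_0^\star=V_0^\star H_1^\star+(u_0^\star-V_0^\star H_1^\star)$, the solution is $u_t^\star=g_t^\star\,\mathcal{E}_t(\sigma_\star W^\star)+H_1^\star V_t^\star$, where $g^\star$ is the weak solution of $\partial_t g^\star=A_\star g^\star$ with $g_0^\star=u_0^\star-V_0^\star H_1^\star$. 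The existence of $g^\star$ and its explicit spectral form are supplied by the computation in the proof of Proposition~\ref{prop:solutionLinHom}: the Dirichlet semigroup $S_t^\star$ acts diagonally on the orthonormal basis $(h_k^\star)$ of the weighted space~\eqref{eq:iprod_gamma+}--\eqref{eq:iprod_gamma-}, so that $g_t^\star=\sum_{k\geq 1}e^{-\nu_k^\star t}\langle u_0^\star-V_0^\star H_1^\star,h_k^\star\rangle_{\pm\gamma_\star}\,h_k^\star$, which yields the stated series.

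Finally, uniqueness and $L^2$-continuity close the argument. Uniqueness follows from Theorem~\ref{thm:inhomo}\ref{inhomo:i:2factor-nec}: any weak solution $u^\star$ produces $g^\star=(u^\star-H_1^\star V^\star)\,\mathcal{E}(\sigma_\star W^\star)^{-1}$ solving the deterministic equation, whose solution is unique (being the Dirichlet semigroup applied to the initial datum), forcing $u^\star$ to coincide with the process constructed above. Continuity in $L^2$ then follows by combining the continuity of the scalar exponential $t\mapsto\mathcal{E}_t(\sigma_\star W^\star)$, the $L^2$-continuity of $t\mapsto g_t^\star$ coming from strong continuity of the semigroup, and the continuity of $V^\star$. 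The only genuinely delicate point is the convergence of the spectral series and the interchange of the semigroup with the infinite sum; this is handled exactly as in Proposition~\ref{prop:solutionLinHom}, where the decay $e^{-\nu_k^\star t}\sim e^{-ck^2 t}$ of the weights gives absolute and locally uniform convergence for $t>0$ via the Cauchy--Schwarz bound and the integral comparison recorded there.
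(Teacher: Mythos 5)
Your proposal is correct and follows essentially the same route as the paper, whose own (very terse) proof likewise combines Theorem~\ref{thm:inhomo}, the spectral decomposition of Proposition~\ref{prop:ev}, and the homogeneous solution formula of Proposition~\ref{prop:solutionLinHom}. You simply spell out what the paper leaves implicit --- the eigenfunction dictionary $\lambda=-\nu_\star$, $\alpha f=\bar V_\star H_1^\star$ matching \eqref{eq:Z} with \eqref{eq:model_recgammadiff}, uniqueness via Theorem~\ref{thm:inhomo}\ref{inhomo:i:2factor-nec}, and the $L^2$-continuity and series-convergence details --- all of which are accurate.
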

\begin{proof}
  We obtain the general solution of the linear homogeneous equation
  from Proposition~\ref{prop:solutionLinHom}. The series  representation of $u$ 
  results from the spectral
decomposition, Proposition~\ref{prop:ev} and Theorem~\ref{thm:inhomo}.
\end{proof}

\subsection{Long time asymptotics and stationary solutions}

In order to derive properties of the 'average' order book profile, we now examine whether the order book profile $u_t$ has an ergodic behavior and describe stationary solutions. 
The following result  describes the long-term dynamics and shows that this dynamics is well approximated by projecting the initial condition on the principal eigenfunctions as done in \eqref{eq:proj}:
\begin{prop}\label{prop:asympLinInhom}
 Let $u_t$ be the unique solution  of
\eqref{eq:lSPDE} -- \eqref{eq:efinhom} for a general initial condition 
 $u_0\in L^2(-L,L)$ and define:
  \begin{equation}
    \label{eq:37}
    \check u_t (x):= V^b_t  H^b_1(x)  \1_{(-L,0)}(x) +  V^a_t  H^a_1(x)  \1_{(0,L)}(x),\qquad t>0.    
  \end{equation}  
      If $\nu_1^b>0$ and $\nu_1^a>0$, then:
  \begin{enumerate}[label=(\roman*)]
  \item\label{i:conv_unif}   The long-term dynamics of the order book is well approximated by the dynamics \eqref{eq:37} projected along the principal eigenfunctions:
    \begin{equation}
      \label{eq:46}
   \PP\left(   \lim_{t\to\infty} \norm{u_t -  \check u_t}{\infty} = 0\quad\right)=1.
    \end{equation}
    \item \label{i:conv_weak}$u_t$ has a unique stationary distribution and
    \begin{equation}
      \label{eq:36}
      u_t(x) \underset{t\to\infty}{\Longrightarrow} f^b(x) Z^b, \quad x<0,\qquad     u_t(x) \underset{t\to\infty}{\Longrightarrow} f^a(x) Z^a, \quad x>0,
    \end{equation}
    where $f^a,f^b$ are given by \eqref{eq:efinhom} and $Z^a$ (resp. $Z^b$ ) is an Inverse Gamma 
    random variable with shape parameter $1+2\frac{\nu_{a}}{\sigma_a^2}$ (resp. $1+2\frac{\nu_{b}}{\sigma_b^2}$) and scale
    parameter $\frac{\sigma_a^2}{2\bar V_{a}}$ (resp. $\frac{\sigma_b^2}{2\bar V_{b}}$). 
  \item\label{i:conv_L2} If  furthermore $\nu_1^b>\frac{\sigma_b^2}{2}$ and
    $\nu_1^a>\frac{\sigma^2_a}{2}$, then 
    \begin{equation}
      \label{eq:46}
      \lim_{t\to\infty} \E{\norm{u_t -  \check u_t}{L^2(-L,L)}^2}= 0.
    \end{equation}
  \end{enumerate} 
\end{prop}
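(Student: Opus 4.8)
The plan is to reduce all three parts to controlling the residual $u_t-\check u_t$. By Proposition~\ref{prop:solutionLinInhom}\ref{i:inhom:general}, on the bid side
\begin{equation*}
 u_t-\check u_t = \mathcal{E}_t(\sigma_b W^b)\,R_t^b \ \text{ on } (-L,0),\qquad R_t^b := \sum_{k=1}^\infty e^{-\nu_k^b t}\, d_k^b\, h_k^b,\quad d_k^b := \iprod{u_0^b - V_0^b H_1^b}{h_k^b}{-\gamma_b},
\end{equation*}
and symmetrically on $(0,L)$. The crucial structural point is that $R_t^\star$ is \emph{deterministic} and inherits the spectral decay, all randomness being carried by the scalar $\mathcal{E}_t(\sigma_\star W^\star)$. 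Throughout I would use three elementary facts: $\norm{h_k^\star}{\infty}\le 1$ (since $|\sin|\le 1$ and the exponential weight is $\le 1$ on its own side), $|d_k^\star|\le \norm{u_0^\star - V_0^\star H_1^\star}{\pm\gamma_\star}$ together with $\sum_k|d_k^\star|^2 = \norm{u_0^\star - V_0^\star H_1^\star}{\pm\gamma_\star}^2<\infty$ (Cauchy--Schwarz and Parseval in the weighted inner product), and $\nu_k^\star\ge\nu_1^\star$ with $\nu_k^\star\sim\eta_\star\pi^2 k^2/L^2$, so that $\sum_{k\ge1}e^{-\nu_k^\star t}\le C e^{-\nu_1^\star t}$ for $t\ge 1$.

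For \ref{i:conv_unif} I would bound $\norm{R_t^b}{\infty}\le \sum_k e^{-\nu_k^b t}|d_k^b|\le C e^{-\nu_1^b t}$, and note that by the law of large numbers for Brownian motion $\tfrac1t\log\mathcal{E}_t(\sigma_b W^b)=\sigma_b W_t^b/t-\tfrac12\sigma_b^2\to-\tfrac12\sigma_b^2$ almost surely. Hence $\mathcal{E}_t(\sigma_b W^b)\,e^{-\nu_1^b t}$ decays like $e^{-(\frac12\sigma_b^2+\nu_1^b)t}\to0$ a.s., the exponent being negative precisely because $\nu_1^b>0$. Combining the bid and ask contributions (which have disjoint support) gives $\norm{u_t-\check u_t}{\infty}\to0$ a.s.

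For \ref{i:conv_L2}, since $R_t^b$ is deterministic and $\E{\mathcal{E}_t(\sigma_b W^b)^2}=e^{\sigma_b^2 t}$, I would write $\E{\norm{u_t-\check u_t}{L^2(-L,0)}^2}=e^{\sigma_b^2 t}\norm{R_t^b}{L^2(-L,0)}^2$, pass to the equivalent weighted norm in which the $h_k^b$ are orthonormal to obtain $\norm{R_t^b}{L^2}^2\le C\sum_k e^{-2\nu_k^b t}|d_k^b|^2$, and thus bound the bid contribution by $C\sum_k e^{(\sigma_b^2-2\nu_k^b)t}|d_k^b|^2$. Under $\nu_1^b>\sigma_b^2/2$ every exponent is negative (as $\nu_k^b\ge\nu_1^b$), so each summand vanishes, and dominated convergence for series—dominating by the value at $t=1$, summable since $\sum_k|d_k^b|^2<\infty$—gives the limit $0$; the ask side is identical. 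For \ref{i:conv_weak}, I would note that by \eqref{eq:model_recgammadiff} each $V_t^\star$ solves the $d=0$ linear SDE of Section~\ref{sec:pearson} with $a=-\nu_1^\star<0$ and $c=\bar V_\star>0$, so Proposition~\ref{prop:pearson_ergodicity} gives $V_t^\star\Rightarrow V_\infty^\star$, an Inverse Gamma variable with the stated shape and scale. Since $u_t(x)-\check u_t(x)\to0$ in probability by \ref{i:conv_unif} and $\check u_t(x)=H_1^b(x)V_t^b$ for fixed $x<0$, Slutsky's theorem yields $u_t(x)\Rightarrow H_1^b(x)V_\infty^b$, the asserted stationary profile, and symmetrically for $x>0$.

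I expect the main difficulty to lie in \ref{i:conv_unif}: one must combine the deterministic $e^{-\nu_1^\star t}$ decay of the spectral tail with the pathwise—fluctuating but almost surely vanishing—behaviour of $\mathcal{E}_t(\sigma_\star W^\star)$, and do so in the sup norm rather than in mean square. The clean device is to extract the almost-sure exponential rate $-\tfrac12\sigma_\star^2$ of $\mathcal{E}_t$ from the law of large numbers and pair it with the leading-mode bound $\norm{R_t^\star}{\infty}\le Ce^{-\nu_1^\star t}$; some care is also needed to justify that the whole tail $\sum_{k\ge1}e^{-\nu_k^\star t}$ is controlled by its first term for large $t$.
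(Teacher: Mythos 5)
Your proposal is correct and follows essentially the same route as the paper's proof: you identify the residual $u_t-\check u_t$ as the homogeneous solution started from $u_0-\check u_0$, bound its sup norm by the leading spectral weight $e^{-\nu_1^\star t}$ times $\mathcal{E}_t(\sigma_\star W^\star)$ (which vanishes a.s.\ under $\nu_1^\star>0$ by the law of large numbers for Brownian motion), prove part (iii) via the same second-moment computation $\E{\mathcal{E}_t(\sigma_\star W^\star)^2}=e^{\sigma_\star^2 t}$ against the weighted-orthonormal expansion, and obtain part (ii) from Proposition~\ref{prop:pearson_ergodicity} combined with the a.s.\ convergence of part (i). The only cosmetic differences are that you control the spectral tail by a uniform coefficient bound and an $\ell^1$ estimate (plus dominated convergence in (iii)) where the paper uses Cauchy--Schwarz with the constant $K_{t_0}$ and the cruder bound $e^{-2\nu_k^\star t}\le e^{-2\nu_1^\star t}$.
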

\begin{proof}

  For $t_0>0$, let 
  \[K_{t_0} := \sqrt{\sum_{k=1}^\infty e^{-2(\nu_k^a - \nu_1^a)t_0}} <\infty.\]
  This term is indeed finite by integral criterion for series, see
  e.\,g. proof of Proposition~\ref{prop:solutionLinHom}. Denote  $u_t^\circ(.;h)$ the unique solution of the linear
  homogeneous equation~\eqref{eq:lSPDEhom} for an initial condition $h$. Recall from Theorem~\ref{thm:inhomo} that 
  \[ u_t(x) - \check u_t(x) = u_t^\circ(x; u_0-\check u_0).\]
  It suffices now to prove the results for the ask side and note that the
  calculations will be analogous for the bid side. 
  Using the representation of $u_t^\circ$ from
  Proposition~\ref{prop:solutionLinHom} we get for all $t> t_0$ and all $h\in L^2(0,L)$,
  \begin{align*}
    \norm{u_t^\circ(.; h)\vert_{(0,L)}}{\infty} 
    &\leq e^{-\nu_1^at}\mathcal{E}_t(\sigma_a W^a) \sqrt{\sum_{k=1}^\infty
      e^{-2(\nu_k^a-\nu_1^a) t_0}}\sqrt{\sum_{k=1}^\infty
      \abs{\scalp{h}{h^a_k}{\frac{\beta_a}{2\eta_a}}}^2} \\
    &= K_{t_0} \norm{h}{\frac{\beta_a}{2\eta_a}} \exp\left( \sigma_a
      W^a_t - \left(\nu_1^a+ \tfrac{\sigma_a^2}{2} \right)t\right),
  \end{align*}
  which, as $t\to \infty$, converges to $0$ provided that $\nu_1^a>0$. This proves \ref{i:conv_unif}. 

  To show \ref{i:conv_L2}, a similar calculation but using the
  orthogonality of the decomposition in
  Proposition~\ref{prop:solutionLinHom} yields
  \begin{align*}
    \E{\norm{u_t^\circ(.; h)\vert_{(0,L)}}{
\frac{\beta_a}{2\eta_a}}^{2}} 
    &= \sum_{k=1}^\infty e^{-2\nu_k^a t}\abs{\scalp{h}{h^a_k}{\frac{\beta_a}{2\eta_a}}}^2 \E{\abs{\mathcal{E}_t(\sigma_a W^a)}^2} \\
    &\leq e^{-2\nu_1^at}\norm{h}{\frac{\beta_a}{2\eta_a}}^2
      \E{\exp\left( 2\sigma_a W^a_t -\sigma_a^2 t\right)} \\
    &=e^{(-2\nu^a_1 +\sigma^2_a)t} \norm{h}{\frac{\beta_a}{2\eta_a}}^2.
  \end{align*}
  If $ \sigma_a^2< 2\nu_1^a$, then this converges to $0$ as $t\to
  \infty$. Since $\norm{.}{\frac{\beta_a}{2\eta_a}}$ defines an
  equivalent norm on $L^2(0,L)$, this finishes the proof of~\ref{i:conv_L2}.
  
 Assertion~\ref{i:conv_weak} follows from
  Proposition~\ref{prop:pearson_ergodicity}. Indeed, recall that $V^i, i\in \{a,b\}$ are  ergodic
  processes whose unique invariant distribution is given by an Inverse Gamma
  distribution with shape parameter $1+\frac{2\nu_{i}}{\sigma_i^2}$ and scale
  parameters $\frac{\sigma_i^2}{(\bar V_i)^2}$, $i\in
  \{a,\,b\}$. Denote by $Z^b$ and $Z^a$ random variables
  with these distribution For any $x\in [-L,L]$, we have the convergence in distribution
  \begin{equation}
    \check u_t\vert_{(-L,0)}\Longrightarrow  Z^b\ f^b_1(.),\qquad
    \check u_t \vert_{(0,L)} \Longrightarrow Z^a\ f^a(.).
    \label{eq:distr_ev}
  \end{equation}
  Since almost sure convergence yields convergence in distribution, by
  part~\ref{i:conv_unif} this yields that~\eqref{eq:distr_ev} holds also
  for $u_t$ with arbitrary initial data $u_0\in L^2(-L,L)$. 
\end{proof}

\subsection{Dynamics of order book volume}
Consider now the `projected' dynamics as in the setting of Proposition~\ref{prop:solutionLinInhom}.\ref{i:inhom:special}. 
The dynamics of the  order book volume $V_t$ is then given by
\begin{equation}
  \label{eq:11}
  V_t := \int_{-L}^L \abs{u_t(x)}\d x =  V^b_t + V_t^a,\qquad t\geq 0,
\end{equation}
where $V^b$ and $V^a$,  defined in \eqref{eq:model_recgammadiff}, represent
the volume of buy (resp. sell) orders in the order book.

Since $[W^a, W^b]_t = \varrho_{a,b} t$  we can write
\[ W^a =: W, \qquad W^b := \varrho_{a,b} W + \sqrt{1- \varrho_{a,b}^2} \widehat W,\]
for some Brownian motion $\widehat W$, independent of $W$. Then,
\begin{multline}
  \label{eq:40}
  \d V_t  = ( \bar V_a + \bar V_b - (\nu_a V_t^a + \nu_b V_t^b) \d
  t  \\
  + \left(\sigma_a V_t^a  + \varrho_{a,b}  \sigma_b V_t^b \right)\d W_t + \sqrt{1-\varrho_{a,b}^2} \sigma_b V_t^b \d \widehat W_t. 
\end{multline}
In particular, the quadratic variation (`realized variance') of the order book volume is given by
\begin{equation}
  \label{eq:realVar}
  \d \langle V \rangle_t = 
  \left(\sigma_a^2 (V_t^a)^2 +
    2\varrho_{a,b}\sigma_b \sigma_a V_t^a V_t^b+ \sigma_b^2 (V_t^b)^2
  \right) \d t
\end{equation}
For the symmetric and perfectly correlated case, $V$ is itself  
a reciprocal gamma diffusion:
\begin{cor}
  Assume the setting of
  Proposition~\ref{prop:solutionLinInhom}.\ref{i:inhom:special} and,
  in addition,
  that $\nu_a = \nu_b =:\nu$, $\sigma_a = \sigma_b =: \sigma$ and
  $\varrho_{a,b}=1$. Then, $V$ is the unique solution of
  \begin{equation}
    \label{eq:21}
    \d V_t = \left((\bar V_b + \bar V_a) - \nu
      V_t\right) \d t + \sigma V_t\d W_t,
  \end{equation}
  with $V_0 = V_0^b + V_0^a$. 
\end{cor}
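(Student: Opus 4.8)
The plan is to show that under the stated symmetry hypotheses the diffusion coefficients and drift coefficients of the two-dimensional system for $(V^a,V^b)$ collapse so that the sum $V = V^a + V^b$ satisfies a one-dimensional linear SDE of the required form, and then to invoke the uniqueness already available for such SDEs (Proposition~\ref{prop:Zexp} / the discussion in Section~\ref{sec:pearson}).

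\medskip

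First I would simply add the two SDEs for $V^a$ and $V^b$ from \eqref{eq:model_recgammadiff}. Writing $V_t = V_t^a + V_t^b$ and differentiating,
\begin{equation*}
  \d V_t = \bigl(\bar V_a + \bar V_b - \nu_a V_t^a - \nu_b V_t^b\bigr)\d t + \sigma_a V_t^a \d W_t^a + \sigma_b V_t^b \d W_t^b .
\end{equation*}
Now I impose the hypotheses $\nu_a = \nu_b = \nu$ and $\sigma_a = \sigma_b = \sigma$: the drift becomes $\bigl((\bar V_a + \bar V_b) - \nu(V_t^a + V_t^b)\bigr)\d t = \bigl((\bar V_a+\bar V_b) - \nu V_t\bigr)\d t$, which is exactly the drift in \eqref{eq:21}. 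The only genuinely substantive step is the diffusion term. With $\sigma_a=\sigma_b=\sigma$ it reads $\sigma\bigl(V_t^a \d W_t^a + V_t^b \d W_t^b\bigr)$, and I must collapse this to $\sigma V_t \d W_t$. This is where the assumption $\varrho_{a,b}=1$ enters: perfect correlation means $W^a$ and $W^b$ are a.s.\ the same Brownian motion (since $[W^a-W^b]_t = (2 - 2\varrho_{a,b})t = 0$ forces $W^a_t = W^b_t$ a.s.\ given they start at $0$). Hence $V_t^a\d W_t^a + V_t^b\d W_t^b = (V_t^a + V_t^b)\d W_t = V_t\d W_t$ with $W := W^a = W^b$, yielding precisely the stated equation \eqref{eq:21}.

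\medskip

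For the uniqueness claim, I would note that \eqref{eq:21} is an instance of the linear SDE \eqref{eq:linearSDE} with $d=0$ (coefficients $a = -\nu$, $c = \bar V_a + \bar V_b > 0$, $b = \sigma$), whose pathwise uniqueness and explicit solvability are recorded in Proposition~\ref{prop:Zexp} and the surrounding discussion in Section~\ref{sec:pearson}; the initial value is $V_0 = V_0^a + V_0^b$ by construction. This identifies $V$ as the unique (reciprocal gamma / Pearson) diffusion solving \eqref{eq:21}.

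\medskip

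The only point requiring care — and the main potential obstacle — is the justification that $\varrho_{a,b}=1$ lets me treat the two driving Brownian motions as a single one inside the stochastic integral; I would make this rigorous via the vanishing quadratic variation of $W^a - W^b$ rather than leaving it implicit. Everything else is a one-line linear combination of the two given equations followed by a citation of the uniqueness result, so I would keep the argument short.

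\begin{proof}
  Setting $V_t := V_t^a + V_t^b$ and adding the two equations in \eqref{eq:model_recgammadiff}, we obtain
  \begin{equation*}
    \d V_t = \bigl(\bar V_a + \bar V_b - \nu_a V_t^a - \nu_b V_t^b\bigr)\d t + \sigma_a V_t^a \d W_t^a + \sigma_b V_t^b \d W_t^b .
  \end{equation*}
  Since $\nu_a = \nu_b = \nu$, the drift equals $\bigl((\bar V_a + \bar V_b) - \nu V_t\bigr)\d t$. Since $\varrho_{a,b} = 1$, we have $[W^a - W^b]_t = (2 - 2\varrho_{a,b})t = 0$, so $W^a_t = W^b_t$ almost surely for all $t\geq 0$; writing $W := W^a = W^b$ and using $\sigma_a = \sigma_b = \sigma$, the martingale part equals $\sigma(V_t^a + V_t^b)\d W_t = \sigma V_t \d W_t$. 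Hence $V$ solves \eqref{eq:21} with $V_0 = V_0^a + V_0^b$. This is the linear SDE \eqref{eq:linearSDE} with $a = -\nu$, $c = \bar V_a + \bar V_b$, $b = \sigma$ and $d = 0$, whose solution is unique by Proposition~\ref{prop:Zexp}.
\end{proof}
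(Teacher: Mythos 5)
Your proof is correct and takes essentially the same route as the paper, which states the corollary as an immediate consequence of the summed dynamics \eqref{eq:40}: there, setting $\varrho_{a,b}=1$ makes the $\widehat W$ term vanish and the martingale part collapses to $\sigma V_t \d W_t$, exactly as in your computation. Your explicit justification that $W^a=W^b$ via the vanishing quadratic variation of $W^a-W^b$ is a slightly more careful rendering of the same step, and the appeal to Proposition~\ref{prop:Zexp} for uniqueness matches the paper's framework.
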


In all cases, we get from~\eqref{eq:meanPearson} that for $i\in
\{a,b\}$, $t\geq 0$,
\begin{equation}
  \label{eq:42}
  \EE V_t^{i} = \left(V_0^{i} - \frac{\nu_i}{\bar V_i}\right) e^{-\nu_i t} +
  \frac{\nu_i}{\bar V_i}
\end{equation}
and
\begin{equation}
  \label{eq:42}
  \EE V_t^{i} = \left(V_0^{b} - \frac{\bar V_b}{\nu_b}\right) e^{-\nu_b t} +\left(V_0^{a} - \frac{\bar V_a}{\nu_a}\right) e^{-\nu_a t} +
  \frac{\bar V_b}{\nu_b} +
  \frac{\bar V_a}{\nu_a}.
\end{equation}

\subsection{Joint dynamics of mid-price and market depth}
\label{ssec:pricelininhom}
We now consider the mid price and market depths dynamics in the
situation of Proposition~\ref{prop:solutionLinInhom}.\ref{i:inhom:special}. As
discussed in Sections~\ref{sec:price} and
Section~\ref{ssec:pricelinh} for the linear homogeneous models,   the  dynamics of the mid-price is given by
\begin{equation*}
  \d S_t =    \theta \left(\frac{\d D_t^b}{D_t^b} - \frac{\d
      D_t^a}{D_t^a}\right),
\end{equation*}
where $\theta$ is an impact coefficient, while the bid/ask    depths follow
\begin{align*}
  D_t^a &:= \int_0^{\delta} u_t(x)\d x\approx \frac12\delta^2 \nabla
  u_t(0+)= \frac{\pi}{2L}\delta^2 V_t^a,\\
  D_t^b &:= -\int_{-\delta}^0 u_t(x) \d x \approx \frac12 \delta^2\nabla u_t(0-) =
  \frac{\pi}{2L}\delta^2 V_t^a.
\end{align*}
Thus, the dynamics of the market depths are given by
\begin{align*}
  \d D_t^b &= \nu_b\left(\overline{D}_b -  D_t^b\right) \d t + \sigma_b D_t^b \d W_t^b ,\\
  \d D_t^a &= \nu_a\left(\overline{D}_a -  D_t^a\right) \d t + \sigma_a D_t^a \d W_t^a .
\end{align*}
for some mean reversion levels $\overline{D}_b, \overline{D}_a>0$. We thus obtain the joint dynamics of 
price and market depth:
\begin{multline}
  \label{eq:price_depth_meanrev}
  \d
  \begin{pmatrix}
    D_t^b\\ D_t^a \\ S_t
  \end{pmatrix}
  =
  \begin{pmatrix}
    \nu_b(\overline{D}_b - D_t^b)\\
    \nu_a(\overline{D}_a - D_t^a)\\
     \theta\left( \frac{\nu_b\overline{D}_b}{D_t^b} - \frac{\nu_a\overline{D}_a}{D_t^a} -   (\nu_b - \nu_a)\right)
  \end{pmatrix}\d t\\
  + 
  \begin{pmatrix}
    \sigma_b D_t^b &0\\
    \varrho_{a,b}\sigma_a D_t^a & \sqrt{1-\varrho_{a,b}^2} \sigma_a
    D_t^a\\
      \theta\left(\sigma_b - \varrho_{a,b}\sigma_a\right) & -   \theta\sqrt{1-\varrho_{a,b}^2} \sigma_a
  \end{pmatrix}
  \d
  \begin{pmatrix}
    W^1_t\\ W_t^2
  \end{pmatrix},
\end{multline}
where $W^1$ and $W^2$ are independent Brownian motions.
The mid-price itself has quadratic variation $\langle S\rangle_t =
\sigma_S^2 t$, where
\begin{equation}
  \label{eq:midprice_vol}
  \sigma_S :=   \theta \sqrt{\sigma_b^2 + \sigma_a^2 - 2\sigma_a \sigma_b\varrho_{a,b}}.
\end{equation}
Over a small time interval $\Delta t,$
\begin{align*}
  S_{\Delta t} 
  &= S_0 +   \theta \int_0^{\Delta t}  \frac{\nu_b(\overline{D}_b -  D^b_s)}{D^b_s} - \frac{ \nu_a(\overline{D}_a - D^a_s)}{2D^a(s)} \d s+   \theta  \sigma_b  W_{\Delta t}^b -   \theta    \sigma_a W_{\Delta t}^a \\
  &\approx  S_0 + {\Delta t}   \frac{\theta}{2} \left(\frac{\nu_b (\overline{D}_b - D_{0}^b)}{D_{0}^b} - \frac{\nu_a(\overline{D}_a - D_{0}^a)}{ D_{0}^a}\right) + \sigma_S\sqrt{\Delta t}\ N_{0,1} 
\end{align*}
where $N_{0,1} $ is a standard Gaussian variable.
In particular the conditional probability of an upward mid-price move of size $y$ is given by
\begin{equation}\label{eq:approxpriceup}
  \P{ S_{\Delta t} \geq S_0 + y}
  \simeq N\left(
    \frac{  \theta \sqrt{{\Delta t}}}{\sigma_S}\left(\frac{\nu_b(\overline{D}_b - D_0^b)}{ D^b_{0}}
    - \frac{\nu_a(\overline{D}_a - D_0^a)}{D^a_{0}}\right) - \frac{y}{\sigma_S\sqrt{{\Delta t}}}\right),
\end{equation}
where $N$ denotes the cumulative distribution function of the standard normal distribution.
\begin{rmk}
  Using~\eqref{eq:meanPearson}, the expected order flow over a small time interval $[0,t]$ on each side of the book is given by
  for $\star\in \{a,b\}$,
  \begin{equation}
    \label{eq:30}
    \E{D_t^\star - D_0^\star} = t\nu_\star (\overline{D}_\star - D_0^\star) + o(t).
  \end{equation}
\end{rmk}
\begin{rmk}[Mean-reverting order book imbalance]
 The imbalance between buy and sell depth is a frequently used indicator for predicting short term price moves~\cite{cartea2018,cont2013,liptonImbalance}). In this model, the depth imbalance has the following dynamics:
  \[\d (D^b_t - D^a_t) = \left(\nu^b \overline{D}^b - \nu^a \overline{D}^a - (\nu^b D^b_t  - \nu^a D^a_t)\right) \d t + \sigma_bD^b_t \d W^b_t - \sigma_a D_t^a \d W_t^a.\]
  In the symmetric case, when $\overline{D}= \overline{D}_a = \overline{D}_b$, $\nu = \nu_a
  = \nu_b$, \eqref{eq:approxpriceup} becomes
  \begin{align}
    N\left(  \frac{\nu\overline{D}   \theta\sqrt{t}}{\sigma_S} \frac{\left(D^a_{0} - D^b_{0}\right)}{ D^a_{0} D^b_{0}} - \frac{y}{\sigma_S\sqrt{t}}\right).
  \end{align}
  This quantity is \emph{decreasing} in the depth imbalance $ D^b_{0} - D^a_{0}$:
  this is a consequence of the mean reversion in order book depth. In the symmetric case
  \begin{equation}
    \d (D^b_t - D^a_t) = -\nu \left(D^b_t  - D^a_t)\right) \d t + \sigma_bD^b_t \d W^b_t - \sigma_a D_t^a \d W_t^a,
  \end{equation}
  so the model reproduces the empirical observation that order book  imbalance is mean
  reverting \cite{cartea2018}.
  
Note that the model predicts mean reversion of market depths on the scale of
  $1/ \nu$ which corresponds to seconds for the ETFs QQQ and SPY and around 10 seconds for
 large tick stocks such as MSFT and INTC (see Table~\ref{tab:mix_av_30min}). For time scales smaller than  $1/ \nu$, the direction of price moves is highly correlated with order flow imbalance, as shown in  empirical studies  of equity markets  \cite{contImbalance}.
\end{rmk}

\subsection{Parameter estimation}
We now discuss estimation of model parameters from a discrete set of observations $( V^a_n, V^b_n)_{n=0,\ldots,N}$ of
the bid/ask volumes $V^a_t, V^b_t$ on a uniform time grid $\{k\Delta t \,\colon \,
k=0,\ldots ,N\}$. Let us rewrite the dynamics of
$V^a_t$ and $V^b_t$ in the form of reciprocal Gamma diffusions:
\begin{equation}
  \label{eq:19}
  \d V^\star_t = \nu_\star \left( \overline{D}_\star - V_0^\star
  \right) + \sqrt{2 \frac{ \nu_\star}{c_\star} (V_t^{\star})^2 } \d
  W^\star_t, \quad t\geq 0 ,\quad V_0^\star \in (0,\infty),\star\in \{a,b\}
\end{equation}
with $\nu_\star$, $\overline{D}_\star$, $c_\star>0$. We use method of moments estimators as
in~\cite{leonenko2010statistical}
for $\overline{D}_\star$ and $c_\star$ and a martingale estimation function \cite{bibby1995martingale} for
the autocorrelation parameters $\nu_\star$, $\star \in \{a,b\}$:
we define
\begin{equation}
  \nonumber
  \widehat{\overline{D}_\star} := \frac1{N} \sum_{k=1}^N \hat V_k,\quad{\rm and}\quad
  \hat{c_\star} := \frac{\sum_{n=1}^N (\hat V_n)^2}{\sum_{n=1}^N (\hat
    V_n)^2 - \widehat{ \overline{D}_\star^2 } }
    = 1 + 
    \frac{ \widehat{ \overline{D}_\star} ^2 }
    {\sum_{n=1}^N |\hat    V_n|^2 -  \widehat{\overline{D}_\star}^2 }.
\end{equation}
 Combining 
Proposition~\ref{prop:pearson_ergodicity} and
Remark~\ref{rmk:invgamma} with \cite[Theorem
6.3]{leonenko2010statistical} we obtain that if
$\overline{D}_\star>0$ and $c_\star > 5$, then $V^\star$ has finite $4$th moment and the estimators are
consistent and asymptotically normal.

For the autocorrelation parameters $\nu_a$ and $\nu_b$ we use the
martingale estimation function \cite[Section 2]{bibby1995martingale}:
\begin{equation}
  \label{eq:22}
  G_{\star}(\nu; \overline{D},c) := \frac{c}{\nu}\sum_{n=1}^N \frac{(\overline{D}_\star-\hat
    V^\star_{n-1})}{(\hat V_{n-1})^2}\left(\hat V_n - F(\hat V_{n-1}; \nu,\overline{D})\right),
\end{equation}
where
\begin{equation}
  \label{eq:24}
  F(z;\nu, \overline{D}) :=  (z - \overline{D}) e^{-\nu\Delta t} + \overline{D}.
\end{equation}
Given $\overline{D}_\star$, this yields the estimators
\begin{equation}
  \label{eq:26}
  \hat \nu_\star := \frac1{\Delta t} \log\left(-\frac{\sum_{n=1}^N \frac{(\overline{D}_\star -
        \hat V_{n-1})^2}{(\hat V_{n-1})^2} }{\sum_{n=1}^N \frac{(\overline{D}_\star -
        \hat V_{n-1})}{(\hat V_{n-1})^2} ( V_n - \overline{D}_\star)}\right),\qquad \star\in\{a,b\}.
\end{equation}
Convergence of this estimator is discussed in
 \cite[Theorem 3.2]{bibby1995martingale}.

We apply these estimators to high-frequency  limit order book time series for NASDAQ stocks and ETFs, obtained from the LOBSTER database, arranged into equally spaced observations over   time intervals of size $\Delta t= 10ms$ and $\d t = 50ms$. For each observation we use as market depth the average volume  of order in the first two price levels, respectively on bid and ask side.\footnote{The source code for
the implementation is  available online~\cite{lobpy}.}
Below we show  sample results for  ETFs (SPY and QQQ) and liquid stocks,(MSFT and INTC). 

Figure Table~\ref{tab:mix_av_30min} shows  estimated parameter values across different days for INTC, MSFT, QQQ and
SPY. We observe negative
values of correlation $\varrho_{a,b}$ across bid and ask order flows  which is consistent
with observations in~\cite{carmonaWebster}. 
\begin{table}[h]
  \centering
  \begin{tabular}{llccccccc}
  \toprule
  Ticker & Date &  $\mu_b$ &  $\mu_a $ &  $\nu_b$ &  $\nu_a $ &  $\sigma_b$ &  $\sigma_a $ &  $\varrho_{a,b}$ \\
  \midrule
  INTC   & 2016-11-15& 5179.0 &  5641.7 &   0.151 &   0.156 &      0.133 &      0.134 &       -0.077 \\
         & 2016-11-16& 5565.0 &  5672.5 &   0.082 &   0.118 &      0.111 &      0.124 &       -0.070 \\
         & 2016-11-17& 5776.5 &  7363.2 &   0.144 &   0.109 &      0.118 &      0.116 &       -0.019 \\
  \midrule
  MSFT   & 2016-11-15& 3035.6 &  3855.9 &   0.522 &   0.426 &      0.292 &      0.292 &       -0.092 \\
         & 2016-11-16& 2839.9 &  3562.1 &   0.409 &   0.395 &      0.239 &      0.240 &       -0.071 \\
         & 2016-11-17& 4149.0 &  5762.5 &   0.300 &   0.239 &      0.202 &      0.208 &       -0.146 \\
  \midrule  
  QQQ    & 2016-11-15& 4686.9 &  5489.2 &   2.467 &   1.972 &      0.724 &      0.639 &       -0.177 \\
         & 2016-11-16& 4801.0 &  5142.6 &   2.041 &   1.845 &      0.632 &      0.677 &       -0.177 \\
         & 2016-11-17& 6414.0 &  6226.4 &   1.428 &   1.281 &      0.510 &      0.506 &       -0.224 \\
    \midrule
  SPY    & 2016-11-15& 3903.4 &  4877.9 &   1.949 &   1.689 &      0.737 &      0.666 &       -0.176 \\
         & 2016-11-16& 3773.4 &  4486.4 &   1.324 &   1.763 &      0.578 &      0.657 &       -0.156 \\
         & 2016-11-17& 3693.0 &  4115.4 &   1.355 &   1.405 &      0.597 &      0.543 &       -0.181 \\
\bottomrule
\end{tabular}

  \caption{Averaged estimators for model parameters; $\nu$ and $\sigma$ are given per second.}
  \label{tab:mix_av_30min}
\end{table}
Figures~\ref{fig:cal_dyn} and~\ref{fig:cal_dyn2} show intraday variation of estimators for $\nu_a$, $\nu_b$, $\sigma_{a}$, $\sigma_b$
and $\varrho_{a,b}$ computed over 15-minute windows. 

There are various estimators for intraday price volatility in this model, which allows to test the model.
Recall that  in ~\eqref{eq:midprice_vol} we expressed price volatility  in terms of the parameters describing the order flow:
\begin{equation}
  \label{eq:pricevol_est}
  \hat \sigma_S := {\theta} \sqrt{\sigma_b^2 + \sigma_a^2 - 2 \sigma_b \sigma_a \varrho_{a,b}}.
\end{equation}
where $\theta$ is the impact coefficient.
We call this the RV estimator.

Another estimator is obtained by  first estimating $\sigma_b$ and $\sigma_a$
  using the martingale estimation function~\eqref{eq:22} then 
 computing the price volatility using  Equation~\eqref{eq:pricevol_est}. We label this the RCG estimator.
 
Finally, one can compute the realized variance of the price 
over a 30 minute time window using price changes over 10 ms intervals.
 Comparing these different estimators is a qualitative test of the model.

Figure~\ref{fig:pred_vola_etfs} compares these estimators, computed over 30 minute time windows: we observe that the model-based estimators are of the same order and closely track the intraday realized price volatility, which shows that the model captures correctly the qualitative relation between order flow and volatility.

\begin{figure}[p]
  \centering
  \vspace*{-.1cm}
  \includegraphics[width=0.8\textwidth, height=.52\textheight]{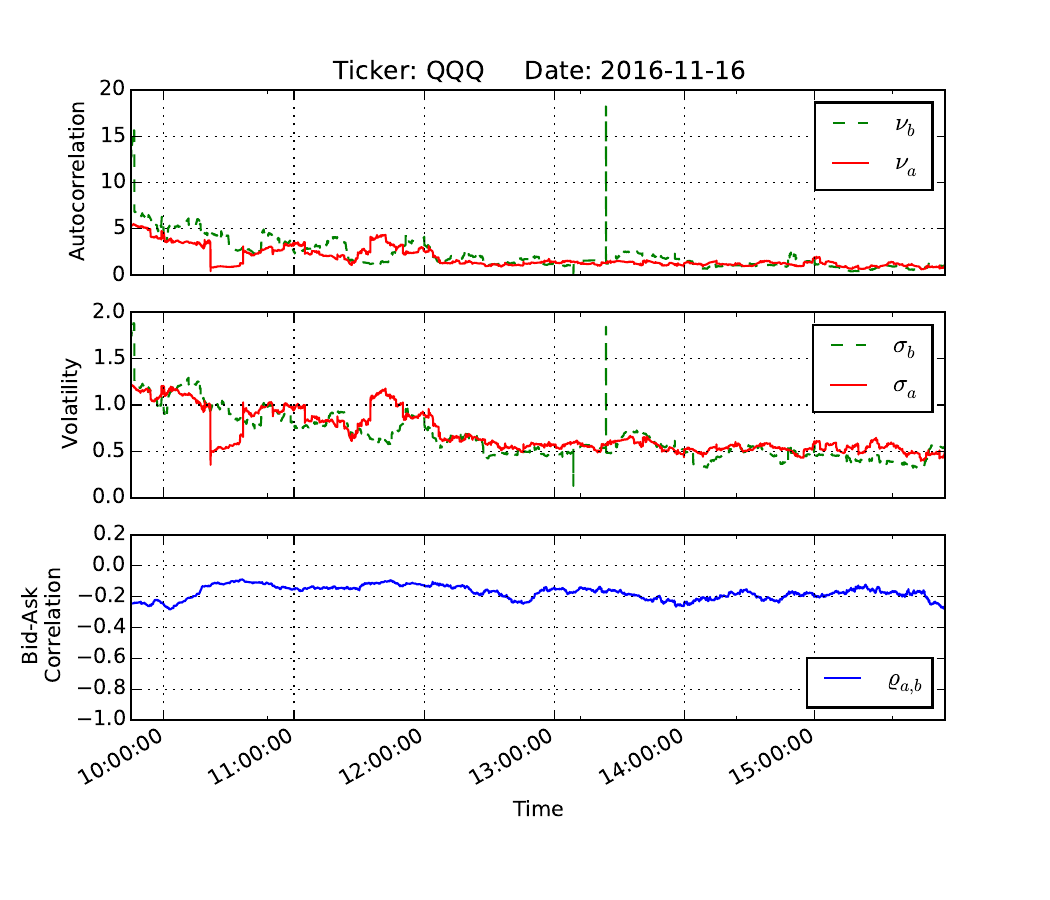}\\[-1.4cm]
  \includegraphics[width=\textwidth, height=.52
\textheight]{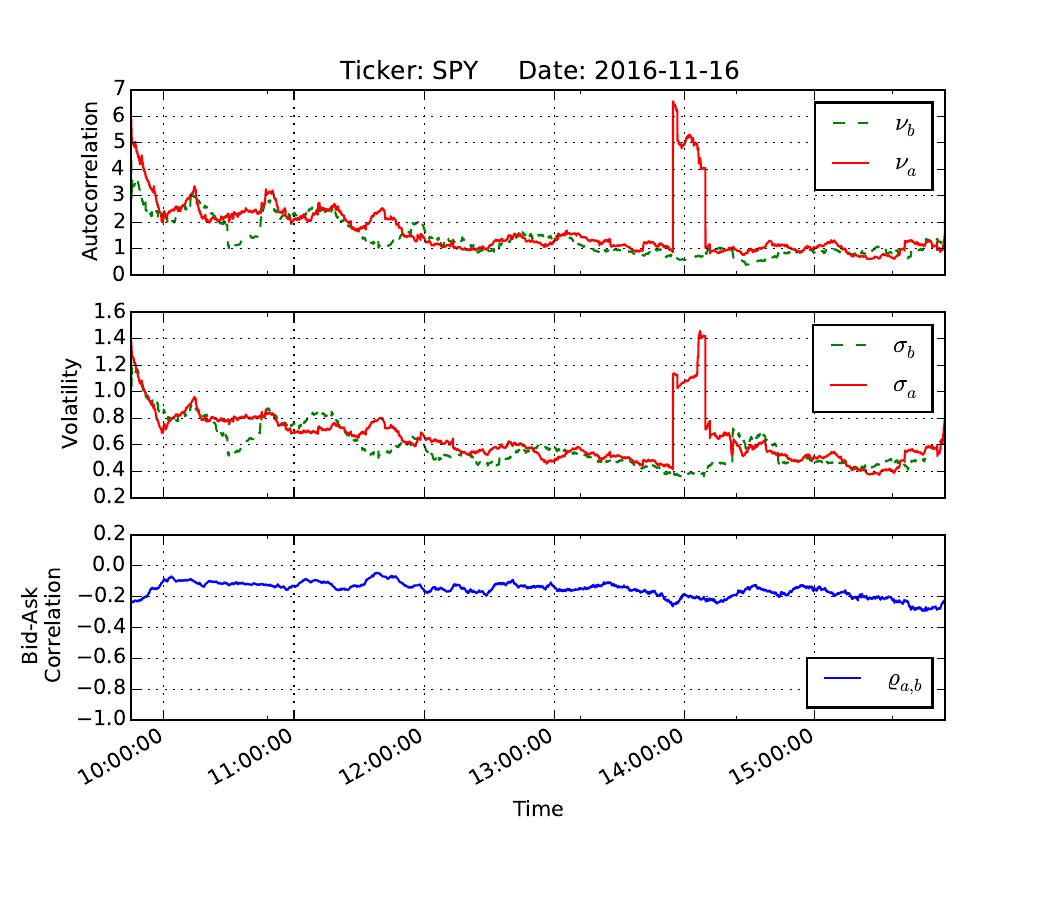}\\[-1.15cm]  
  \caption{Autocorrelation ($\nu_{a\slash b}$),
    standard deviation ($\sigma_{a\slash b}$) and bid-ask correlation
    ($\varrho_{a,b}$) of  order book depth in first 2 levels for two liquid ETFs (QQQ and SPY).}
  \label{fig:cal_dyn}
\end{figure}

\begin{figure}[p]
  \centering
  \vspace*{-.1cm}
  \includegraphics[width=\textwidth, height=.52\textheight]{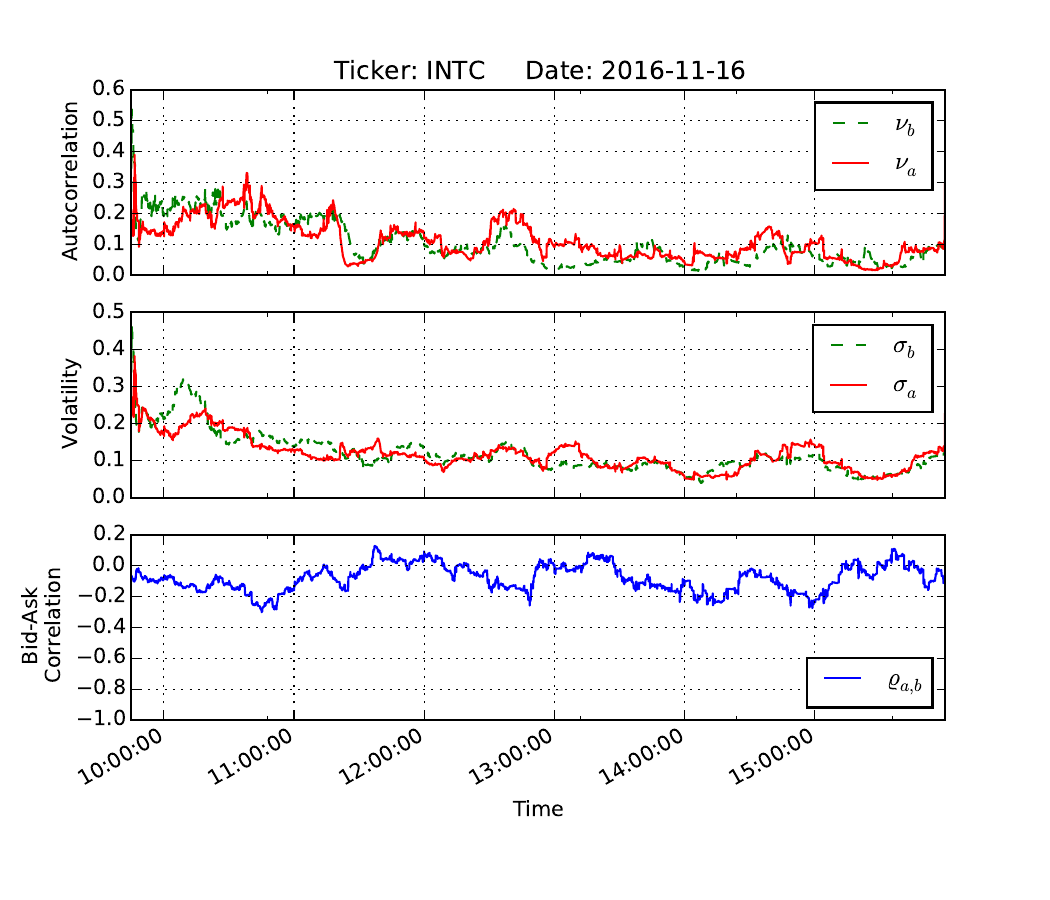}\\[-1.4cm]
  \includegraphics[width=\textwidth, height=.52\textheight]{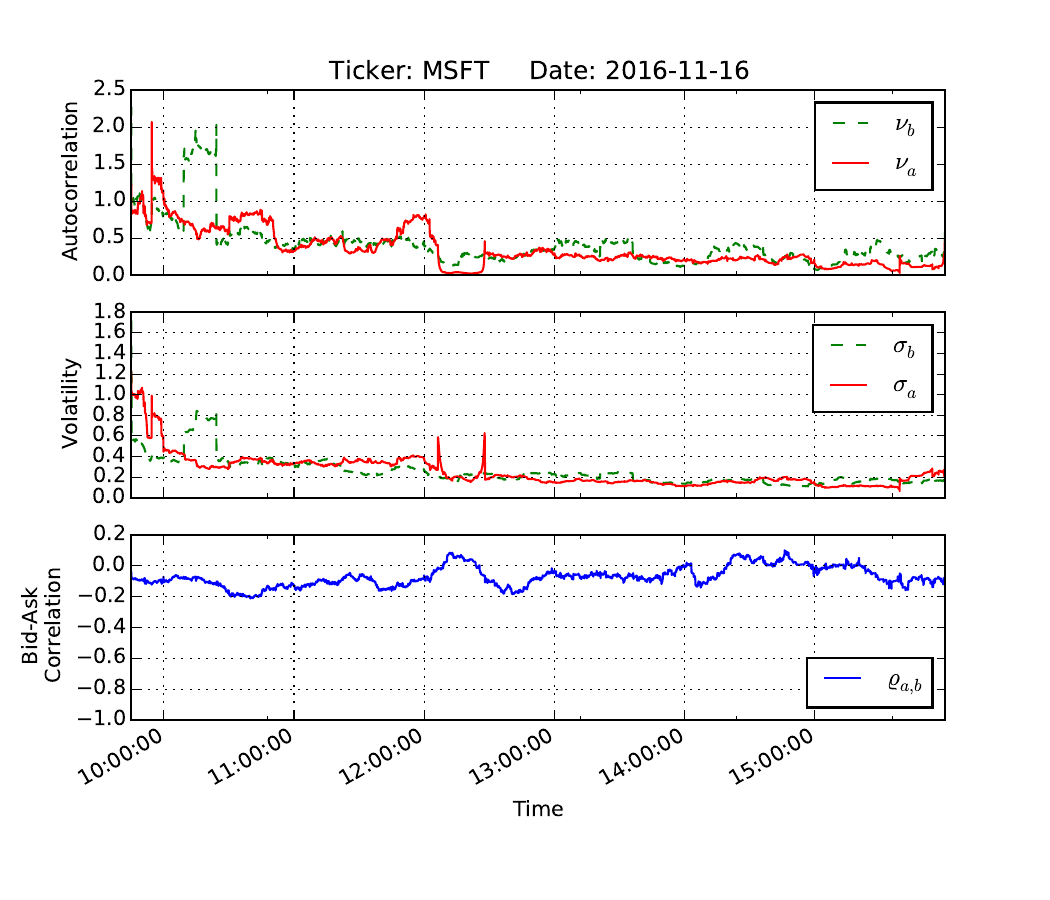}\\[-1.15cm]
  \caption{Autorcorrelation ($\nu_{a\slash b}$),
  standard deviation ($\sigma_{a\slash b}$) and bid-ask correlation
    ($\varrho_{a,b}$) of  order book depth in first 2 levels for two liquid stocks (INTC and MSFT).}
  \label{fig:cal_dyn2}
\end{figure}

\begin{figure}[p]
  \centering
  \vspace*{-2.cm}
  \includegraphics[width=\textwidth, height=1.2\textheight]{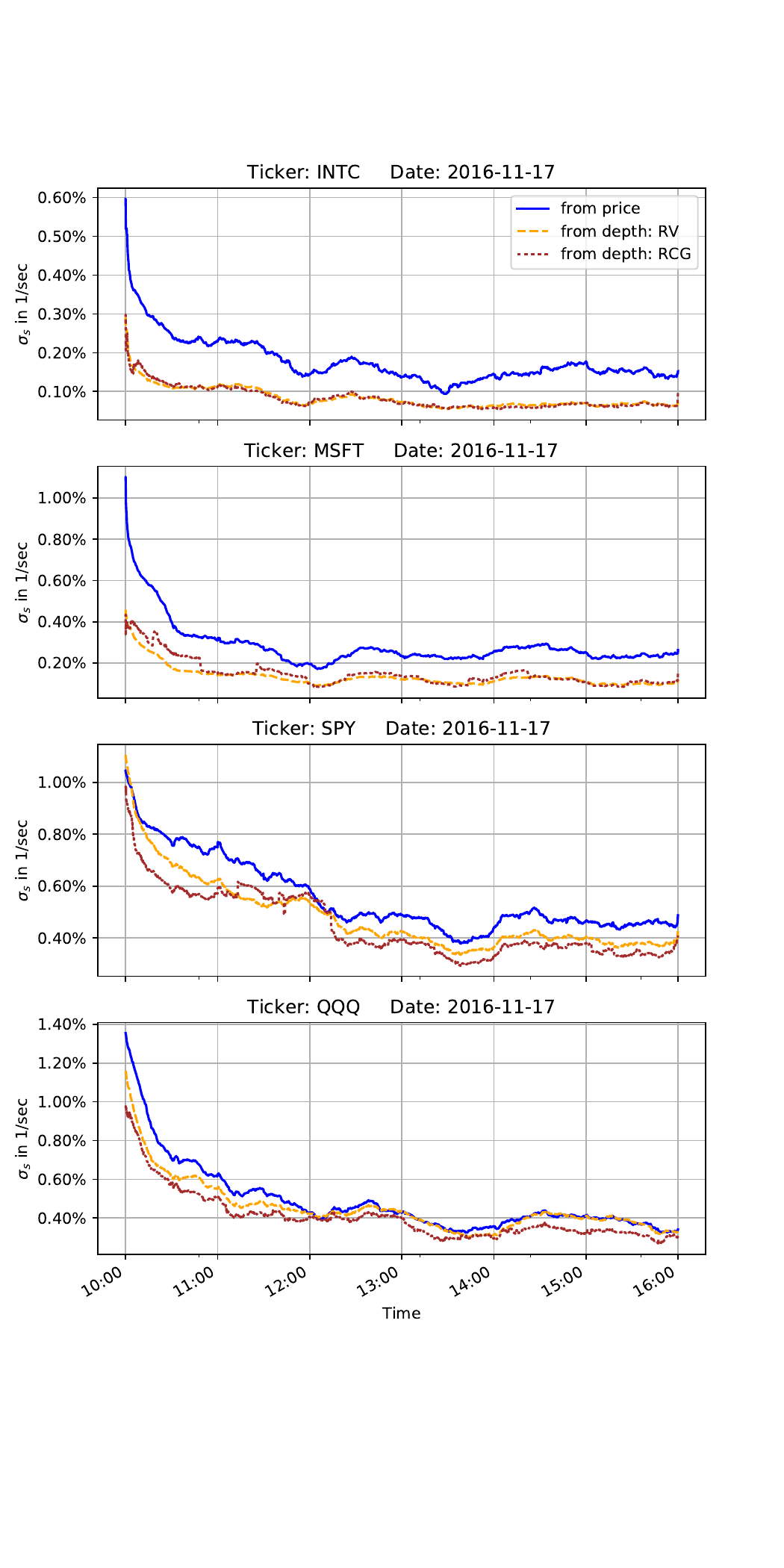}\\[-4.25cm]
  \caption{Comparison of various estimators for intraday price volatility $\sigma_S$:  
  standard deviation of price changes (blue), estimator based on realized variance/covariance of bid/ask depth (red), and estimator based on martingale estimation function (orange).}
  \label{fig:pred_vola_etfs}
\end{figure}

\clearpage

\newpage
\appendix
\section{Dynamics in absolute price coordinates}
\label{A:trafo}
We now discuss in more detail the generalized It\^{o}-Wentzell formula for distribution-valued processes, which is used in Section~\ref{ssec:2factor_noncentered} to derive the dynamics of the (non-centered) order book density
$v_t(p)$.
Let $C_0^\infty := C_0^\infty(\R)$ be
the space of smooth compactly supported functions on $\R$,  $\sD$ its dual, the space of generalized functions. We denote by $\ddx$ and $\ddxx$ the first two 
derivatives in the sense of distributions and by $\scalp{.}{.}{}$ the duality product on $\sD\times C_0^\infty$.

  A $\sD$-valued stochastic process $u=(u_t)_{t\geq 0}$ on  a filtered probability space $(\Omega, \mathcal{F},\mathbb{F},\mathbb{P})$ is called $\mathbb{F}$-\emph{predictable} if for all $\phi \in C^\infty_0(\R)$ the real valued process $\left( \ip{u_t}{\phi}{}\right)_{t\geq 0}$ is predictable. 

Let $N\in \N$ and $(b_t)_{t\geq 0}$ and $(c_t^k)_{t\geq 0}$, $k\in \{1,\ldots,N\}$ be predictable $\sD$ valued processes. We assume that for all $T$, $R\in (0,\infty)$ and all $\phi \in C^\infty_0(\R)$, almost surely
\begin{equation}
  \label{eq:int}
  \int_0^T \sup_{\abs{x}\leq R} \abs{\ip{b_t}{\phi(.-x)}{}} + \sum_{k=1}^N \abs{\ip{c_t^k}{\phi(.-x)}{}}^2 \d t < \infty.
\end{equation}

Let $(W_t^k, k=1,\ldots, N)_{t\geq 0}$ be  independent scalar Brownian motions. We  consider an equation of the form
\begin{equation}
  \label{eq:SPDEdistr}
  \d u_t = b_t \d t + \sum_{k=1}^N c_t^k \d W_t^k,
\end{equation}

\begin{df}\label{df:sol_distr}
  A $\sD$-valued stochastic process $(u_t)_{t\geq 0}$  is called a solution of~\eqref{eq:SPDEdistr} \emph{in the sense of distributions} with initial condition $u_0$ if for  $t\in (0,\infty)$ and  $\phi \in C^\infty_0$   \begin{equation}
    \label{eq:35}
    \ip{u_t}{\phi}{} - \ip{u_0}{\phi}{} = \int_0^t \ip{b_s}{\phi}{} \d t + \sum_{k=1}^N\int_0^t\ip{c_s^k}{\phi}{}{\d W_s^k}.
  \end{equation}
 holds  almost surely.
\end{df}

The following change of variable formula is a special case of a result by Krylov \cite[Theorem 1.1]{krylovItoWentzell}:
\begin{thm}[Generalized It\^{o}-Wentzell formula]\label{thm:itowentzell}
  Let $(u_t)_{t\geq 0}$ be a solution of ~\eqref{eq:SPDEdistr} in the sense of distributions
  and let $(x_t)_{t\geq 0}$ be a locally
  integrable  process with representation
  \begin{equation*}
    \d x_t = \mu_t \d t +\sum_{k=1}^N \sigma^k_t \d W_t^k,\qquad t\geq 0.
  \end{equation*}
  where $(\mu_t)_{t\geq 0}$ and $(\sigma^k_t, k=1..N)_{t\geq 0}$  are real-valued predictable  processes.
  Define the $\sD$-valued process $(v_t)_{t\geq 0}$ by $v_t(x) := u_t(x+x_t)$, for $x\in \R$, $t\in [0,\infty)$. Then $(v_t)_{t\geq 0}$ is a solution of  
  \begin{multline*}
    \d v_t = \left[ b_t(.+x_t) + \tfrac12 \left(\sum_{k=1}^N \abs{\sigma_t^k}^2\right) \ddxx v_t + \mu_t \ddx v_t + \sum_{k=1}^N \left(\sigma_t^k \ddx c_t^k(.+ x_t)\right)\right] \d t\\
    +\sum_{k=1}^N\left[ c_t^k(.+x_t) + \sigma_t^k \ddx v_t \right] \d W_t^k
  \end{multline*}
  in the sense of distributions.
\end{thm}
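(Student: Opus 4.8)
The plan is to recognize that, for any test function $\phi \in C_0^\infty$, the defining pairing of $v_t$ is simply a shift of $u_t$: since $v_t(x) = u_t(x + x_t)$ amounts to $\ip{v_t}{\phi}{} = \ip{u_t}{\phi(\cdot - x_t)}{}$, the scalar process $t \mapsto \ip{v_t}{\phi}{}$ is the composition of the distribution-valued semimartingale $u$ (driven by $(W^k)$) with the scalar semimartingale shift $x_t$ (driven by the \emph{same} $(W^k)$). This is precisely the situation covered by Krylov's generalized Ito--Wentzell formula \cite[Theorem 1.1]{krylovItoWentzell}. The bulk of the work is therefore to check that our standing hypotheses match Krylov's: predictability of $b$, $c^k$ and of $x$, together with the uniform-in-shift integrability \eqref{eq:int}, which is exactly the condition guaranteeing that every integral appearing in the conclusion is well defined. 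Once this identification is in place, the stated equation is read off by matching coefficients.

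To see transparently where each term originates (and to fix the correspondence with Krylov's coefficients), I would first carry out the computation for a mollification $u_t^\varepsilon := u_t * \rho_\varepsilon$, which is a genuine function-valued semimartingale solving the mollified equation $\d u_t^\varepsilon = b_t^\varepsilon \d t + \sum_k c_t^{k,\varepsilon}\d W_t^k$ with $b^\varepsilon := b * \rho_\varepsilon$ and $c^{k,\varepsilon} := c^k * \rho_\varepsilon$. Fixing $x \in \R$ and writing $U(t,y) := u_t^\varepsilon(x + y)$, the quantity $v_t^\varepsilon(x) = U(t, x_t)$ is the value of a smooth semimartingale field evaluated along the semimartingale $x_t$, so the classical Ito--Wentzell formula applies. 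Using $\d[x]_t = \sum_k |\sigma_t^k|^2 \d t$ and $\d[W^k, x]_t = \sigma_t^k \d t$ (the $W^k$ being independent), and noting $\partial_y U = \partial_x v_t^\varepsilon$, one obtains the drift $b_t^\varepsilon(\cdot + x_t)$, the transport term $\mu_t \partial_x v_t^\varepsilon$ from the finite-variation part of $x$, the second-order term $\tfrac12\bigl(\sum_k|\sigma_t^k|^2\bigr)\partial_{xx} v_t^\varepsilon$ from $\d[x]$, and the genuinely new Ito--Wentzell correction $\sum_k \sigma_t^k \partial_x c_t^{k,\varepsilon}(\cdot + x_t)$ from the cross-variation $\d[W^k, x]$; the martingale part is $\sum_k [c_t^{k,\varepsilon}(\cdot + x_t) + \sigma_t^k \partial_x v_t^\varepsilon]\d W_t^k$. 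This reproduces the claimed formula at the level of $u^\varepsilon$.

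It then remains to remove the mollification. Testing against a fixed $\phi \in C_0^\infty$ and using $\ip{w(\cdot + a)}{\phi}{} = \ip{w}{\phi(\cdot - a)}{}$ together with the fact that every distributional derivative may be transferred onto $\phi$, each term becomes a pairing of $b_t$ or $c_t^k$ with a shifted test function $\phi(\cdot - x_t)$ or $\phi'(\cdot - x_t)$. Since $x$ is a.s. bounded on each compact time interval, the integrability hypothesis \eqref{eq:int}, applied both to $\phi$ and to $\phi'$ (which share the compact-support property), provides the uniform domination needed to pass $\varepsilon \to 0$ in the drift (Bochner) and martingale (Ito) integrals. The limiting identity is exactly \eqref{eq:35} written for $v$ with the asserted coefficients, i.e. $v$ solves the displayed equation in the sense of Definition~\ref{df:sol_distr}.

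The main obstacle is the rigorous handling of the correction term $\sum_k \sigma_t^k \partial_x c_t^k(\cdot + x_t)$ at the distributional level: because $c_t^k$ is only a generalized function, $\partial_x c_t^k$ must be interpreted weakly and paired with the \emph{moving} test function $\phi(\cdot - x_t)$, whose randomness is correlated with that of $u$. Controlling this pairing uniformly over the range of $x_t$ is precisely the role of \eqref{eq:int}, and it is also the reason Krylov's theorem demands such uniform-in-shift bounds. Secondary technical points --- joint measurability and predictability of $(t,x) \mapsto \ip{c_t^k}{\phi(\cdot - x)}{}$, and the interchange of the mollification limit with the stochastic integral --- are routine given \eqref{eq:int} but must be verified in order to invoke \cite[Theorem 1.1]{krylovItoWentzell} as a black box.
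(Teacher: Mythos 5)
Your proposal is correct and takes essentially the same route as the paper, which gives no independent proof but states the theorem as a special case of Krylov's result \cite[Theorem 1.1]{krylovItoWentzell}; your reduction --- checking predictability of $b$, $c^k$, $x$ and the uniform-in-shift integrability \eqref{eq:int} against Krylov's hypotheses --- is exactly the argument the paper relies on. Your supplementary mollification sketch (classical Ito--Wentzell applied to $u_t \ast \rho_\varepsilon$, then dominated passage to the limit using \eqref{eq:int} with $\phi$ and $\phi'$ and a slightly enlarged radius $R$) is a sound outline of how such a formula is proved, but goes beyond what the paper itself records.
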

\begin{rmk}
  It is worth noting that the correlation of $(u_t)$ and $(x_t)$ contributes  the term
  \[ \sum_{k=1}^N \left(\sigma_t^k \ddx c_t^k(.+ x_t)\right).\]
\end{rmk}

We now apply the above It\^{o}-Wentzell formula in order to derive the dynamics of the order book density $v$, in non-centered coordinates, in the setting considered in Sections~\ref{sec:2factorLOB} and~\ref{sec:meanrev}.

Let $L\in (0,\infty]$ and $I:= (-L,0)\cup (0,L)$. For $h$, $f\in
H^2(I)\cap H^1_0(I)$. Then,~\eqref{eq:lSPDE} with initial
condition $u_0 = h$ admits a unique (analytically) strong solution
denoted by $(u_t)_{t\geq 0}$ . Let $\tilde u_t$ be the trivial extension of $u_t$ to $\R$, i.\,e.
\begin{equation}
  \label{eq:20}
  \tilde u_t(x) :=
  \begin{cases}
    u_t(x),&x\in I,\\
    0,& \text{ otherwise}.
  \end{cases}
\end{equation}
Note that $\tilde u \in H^2(\R\setminus\{-L, 0, L\})\cap
H^1(\R)$. Recall that $\Delta$ and $\nabla$ in the previous discussions denoted the weak derivatives on $\R\setminus\{-L,0,L\}$, and we get that
$\ddx \tilde u  =  \nabla \tilde u $
and
\begin{equation}
  \label{eq:33}
  \begin{aligned}
    \ddxx \tilde u - \Delta \tilde u
    &= \ddx \nabla \tilde u - \nabla \nabla \tilde u\\
    &= (\nabla \tilde u(-L+) - \nabla \tilde u(-L-)) \delta_{-L}\\
    &+(\nabla \tilde u(0+) - \nabla \tilde u(0-)) \delta_0 +(\nabla \tilde u(L+) - \nabla \tilde u(L-)) \delta_{L},
  \end{aligned}
\end{equation}
where $\delta_x$ denotes a point mass at $x\in \R$. Define
\begin{align}
  \label{eq:34}
  b_t(x) &:=
           \begin{cases}
             \eta_a \Delta u_t(x) + \beta_a \nabla u_t(x) + \alpha_a
             u_t(x) + f_a(x),&  x\in (0,L),\\
             \eta_b \Delta u_t(x) - \beta_b \nabla u_t(x) + \alpha_b
             u_t(x) - f_b(x),& x\in (-L,0),\\
             0,&\text{ otherwise},
           \end{cases}\\
  c_t^1(x) &:=
             \begin{cases}
               \sigma_a\varrho_{a,b} u_t(x),& x\in (0,L),\\
               \sigma_b u_t(x),& x\in (-L,0),\\
               0,& \text{ otherwise},
             \end{cases}\\
  c_t^2(x) &:=
             \begin{cases}
               \sigma_a\sqrt{1-\varrho_{a,b}^2} u_t(x),& x\in (0,L),\\
               0,& \text{ otherwise},
             \end{cases}    
\end{align}
so that
\begin{equation}
\label{eq:41}
\d \tilde u_t = b_t \d t + c_t^1 \d W_t^1 + c_t^2 \d W_t^2.
\end{equation}
The Cauchy-Schwartz inequality shows  that \eqref{eq:int} is satisfied. Assume now that the mid price $(S_t)_{t\geq 0}$ follows the dynamics 
\begin{equation}
\label{eq:A:midprice}
\d S_t = c_s\theta \mu_t \d t +   c_s\theta  (\sigma_b - \sigma_a \varrho_{a,b}) \d
W_t^1 -  c_s\theta  \sigma_a \sqrt{1-\varrho_{a,b}^2} \d W_t^2. 
\end{equation}
for some  integrable predictable process $\mu$. Define
\begin{equation}
\label{eq:32}
\sigma_s:=  c_s\theta  \sqrt{\sigma_b^2 + \sigma_a^2 - 2 \varrho_{a,b} \sigma_b\sigma_a}.
\end{equation}
Then, Theorem~\ref{thm:itowentzell} yields that for $ v_t(x) := \tilde u_t(x-S_t)$ we get
\begin{equation}
  \begin{aligned}
    \d v_t &= \left[b_t(.-S_t) + \tfrac12 \sigma_s^2 \ddxx v_t-c_s\theta \mu_t \ddx v_t \right.\\
    &\qquad\left. -\left( c_s\theta\left(\sigma_b - \varrho_{a,b} \sigma_a\right) \ddx c_t^1(.-S_t) + c_s\theta \sqrt{1-\varrho_{a,b}^2} \sigma_a \ddx c_t^2(.-S_t)\right) \right]\d t\\
    &\qquad + \left( c_t^1(.-S_t) - c_s\theta\left(\sigma_b - \varrho_{a,b} \sigma_a\right) \ddx v_t\right) \d W_t^1\\
    &\qquad + \left( c_t^2(.-S_t) + c_s\theta \sqrt{1-\varrho_{a,b}^2} \sigma_a \ddx v_t\right)\d W_t^2
  \end{aligned}
\end{equation}
i.\,e. $v$ is a solution of the stochastic moving boundary problem, 
\begin{equation}
\begin{alignedat}{2}
  \d v_t &= \left[\left(\eta_a +\tfrac12 \sigma_s^2\right) \Delta v_t + \left(\beta_a-c_s\theta\mu_t - c_s\theta\left( \varrho_{a,b}\sigma_b\sigma_a - \sigma_a^2\right)\right) \nabla v_t  \right. &&\\
  &\qquad\hspace{12em} \left. \vphantom{\left(\eta_a +\tfrac12 \sigma_s^2\right) \Delta v_t + \left(\beta_a-c_s\theta\mu_t - c_s\theta\left( \varrho_{a,b}\sigma_b\sigma_a - \sigma_a^2\right)\right) \nabla v_t }
    + \alpha_a  v_t + f_a(.-S_t) \right] \d t &&\\
  &\quad + \left(\sigma_a \varrho_{a,b}v_t  - c_s\theta\left(\sigma_b - \varrho_{a,b} \sigma_a\right) \nabla v_t \right) \d W_t^1 &&\\
  &\quad + \sigma_a \sqrt{1-\varrho_{a,b}^2}\left(  v_t + c_s\theta \nabla v_t\right)\d W_t^2, \ \text{ on }\ (S_t, S_t + L), & & \\
  \d v_t &= \left[\left(\eta_b +\tfrac12 \sigma_s^2\right) \Delta v_t  - \left(\beta_b + c_s\theta\mu_t + c_s\theta\left(\sigma_b^2 - \varrho_{a,b} \sigma_b\sigma_a\right) \right) \nabla v_t \right.&&\\
  &\qquad\hspace{12em}
  \left.\vphantom{\left(\eta_b +\tfrac12 \sigma_s^2\right) \Delta v_t
      \left(\beta_b + c_s\theta \mu_t + c_s\theta\left(\sigma_b^2 - \varrho_{a,b} \sigma_b\sigma_a\right) \right) \nabla v_t }
    + \alpha_b  v_t - f_b(.-S_t) \right] \d t &&\\
  &\quad + \left(\sigma_b v_t  - c_s\theta\left(\sigma_b - \varrho_{a,b} \sigma_a\right) \nabla v_t \right) \d W_t^1 &&\\
  &\quad + c_s\theta \sqrt{1-\varrho_{a,b}^2} \sigma_a \nabla v_t\d W_t^2  \ \text{ on }\ (S_t-L, S_t),&&  \\
  v_t  &= 0 \qquad  \text{ otherwise;} && \\
  \d S_t &= c_s\theta\mu_t \d t + c_s\theta (\sigma_b  - \varrho_{a,b} \sigma_a) \d W_t^1 - c_s\theta \sqrt{1-\varrho_{a,b}^2} \sigma_a \d W_t^2,&&
\end{alignedat} \label{eq:smbp}
\end{equation}
To define what we mean by solution in this context we introduce the
mappings 
\begin{align*}
  \label{eq:43}
  \hspace{10em}&\hspace{-10em}
                 \mathbb{L} \colon \bigcup_{x\in \R} \left[\left(H^{2}(\R\setminus\{x-L,x,x+L\}) \cap H^1_0(\R\setminus\{x-L,x,x+L\})\right)  \times \{x\}\right] \to \sD,\\ \quad
  (v,s)&\mapsto (\nabla (v(s-L+) - \nabla v(s-L-)))\delta_{s-L} \\
               &\qquad+(\nabla (v(s+) - \nabla v(s-)))\delta_{s} \\
               &\qquad+(\nabla (v(s+L+) - \nabla v(s+L-)))\delta_{s+L}.
\end{align*}
 Define now the functions $\bar \mu \colon \R^5
\to \R$, $\bar \sigma_1$, $\bar \sigma_2 \colon \R^4 \to \R$  as
\begin{align*}
  \bar \mu(x,y'', y', y, s) &:=
                              \begin{cases}
                                (\eta_a + \tfrac12 \sigma_s^2) y'' + (\beta_a - c_s\theta (\varrho_{a,b} \sigma_b \sigma_a - \sigma_a^2)) y' &\\
                                \hfill + \alpha_a y + f_a(x),& x\in (0,L)\\
                                (\eta_b + \tfrac12 \sigma_s^2) y'' - (\beta_a + c_s\theta (\sigma_b^2 - \varrho_{a,b} \sigma_b \sigma_a )) y'&\\
                                \hfill+ \alpha_b y - f_b(x),& x\in (-L,0),\\
                                0,&\text{ otherwise},
                              \end{cases}
  \\
  \bar \sigma_1(x,y',y,s) &:=
                            \begin{cases}
                              \sigma_a \varrho_{a,b} y  ,& x\in (0,L),\\
                              \sigma_b y , &x\in (-L,0),\\
                              0,&\text{ otherwise},
                            \end{cases}\\
  \bar \sigma_2(x,y',y,s) &:=
                            \begin{cases}
                              \sigma_a \sqrt{1-\varrho_{a,b}^2}y, & x\in (0,L)\\
                              0, \quad\text{ otherwise} &
                            \end{cases}
\end{align*}
for $x$, $y''$, $y'$, $y$, $s\in \R$.

Following \cite[Definition 1.11]{mueller2016stochastic}, a solution
of~\eqref{eq:smbp} is an $L^2(\R)\times \R$-continuous stochastic process $(v_t,S_t)$, taking values in
\[ \bigcup_{x\in \R} \left[\left(H^{2}(\R\setminus\{x-L,x,x+L\}) \cap H^1_0(\R\setminus\{x-L,x,x+L\})\right)  \times \{x\}\right],\]
such that $(S_t)$ is given by~\eqref{eq:A:midprice} and, in the sense of distributions,
\begin{multline}
  \d v_t = \left(\bar \mu(.-S_t, \Delta v_t, \nabla v_t, v_t, S_t)  \right) \d t - \nabla v_t \d S_t + \tfrac12 \mathbb{L}(v_t,S_t)\d \langle S\rangle_t \\
  + \bar \sigma_1 (.-S_t, \nabla v_t, v_t, S_t) \d W_t^1 + \bar \sigma_2(.-S_t, \nabla v_t, v_t ,S_t)\d W_t^2.
\end{multline}

\bibliographystyle{apalike}
\bibliography{literature}
\vspace{-.1cm}
\enlargethispage{5cm}
\end{document}